\newcommand{\mathsc}[1]{{\normalfont\textsc{#1}}}
\renewcommand{\ket}[1]{\ensuremath{|{\kern.1em}#1{\kern.1em}\rangle}}
\newcommand{\bigket}[1]{\ensuremath{\big|{\kern.1em}#1{\kern.1em}\big\rangle}}
\newcommand{\ketstrut}{\vrule height 8.5pt depth 4.5pt width 0pt}
\newcommand{\Bigket}[1]{\ensuremath{\left|\ketstrut{\kern.1em}\right.{\kern-.2em}#1{\kern-.2em}\left.\ketstrut{\kern0em}\right>}}
\definecolor{nordred}{HTML}{bf616a}
\definecolor{bordeaux}{HTML}{821529}
\definecolor{bluelink}{HTML}{003399}
\definecolor{nordred}{HTML}{bf616a}
\definecolor{nordblue}{HTML}{81a1c1}
\definecolor{norddarkblue}{HTML}{5e81ac}
\definecolor{nordgreen}{HTML}{a3be8c}
\definecolor{nordnight}{HTML}{4c566a}
\newcommand{\nicelinktarget}[1]{\Hy@raisedlink{\hypertarget{#1}{}}}
\newcommand\linkdef[2]{\nicelinktarget{#1}{\color{black}{#2}}}
\newcommand\defining[1]{\nicelinktarget{#1}{}}
\newcommand\scalemath[3]{\scalebox{#1}[#2]{\mbox{\ensuremath{\displaystyle #3}}}}
\newcommand{\leftarrowtailnotip}{\ensuremath{\tikz\draw[line width=0.5pt,-<] (0,0) -- (10pt,0);}}
\newcommand{\unicodeStar}{\ensuremath{\star}}
\newcommand{\unicodecolon}{\ensuremath{\colon}}
\newcommand{\unicodeleftpar}{\ensuremath{\left(}}
\newcommand{\unicoderightpar}{\ensuremath{\right)}}
\newcommand{\unicoderightcircle}{\ensuremath{\RIGHTcircle}}
\newcommand{\unicodeleftcircle}{\ensuremath{\LEFTcircle}}
\newcommand{\unicodebbA}{\ensuremath{\mathbb{A}}}
\newcommand{\unicodebbB}{\ensuremath{\mathbb{B}}}
\newcommand{\unicodebbC}{\ensuremath{\mathbb{C}}}
\newcommand\UnicodeBlackboardP{\ensuremath{\mathbf{P}}} \DeclareUnicodeCharacter{2119}{\UnicodeBlackboardP}
\newcommand{\unicodecalS}{\ensuremath{\mathcal{S}}}
\newcommand{\unicodecalT}{\ensuremath{\mathcal{T}}}
\newcommand{\unicodecalC}{\ensuremath{\mathcal{C}}}
\newcommand{\unicodecalD}{\ensuremath{\mathcal{D}}}
\newcommand{\unicodecalX}{\ensuremath{\mathcal{X}}}
\newcommand{\unicodecalN}{\ensuremath{\mathcal{N}}}
\newcommand{\unicodecalE}{\ensuremath{\mathcal{E}}}
\newcommand{\hirayo}{\scaleobj{0.9}{\text{\usefont{U}{min}{m}{n}\symbol{'210}}}}
\DeclareFontFamily{U}{min}{}
\DeclareFontShape{U}{min}{m}{n}{<-> udmj30}{}
\newcommand\UnicodeWhiteRightPointingSmallTriangle{\triangleright}
\newcommand\UnicodeWhiteDownPointingSmallTriangle{\triangledown}
\newcommand\UnicodeWhiteUpPointingSmallTriangle{\scalemath{1}{-1}{{}^{\triangledown}}}
\newcommand\smallmath[2]{#1{\raisebox{\dimexpr \fontdimen 22 \textfont 2
      - \fontdimen 22 \scriptscriptfont 2 \relax}{$\scriptscriptstyle #2$}}}
\newcommand\smalloplus{\smallmath\mathbin\oplus}
\newcommand{\unicodeRelationalComposition}{\fatsemi}
\newcommand*{\llbrace}{%
  \BeginAccSupp{method=hex,unicode,ActualText=2983}%
    \textnormal{\usefont{OMS}{lmr}{m}{n}\char102}%
    \mathchoice{\mkern-4.05mu}{\mkern-4.05mu}{\mkern-4.3mu}{\mkern-4.8mu}%
    \textnormal{\usefont{OMS}{lmr}{m}{n}\char106}%
  \EndAccSupp{}%
}
\newcommand*{\rrbrace}{%
  \BeginAccSupp{method=hex,unicode,ActualText=2984}%
    \textnormal{\usefont{OMS}{lmr}{m}{n}\char106}%
    \mathchoice{\mkern-4.05mu}{\mkern-4.05mu}{\mkern-4.3mu}{\mkern-4.8mu}%
    \textnormal{\usefont{OMS}{lmr}{m}{n}\char103}%
  \EndAccSupp{}%
}
\newcommand{\signature}{\kl{signature}}
\newcommand{\Signatures}{\kl{Signatures}}
\newcommand{\obsArrow}{\kl[observe-arrow notation copy-discard-compare category of terms]{\ensuremath{\mathsf{obsArrow}}}}
\newcommand{\preArrow}{\kl[category of arrow notation terms]{\ensuremath{\mathsf{preArrow}}}}
\renewcommand\Set{\hyperlink{linkSet}{\mathbf{Set}}}
\newcommand{\len}[1]{\mathsf{#1}}
\newcommand{\lenx}{n}
\newcommand{\leny}{m}
\newcommand{\lenz}{p}
\newcommand{\lenk}{k}
\newcommand{\lenu}{q}
\newcommand{\fsetx}{[n]}
\newcommand{\fsety}{[m]}
\newcommand{\fsetz}{[p]}
\newcommand{\fsetk}{[k]}
\newcommand{\fsetu}{[q]}
\newcommand\premonoidalCategory{\hyperlink{linkpremonoidal}{premonoidal category}}
\newcommand\id[1][]{{{\mathrm{id}}_{#1}}}
\newcommand\Subd{\mathbf{D}_{\leq 1}}
\newcommand\Dst{\mathbf{D}}
\newcommand\return{\mathsf{return}}
\newcommand{\observe}{\mathsc{observe}}
\newcommand{\uniform}{\mathsc{uniform}}
\renewcommand{\return}{\mathsc{return}}
\newcommand{\host}{\mathrm{host}}
\newcommand{\caseOf}[1]{\mathsc{case}\ {#1}\ \mathsc{of}}
\newcommand{\car}{\mathit{car}}
\newcommand{\player}{\mathit{player}}
\newcommand{\pardon}{\mathit{pardon}}
\newcommand{\reply}{\mathit{reply\text{-}to\text{-}A}}
\newcommand{\ports}{\mathit{ports}}
\newcommand{\choice}{\mathit{choice}}
\newcommand{\correctness}{\mathit{correctness}}
\newcommand{\smasheq}[1]{\ensuremath{\smash{#1}}}
\newcommand{\prediction}{\mathit{prediction}}
\newcommand{\outcome}{\mathit{outcome}}
\newcommand{\guide}{\mathit{guide}}
\newcommand{\coin}{\mathit{coin}}
\newcommand{\subst}[3]{#1[#2 \!\setminus\! #3]}
\newcommand{\cont}{\mathrm{cont}}
\newcommand{\copyDiscardCategory}{\hyperlink{linkCopyDiscardCategory}{copy-discard category}}
\newcommand{\copyDiscardCategories}{\hyperlink{linkCopyDiscardCategory}{copy-discard categories}}
\newcommand{\strictCopyDiscardCompareCategory}{\hyperlink{linkStrictCopyDiscardCompareCategory}{strict copy-discard-compare category}}
\newcommand{\strictCopyDiscardCompareCategories}{\hyperlink{linkStrictCopyDiscardCompareCategory}{strict copy-discard-compare categories}}
\newcommand{\StrictCopyDiscardCompareCategories}{\hyperlink{linkStrictCopyDiscardCompareCategory}{Strict copy-discard-compare categories}}
\newcommand{\MontyHall}{\hyperlink{linkMontyHall}{Monty Hall}}
\newcommand{\MontyHallProblem}{\hyperlink{linkMontyHall}{Monty Hall problem}}
\newcommand{\MontyFallProblem}{\hyperlink{linkMontyFall}{Monty Fall problem}}
\newcommand{\MontyFall}{\hyperlink{linkMontyFall}{Monty Fall}}
\newcommand{\Fall}{\hyperlink{linkMontyFall}{Fall}}
\newcommand{\ThreePrisonersProblem}{\hyperlink{linkThreePrisonersProblem}{Three Prisoners problem}}
\newcommand{\SailorChildProblem}{\hyperlink{linkSailorChildProblem}{Sailor's child problem}}
\newcommand{\SailorsChild}{\hyperlink{linkSailorChildProblem}{Sailor's child}}
\newcommand{\NewcombsProblem}{\hyperlink{linkNewcombParadox}{Newcomb's paradox}}
\newcommand{\NewcombsParadox}{\hyperlink{linkNewcombParadox}{Newcomb's paradox}}
\newcommand{\imperfectNewcombsParadox}{\hyperlink{linkImperfectNewcombParadox}{imperfect Newcomb's paradox}}
\newcommand{\Sig}{\hyperlink{linkSignatureCategory}{\ensuremath{\mathbf{Sig}}}}
\newcommand{\signatureInterpretation}{\hyperlink{linkSignatureInterpretation}{signature interpretation}}
\newcommand{\signatureHomomorphism}{\hyperlink{linkSignatureHomomorphism}{morphism of signatures}}
\newcommand{\signatureHomomorphisms}{\hyperlink{linkSignatureHomomorphism}{morphisms of signatures}}
\newcommand{\copyDiscardCompareFunctor}{\hyperlink{linkCopyDiscardCompareFunctor}{copy-discard-compare functor}}
\newcommand{\copyDiscardCompareFunctors}{\hyperlink{linkCopyDiscardCompareFunctor}{copy-discard-compare functors}}
\newcommand{\CopyDiscardCompareFunctors}{\hyperlink{linkCopyDiscardCompareFunctor}{Copy-discard-compare functors}}
\newcommand{\copyDiscardCompareCategory}{\hyperlink{linkCopyDiscardCompareCategory}{copy-discard-compare category}}
\newcommand{\copyDiscardCompareCategories}{\hyperlink{linkCopyDiscardCompareCategory}{copy-discard-compare categories}}
\newcommand{\subdistribution}{\hyperlink{linkSubdistribution}{subdistribution}}
\newcommand{\subdistributions}{\hyperlink{linkSubdistribution}{subdistributions}}
\newcommand{\cdcCat}{\hyperlink{linkCdcCategory}{\ensuremath{\mathbf{Cdc}}}}
\newcommand{\observeDoNotationTerms}{\hyperlink{linkObserveDoNotationTerm}{arrow notation terms}}
\newcommand{\arrowNotation}{\hyperlink{linkArrowNotation}{arrow notation}}
\newcommand{\ArrowNotation}{\hyperlink{linkArrowNotation}{Arrow notation}}
\newcommand{\ArrowNotationTerms}{\hyperlink{linkArrowNotationTerm}{Arrow notation terms}}
\newcommand{\arrowNotationTerms}{\hyperlink{linkArrowNotationTerm}{arrow notation terms}}
\newcommand{\arrowNotationTerm}{\hyperlink{linkArrowNotationTerm}{arrow notation term}}
\newcommand{\arrowF}{\hyperlink{linkArrowF}{\ensuremath{\mathsf{arrow}}}}
\newcommand{\forgetF}{\hyperlink{linkArrowF}{\ensuremath{\mathsf{forget}}}}
\newcommand{\tensor}{\otimes}
\newcommand{\given}{\mid}
\newcommand{\normalisation}{\hyperlink{linkNormalisation}{normalisation}}
\newcommand{\Normalisation}{\hyperlink{linkNormalisation}{Normalisation}}
\newcommand{\normalisations}{\hyperlink{linkNormalisation}{normalisations}}
\newcommand{\norm}[1]{\operatorname{n}(#1)}
\newcommand{\axiomsOfArrowNotation}{\hyperlink{linkAxiomsOfArrowNotation}{axioms of arrow notation}}
\newcommand{\interchangeAxiom}{\hyperlink{linkAxiomsOfArrowNotation}{interchange axiom}}
\newcommand{\symmetryAxiom}{\hyperlink{linkAxiomsOfArrowNotation}{symmetry axiom}}
\newcommand{\FrobeniusAxiom}{\hyperlink{linkAxiomsOfArrowNotation}{Frobenius axiom}}
\newcommand{\idempotencyAxiom}{\hyperlink{linkAxiomsOfArrowNotation}{idempotency axiom}}
\newcommand{\vret}{\mathrm{ret}}
\newcommand{\vobs}{\mathrm{obs}}
\newcommand{\vsym}[2]{\kl[symmetry]{σ}_{#1,#2}}
\newcommand{\vinc}[2]{\iota_{#2}}
\newcommand{\vrnc}[2]{\nu_{#2}}
\newcommand{\vlinc}[1]{\iota_{#1}}
\newcommand{\vrinc}[1]{\nu_{#1}}
\newcommand{\vv}{\hyperlink{linkSignatureFixed}{v}}
\newcommand{\vcomp}[2]{{#1} \circ {#2}}
\newcommand{\prearrow}{\mathsf{prearrow}} %
\theoremstyle{plain}
\theoremstyle{definition}
\newtheorem{definition}[theorem]{Definition}
\theoremstyle{remark}
\newtheorem{remark}[theorem]{Remark}
\newtheorem{example}[theorem]{Example} %
\def\@copyrightspace{\relax}
\begin{document}
\begin{frontmatter}
\title{A Simple Formal Language \\ for Probabilistic Decision Problems}
\author{Elena Di Lavore\thanksref{a}\thanksref{c}}
\author{Bart Jacobs\thanksref{b}}
\author{Mario Román\thanksref{c}}
\address[a]{Università di Pisa, Italy}
\address[b]{Radboud University Nijmegen, Netherlands}
\address[c]{University of Oxford, United Kingdom}
\maketitle 
\begin{abstract}
Probabilistic puzzles can be confusing, partly because they are formulated
in natural languages --- full of unclarities and ambiguities --- and partly because
there is no widely accepted and intuitive formal language to express them.  We
propose a simple formal language with arrow notation $(\gets)$ for sampling from
a distribution and with observe statements for conditioning (updating, belief
revision). We demonstrate the usefulness of this simple language by solving
several famous puzzles from probabilistic decision theory. The operational
semantics of our language is expressed via the (finite, discrete)
subdistribution monad. Our broader message is that proper formalisation dispels
confusion.
\begin{keyword}
  Category theory, categorical semantics.
\end{keyword}
\end{abstract}
\end{frontmatter}
\section{Introduction}

Probabilistic decision problems are famously controversial: the
solutions to the \MontyHallProblem{}
\cite{selvin75:montyhall,vosSavant}, \NewcombsProblem{}
\cite{nozick1969newcomb}, or the \SailorChildProblem{}
\cite{elga2020self} have been much debated both in philosophy and
mathematics
\cite{gibbard1978counterfactuals,piccione97:imperfectrecall,neal06puzzles,yudkowsky2017functional}.
Care must be taken in interpreting all details of a problem: small changes in
interpretation may completely transform the mathematical
content of the problem and its solution
\cite{rosenthal08:montyFall,levinstein2020cheating}.  
This issue is compounded by the inherent ambiguity in natural language: 
from the
narrative of a decision problem, one can infer different implicit
assumptions to solve~it.
A formal language and a solving procedure rendering these assumptions explicit
can help settle controversies. It seems fair to say that there is no common
language and shared procedure for solving probabilistic decision problems.

We propose a simple syntax and semantics for probabilistic decision problems.
The syntax is close to the operational description of decision problems; it is
an idealised version of the syntax that the functional programming language
Haskell uses for its \emph{arrow} data
structure~\cite{heunen2006arrows,hughes2000generalising,jeffrey1997premonoidal1},
using arrows for sampling $(←)$ and $\observe$ statements for constraints. The
semantics is based on \subdistributions{}; it is simple enough to be followed
easily with pen and paper or implemented on a computer.\footnote{We have made
available an implementation of the \arrowNotation{} formalised in this paper as
a domain specific language built over Haskell's rebindable monadic combinators
(\url{https://github.com/mroman42/observe}).} We will provide explicit
examples, see \Cref{sec:illustrations} below.  The
\arrowNotation{} syntax that we propose is sound and complete for
\emph{\copyDiscardCompareCategories{}}, see
\Cref{th:arrowNotationInternalLanguage}.

\subsection{Outline} \Cref{sec:subdistributional} starts by recalling \subdistributions{}, and
the operations of restriction and rescaling on them.
\Cref{sec:illustrations} illustrates several examples of probabilistic
puzzles.  We show how to express the statements of such puzzles in
\arrowNotation{} and informally compute their semantics in
\subdistributions{} in a step-by-step manner.
\Cref{sec:observeDoNotation} formally defines \arrowNotation{}, shows how
its terms form the free \copyDiscardCompareCategory{} on a
signature, and gives functorial semantics to these terms.
\Cref{sec:normalisation} proves a result on compositionality of
\normalisation{}: the \subdistribution{} semantics can be computed
equivalently via their \normalisations{}, as proper distributions,
step-by-step. Most of the proofs, and the category theory involved, are relegated to
an extended version that will appear alongside this text~\cite{aSimpleLanguage}.

\subsection{Related work}

Markov categories are a well-studied categorical framework for synthetic
probability theory
\cite{cho2019disintegration,fritz2020synthetic,fritzLiang23:freeMarkov}. Recent
work developed the syntax of Markov categories and started employing it for
decision problems \cite{dilavore23:evidential,jacobs24:gettingwiser}. We never
mention Markov categories, but they inspire our variant of \arrowNotation{}. We
go beyond \cite{dilavore23:evidential} by introducing a simple syntax for
sampling and observation, which is close to what one finds in probabilistic
programming languages \cite{hasegawa97,
stay2013bicategorical,heunenKammar16:semanticsProbabilistic,
heunen_kammar_staton_yang_2017,stein2021structural,ehrhard2017measurable,
dahlqvist19semantics, vakar2019domain}. There, one usually employs more complex
syntax, less suitable for pen-and-paper computation. Our work tries to
constitute a minimal setup that, while less expressive, may be easier to employ
when discussing decision problems.  On the functional programming side, Gibbons
and Hinze already mention that Haskell's \arrowNotation{}  
\cite{hughes2000generalising,paterson2001new,heunen2006arrows} is suitable for
decision problems such as the \MontyHallProblem{} \cite{gibbons2011just}.
Finally, we note that alternatives in the literature require more constructors and
more structure: monadic semantics needs a cartesian closed base for the monad
\cite{moggi1991notions,fuhrmann2000structure,lindley10}; usual commutative
probabilistic programming asks for normalization to be a primitive
\cite{staton2017commutative}.

\section{Subdistributions}\label{sec:subdistributional}

\kl{Subdistributions} form the computational basis for our language.
\kl{Subdistributions} are finite formal sums whose coefficients --- non-negative real numbers --- %
represent a probabilistic choice between some elements of a set. Contrary to
distributions, whose probabilities must add up to exactly $1$, \kl{subdistributions} allow some
probability mass to be left unassigned: we use this left-over probability, if
any, to represent the failure of a certain property. This section introduces the
basic ingredients for our \kl{subdistributional} semantics.

\newcommand{\supp}{\operatorname{supp}}
\begin{definition}[Subdistribution]%
  \defining{linkSubdistribution}%
  \label{def:subdistribution}%
  \AP %
  A finitely supported \intro{subdistribution} on a set $X$ consists of a function $\sigma ፡ X → [0,1]$, from $X$ to the unit interval $[0,1] ⊆ ℝ$, such that:
  \begin{enumerate}
    \item its support, $\supp(\sigma) = \{ x ∈ X \mid
      \sigma(x) > 0 \}$, is finite,
      i.e., $\#|\supp(σ)| < \infty$;
    \item and its values add up to less or equal than one, i.e., \smasheq{\sum_{x ∈ \supp(σ)} σ(x) ≤ 1}.
  \end{enumerate}

  We shall write $\Subd(X)$ for the set of such \kl{subdistributions} on $X$,
  and $\Dst(X) ⊆ \Subd(X)$ for the subset of distributions, for which coefficients
  add up to precisely~$1$. For any two \kl{subdistributions} over two possibly
  different sets, $σ ∈ \Subd(X)$ and $ρ ∈ \Subd(Y)$, we write $σ ⊗ ρ ∈ \Subd(X ×
  Y)$ for their parallel (tensor) product, defined pointwise as $(σ ⊗ ρ)(x,y) =
  σ(x) · ρ(y)$.
\end{definition}

\begin{remark}[Ket notation]
  When the support is finite, $\supp(σ) = \{x_{1}, \ldots, x_{n}\}$, we often use ket-notation and
  write $σ$ as $r_{1}\ket{x_{1}} + \cdots + r_{n}\ket{x_n}$, where the
  probabilities, $r_{i} = \sigma(x_{i}) \in [0,1]$, satisfy $\sum_{i} r_{i} \leq
  1$.  We refer to each (formal) summand \(r_{i}\ket{x_{i}}\) of $σ$ as a
  ``monomial'' of $σ$.

  The product of \kl{subdistributions} is computed by pairwise multiplying the summands in the two \subdistributions{}.
  For instance, the second line (on the right) in \Cref{fig:sailorchild:solution} contains the product distribution
  \[ \begin{array}{rcl}
       \Big(\frac{1}{2}\ket{H} + \frac{1}{2}\ket{T}\Big)
       \otimes
       \Big(\frac{1}{2}\ket{A} + \frac{1}{2}\ket{B}\Big)
       & = &\frac{1}{4}\ket{H,A} + \frac{1}{4}\ket{H,B} +
       \frac{1}{4}\ket{T,A} + \frac{1}{4}\ket{T,B}.
     \end{array}\]
   In this way, we use tensor products $(⊗)$ to keep track of the state of the calculation as it develops.
\end{remark}

There is another feature of \kl{subdistributions} that is crucial in our
examples, namely, restriction with respect to a property (event,
observation). In combination with rescaling, restriction allows the
updating of a distribution.

\begin{definition}[Restriction \& rescaling]
\label{def:updating}%
Let $σ ∈ \Subd(X)$ be a \kl{subdistribution} over a set $X$.
\begin{enumerate}
  \item \label{def:updating:restriction} For a subset/constraint $U ⊆ X$ we define a
    \emph{restricted} \subdistribution{} $σ|_{U} ∈ \Subd(X)$ as $σ|_{U}(x) =
    σ(x)$ for \(x ∈ \supp(σ|_{U})\), where the support is the intersection of
    the support of \(σ\) with the subset \(U\),
    \(\supp(\sigma|_{U}) = \supp(\sigma) \cap U\).
  \item \label{def:updating:rescaling} \intro{Rescaling} is an isomorphism,
    $\operatorname{rescal} ፡ \Subd(X) → 1 + (0,1] × \Dst(X),$
  defined by
  \[\begin{array}{rcl}
    \operatorname{rescal}(\sigma) & = &
    \begin{cases}
    * ∈ 1, & \mbox{if $σ$ is the always-zero function}, \\
    (v, \frac{1}{v} · σ) \quad & \mbox{with validity } v = \sum_{x} σ(x).
    \end{cases}
    \end{array}\]
The $1+(-)$ in the output type of the rescaling function is thus used
to handle that rescaling is a partial operation.
\end{enumerate}
\end{definition}

Restriction happens in the examples in \Cref{sec:illustrations} via the crossing
out of parts of \kl{subdistributions}, in the columns to the right of the
formalisations (in the various figures).  The elements of the
\kl{subdistribution} that are not in the subset defined by the \observe{}
statement are removed, as in the intersection in the first point of the
definition.  \kl{Rescaling} happens at the very end of the examples, when a
validity $v$ is computed and used to turn the \kl{subdistribution} at that final
stage into a proper distribution, in $\Dst(X)$, that is the outcome of the
example. Equivalently, we could use the type $1 + (0,1] × \Dst(X)$ on the
right-hand-side
at all stages (and not just the last one), but that makes the notation
cumbersome; instead, in \Cref{sec:normalisation}, we will discuss
how \kl{normalisation} addresses this problem.

\section{Illustrations from probabilistic decision theory}%
\label{sec:illustrations}

This section describes several famous problems from decision theory, both
verbally and in a formal syntax — \arrowNotation{} — with associated
calculations. \Cref{sec:ex:montyHall} shows the first problem and intuitively
explains how to read \arrowNotation{}. \Cref{sec:observeDoNotation} will
introduce the notation formally, together with its semantics.

The language we propose is an idealised variant of Haskell's
\emph{\arrowNotation{}}~\cite{hughes2000generalising,paterson2001new,heunen2006arrows}
that includes a primitive ($\observe$) for declarative Bayesian
updates (using `sharp' equality
predicates~\cite{jacobs15:newDirections}).  Its idealised nature
allows us to sketch how it is a sound and complete language for the
algebraic structure of \copyDiscardCompareCategories{}, which we also
introduce in \Cref{sec:observeDoNotation}.

\subsection{Example --- the Monty Hall problem}%
\label{sec:ex:montyHall}

The \MontyHallProblem{} first appeared in a letter by Steve Selvin to
the editor of the \emph{American Statistician} in
1975~\cite{selvin75:montyhall}.  It was vos Savant's discussion in the
\emph{Parade} magazine, prompted by a reader (Craig F.~Whitaker), that
brought controversy and fame to the problem~\cite{vosSavant}.  We
reformulate the problem in a precise manner.
\begin{quote}\linkdef{linkMontyHall}{} \emph{\textbf{Monty Hall problem.} In a
  game show, (1)~one car is behind one of three doors --- Left, Middle, or
  Right --- where each option has the same probability. There is a goat behind
  the two other doors. (2)~The Player aims to win the car and (randomly) chooses
  a door, say Middle; this door remains closed at this stage; (3)~The Host knows
  where the car is and chooses a door different from Middle and different from
  the door that hides the car; the Host chooses randomly, if possible; the door
  chosen by the Host, say Left, is opened and discloses a goat. (4)~Given this
  situation, the Player is offered the option to either stick to the original
  choice (Middle) or switch to the other unopened (Right) door. Does switching
  doors give a higher probability of winning the car?}
\end{quote}

\Cref{fig:solveMontyHall:donotation} formalizes the \MontyHallProblem{} on the
left; we (manually) compute its solution on the right, in a step by step manner.

\begin{figure}[ht!]
\centering
\begin{tabular}{cll}
(1) & \smasheq{\car \gets \uniform\{L, M, R\}} \quad\quad
& \smasheq{\frac{1}{3}\ket{L} + \frac{1}{3}\ket{M} + \frac{1}{3}\ket{R}}
\\[+0.2em]
(2) & \smasheq{\host ← \caseOf{\car}} \\
    & \smasheq{\qquad L \mapsto 1\ket{R}} \\
    & \smasheq{\qquad M \mapsto \frac{1}{2}\ket{L} + \frac{1}{2}\ket{R}} \\
    & \smasheq{\qquad R \mapsto 1\ket{L}}
    & \smasheq{\frac{1}{3}\ket{L,R} + \frac{1}{6}\ket{M,L} + \frac{1}{6}\ket{M,R} + \frac{1}{3}\ket{R,L}}\\[+0.2em]
(3) &\smasheq{\observe(\host = L)} &
    \smasheq{\cancel{\frac{1}{3}\ket{L,R}} + \frac{1}{6}\ket{M,L} + \cancel{\frac{1}{6}\ket{M,R}} + \frac{1}{3}\ket{R,L}}
    \\[+0.2em]
    (4) &\smasheq{\return(\car)} &
    \smasheq{\frac{1}{6}\ket{M} + \frac{1}{3}\ket{R}}
    \\[+0.5em]
    & Validity: & \smasheq{\frac{1}{6} + \frac{1}{3} = \frac{1}{2}}
    \\[+0.2em]
    & Posterior: & \smasheq{\frac{1}{3}\ket{M} + \frac{2}{3}\ket{R}}
  \end{tabular}
  \caption{Formal description and calculations for the Monty Hall
    problem, with the player choosing the middle door and the host
    opening the left door. A crucial point is that only when the car
    is behind the middle door, the host has a choice.}
  \label{fig:solveMontyHall:donotation}
\end{figure}

\subsection{About Arrow Notation.}%
\label{sec:reading-arrow}%

When writing \arrowNotation{} statements, we will keep a \subdistribution{} on
the side, see for instance in \Cref{fig:solveMontyHall:donotation},
line~\((1)\).  This \subdistribution{} represents the ``state'' of the current
computation: it starts with the first arrow declaration and is recomputed after
each statement.

Every time we write a function statement, we gather all the formal monomials
\(r_{i}\ket{x_{i}}\) corresponding to the previous line; we compute the function
over each one of the possible outcomes \(x_{i}\), obtaining a \subdistribution{}
\(\sigma_{i} = \sum_{j} s_{j}^{i}\ket{y_{j}^{i}}\) for each one of them; we
merge these \subdistributions{} into a joint \subdistribution{}, \smasheq{\sum\nolimits_{i}
r_{i}\sigma_{i}\ket{x_{i}} = \sum_{i}r_{i} \sum_{j} s_{j}^{i}
\ket{x_{i},y_{j}^{i}}}.
In this way, each monomial keeps a list of values, listed in the order
of appearance of the variables in the problem, see
\Cref{fig:solveMontyHall:donotation}, line~\((2)\).  The $\mathsc{case-of}$
formulation is used to define a function on a finite set by listing
its action on the elements of the set.  The ``$\mathsc{case-of}$'' statement
will not be part of the formal syntax of \arrowNotation{} — it only
appears in this subdistributional interpretation, which is discrete and finitely supported.

When writing an $\observe$ statement, we cancel out all monomials that do not
satisfy its constraint. The validity of the resulting term may decrease, as in
\Cref{fig:solveMontyHall:donotation}, line~\((3)\). Finally, whenever we write
the $\return$ statement, we keep on the side — for each monomial —
only the monomials corresponding to the variables being returned, see
\Cref{fig:solveMontyHall:donotation}, line~\((4)\). The $\return$ statement acts
as a projection, producing the output.  The operational semantics sketched here
is a reformulation of the monadic semantics of the \subdistribution{} monad,
$\Subd{} ፡ \Set{} → \Set{}$~\cite{paterson2001new,moggi1991notions}.

Finally, the outcome ``posterior'' distribution is obtained by rescaling the
final \subdistribution{}, as in
Definition~\ref{def:updating}~\eqref{def:updating:rescaling}.  In the Monty Hall
example, in \Cref{fig:solveMontyHall:donotation}, the posterior reflects that,
from the fact that the host has opened the left door, we can deduce that the car
is with probability \smasheq{\frac{1}{3}} behind the middle door, and with
probability \smasheq{\frac{2}{3}} behind the right door.  Hence, as originally
argued by vos Savant, it does make sense to switch --- from middle to right ---
intuitively because the host reveals information.

Like in the above description, it is assumed that the player chooses the middle
door and the host opens the left door, while the car is initially at a
(uniformly) random position. The description in
\Cref{fig:solveMontyHall:donotation} can be generalised, including also the
choice of the player, with the statement ``$\player \gets \uniform\{L, M,
R\}$''.  The host then has to make a case distinction over~$9$ options.  We
choose to keep things elementary at this stage and describe a simplified
situation, with the chosen and opened doors already fixed, but the interested
reader may wish to elaborate this more general formulation or check
\Cref{sec:normalisation} and \Cref{fig:solveMontyHall:donotation:normalise}.

\subsection{Example --- the Monty `Fall' problem}%
\label{sec:montyFall}%

The \MontyFallProblem{} is a variant of the \MontyHallProblem{} where
the host opens a door undeliberately and unconsciously, instead of
consciously avoiding to pick the door with the car behind. This
formulation, due to Rosenthal~\cite{rosenthal08:montyFall}, is a
well-known variant that illustrates how delicate the statement of the
\MontyHallProblem{} \cite{gill10:montyhall,gibbons2011just} is: what
makes decision theory challenging to formalise is that such slight
variations on the statement may lead to radically different
conclusions.
\begin{quote}
\defining{linkMontyFall}{} \emph{\textbf{Monty Fall problem.} In this
variant, once Player has selected one of the three doors, say the
Middle once again, the Host slips on a banana peel and accidentally
pushes open another door, say the Left one; everyone sees that there
is no car behind it. Does it still make sense for Player to switch?}
\end{quote}

\noindent In this situation the knowledge of the host where the car is
hidden does not play a role, so the player gets no additional
information.  Rosenthal argues~\cite{rosenthal08:montyFall} that, if
people distrusted vos Savant's solution~\cite{vosSavant} of the
problem --- that it makes sense to switch --- they may have been
thinking of this `\MontyFall{}' version instead. We formalise it as
follows.
\begin{center}
  \begin{tabular}{lll}
    (1) & \smasheq{\car \gets \uniform \{L,M,R\}} \qquad \qquad &
    \smasheq{\frac{1}{3}\ket{L} + \frac{1}{3}\ket{M} + \frac{1}{3}\ket{R}}
    \\[+0.2em]
    (2) & \smasheq{\observe (\car \neq L)} & \smasheq{\frac{1}{3}\cancel{\ket{L}} + \frac{1}{3}\ket{M} + \frac{1}{3}\ket{R}}
    \\[+0.2em]
    (3) & \smasheq{\return(\car)} & \smasheq{\frac{1}{3}\ket{M} + \frac{1}{3}\ket{R}}
    \\[+0.5em]
    & Validity: & \smasheq{\frac{2}{3}}
    \\[+0.2em]
    & Posterior: & \smasheq{\frac{1}{2}\ket{M} + \frac{1}{2}\ket{R}}
  \end{tabular}
\end{center}

\noindent Now, both positions of the car are equally likely, so it
does not make sense to switch. We skip any polemic here, since our
point is that formalisation helps to dissolve confusion. Once the
assumptions are rendered explicit, the \MontyHall{} or \Fall{}
problem, has a single, easily computable solution.

\subsection{Example --- Three Prisoners problem}
\label{sec:prisoners}

The next challenge may be found, for instance,
in the book by Casella and Berger~\cite[Ex.~1.3.4]{casella2002:statistical} and is presented below in our own formulation.
\begin{quote}
\defining{linkThreePrisonersProblem}{} \emph{\textbf{Three prisoners
  problem.} Three prisoners named $A$, $B$, and $C$, are on death row,
in isolation, without any communication between them. The responsible
governor decides to pardon one of them and chooses at random the
prisoner to pardon. She informs the warden of the prisoners of her
choice but does not allow him to give information to any of the
prisoners, about their own fate. The warden is honest and careful and
does not lie. Prisoner $A$ tries to get the warden to tell him who
has been pardoned. The warden refuses; $A$ then asks which of $B$ or
$C$ will be executed. The warden thinks for a while, then tells $A$
that $B$ is to be executed. Does $A$ learn anything about his own
fate?}
\end{quote}

\noindent Prisoner $A$ may think that his chances have risen from
$\frac{1}{3}$ to $\frac{1}{2}$. But this is not correct: the situation
of $A$ has not changed, but the chance of $C$ of being pardoned has
risen to $\frac{2}{3}$; see the formalisation in
\Cref{fig:threeprisoners:computation}.

\begin{figure}[ht!]
  \centering
  \begin{tabular}{lll}
    (1) & \smasheq{\pardon ← \uniform\{A, B, C\}} \quad\quad
    & \smasheq{\frac{1}{3}\ket{A} + \frac{1}{3}\ket{B} + \frac{1}{3}\ket{C}}
    \\[+0.2em]
    (2) & \smasheq{\reply ← \caseOf{\pardon}}
    \\
     & \smasheq{\qquad A \mapsto \frac{1}{2}\ket{B} +  \frac{1}{2}\ket{C};} \\ %
     & \smasheq{\qquad B \mapsto 1\ket{C};} \\ %
     & \smasheq{\qquad C \mapsto 1\ket{B};} &
    \smasheq{\frac{1}{6}\ket{A,B} + \frac{1}{6}\ket{A,C}
      + \frac{1}{3}\ket{B,C} + \frac{1}{3}\ket{C,B}}
    \\[+0.2em]
    (3) & \smasheq{\observe(\reply = B)}
    & \smasheq{\frac{1}{6}\ket{A,B} + \cancel{\frac{1}{6}\ket{A,C}}
    + \cancel{\frac{1}{3}\ket{B,C}} + \frac{1}{3}\ket{C,B}}
    \\[+0.2em]
    (4) & \smasheq{\return(\pardon)} & \smasheq{\frac{1}{6}\ket{A} + \frac{1}{3}\ket{C}}
    \\[+0.5em]
    & Validity: & \smasheq{\frac{1}{6} + \frac{1}{3} = \frac{1}{2}}
    \\[+0.2em]
    & Posterior: & \smasheq{\frac{1}{3}\ket{A} + \frac{2}{3}\ket{C}}
  \end{tabular}
  \caption{Formulation and calculation for the
    \protect\ThreePrisonersProblem{}, where the reply to $A$ is about
    who gets executed, and thus not pardoned.}
  \label{fig:threeprisoners:computation}
\end{figure}

\subsection{Example --- Sailor's Child problem}
\label{sec:sailorchild}

The next ``\SailorsChild{}'' problem is another example of a problem whose solution has created controversy~\cite{neal06puzzles}.
It is an equivalent formulation of the ``Sleeping Beauty'' problem~\cite{elga2020self}, without some of its philosophically contentious points.

\begin{quote}
  \defining{linkSailorChildProblem}{} \emph{\textbf{Sailor's child
    problem.} A Sailor sails regularly between two ports $A$ and
  $B$. In each of these he stays with a woman, both of whom wish to
  have a child by him. The Sailor decides that he will have either one
  child (with one of the women) or two children (one child with each
  of them). The number of children is decided by a (fair) coin toss
  --- one if Heads, two if Tails. Furthermore, the Sailor decides that
  if the coin lands Heads, he will have the selected option of one
  child with the woman who lives in the city listed first in The
  Sailor’s Guide to Ports, a book that is actually unkown to
  him. Hence the choice between ports $A$ and $B$ is in this case (of
  Heads) decided by chance, in a fair way.}

  \emph{\indent{} Now, suppose that you are a child of this Sailor,
  born and living in port $A$, and that neither you nor your mother
  know whether he had a child with the other woman. You do not have a
  copy of the book, but you do know that matters were decided as
  described above. What is the probability that you are the Sailor's
  only child?}
\end{quote}

\begin{figure}[h!]
  \centering
  \begin{tabular}{lll}
    (1) & \smasheq{\coin ← \uniform \{H,T\}}
    & \smasheq{\frac{1}{2}\ket{H} + \frac{1}{2}\ket{T}}
    \\[+0.2em]
    (2) & \smasheq{\guide ← \uniform \{A,B\}}
    & \smasheq{\frac{1}{4}\ket{H,A} + \frac{1}{4}\ket{H,B} +
       \frac{1}{4}\ket{T,A} + \frac{1}{4}\ket{T,B}}
    \\[+0.2em]
    (3) & \smasheq{\ports  ← \mathsc{case}\ (\coin,\guide)\ \mathsc{of}} \quad
    \\
    & \smasheq{\quad (H,A) \mapsto 1\ket{\{A\}}}
    \\
    & \smasheq{\quad (H,B) \mapsto 1\ket{\{B\}}} &
    \\
    & \smasheq{\quad (T,A) \mapsto  1\ket{\{A,B\}}}
    & \smasheq{\frac{1}{4}\ket{H,A,\{A\}} + \frac{1}{4}\ket{H,B,\{B\}}}
    \\
    & \smasheq{\quad (T,B) \mapsto  1\ket{\{A,B\}}}
    & \smasheq{\quad+\,\frac{1}{4}\ket{T,A,\{A,B\}} +
       \frac{1}{4}\ket{T,B,\{A,B\}}}
    \\[+0.2em]
    (4) & \smasheq{\observe(A \in\ports)} &
   \smasheq{\frac{1}{4}\ket{H,A,\{A\}} + \cancel{\frac{1}{4}\ket{H,B,\{B\}}}}
   \\
   & & \smasheq{\quad+\,\frac{1}{4}\ket{T,A,\{A,B\}} +
       \frac{1}{4}\ket{T,B,\{A,B\}}}
    \\[+0.2em]
    (5) & \smasheq{\return(\coin)} & \smasheq{\frac{1}{4}\ket{H} + \frac{1}{2}\ket{T}}
    \\[+0.5em]
    & Validity: & \smasheq{\frac{1}{4} + \frac{1}{2} = \frac{3}{4}}
    \\[+0.2em]
    & Posterior: & \smasheq{\frac{1}{3}\ket{H} + \frac{2}{3}\ket{T}}
  \end{tabular}
  \caption{Calculations for the \protect\SailorChildProblem{}.}
  \label{fig:sailorchild:solution}
\end{figure}

The story is rather complex so we do not expect that readers immediately have an
intuition, like in Subsections~\ref{sec:ex:montyHall} and~\ref{sec:prisoners}.
Again, the formalisation is thus very helpful to highlight the assumptions and
provide the answer.

In philosophy, the \SailorsChild{} problem is used to exemplify the
\emph{anthropic principle}~\cite{neal06puzzles,elga2020self}: the mere presence
of the deciding agent may be an important bit of information to update on.
Mathematically, we can simplify this discussion and treat ``anthropic''
observations no differently from usual observations: if we had disregarded the
anthropic observation, we would have omitted line \((4)\) and obtained a
different result, namely, the uniform distribution on \(\{H,T\}\). At the end,
it is interesting to note that the fact that the child lives in port~$A$ is
irrelevant. The outcome --- \smasheq{\frac{1}{3}} probability for the single child
option --- is the same if the child lives in port~$B$. 

\subsection{Example --- Newcomb's Paradox}
\label{sec:newcombs}

\NewcombsParadox{} is famously controversial: different paradigms of decision
theory answer it differently~\cite{nozick1969newcomb,gibbard1978counterfactuals}
and therefore it is often called a paradox. It is generally accepted that a
naive formalisation that does not take into account statistical correlations
produces a wrong answer~\cite{ahmed2014evidence}.
\begin{quote}
\defining{linkNewcombParadox}{} \emph{\textbf{Newcomb's paradox.} Suppose there is a Being that can precisely predict your
choices. There are two boxes in front of you, B1 and B2, where B1
certainly contains \$1. B2 contains either \$10 or nothing. You can
choose between: (a)~taking what is in both boxes, or (b)~taking only
what is in B2.  The Being acts as follows. If it predicts that you
will take what is in both boxes, it forces B2 to be empty. If it
predicts that you will only take what is in B2, it makes sure that B2 contains \$10.}

\emph{Things work as follows. First the Being makes its prediction
about your choice. Then it puts the \$10 in B2, or not, in line with
its prediction. Now you make your choice. Which choice maximises the
outcome?}
\end{quote}

\begin{figure}[ht!]
  \centering
  \begin{tabular}{cll}
    (1) & \smasheq{\prediction \gets \uniform\{a,b\} \qquad} &
    \smasheq{\frac{1}{2}\ket{a} + \frac{1}{2}\ket{b}}
    \\[+0.2em]
    (2) & \smasheq{\choice \gets \uniform\{a,b\} \qquad} &
    \smasheq{\frac{1}{4}\ket{a,a} + \frac{1}{4}\ket{a,b} +
    \frac{1}{4}\ket{b,a} + \frac{1}{4}\ket{b,b}}
    \\[+0.2em]
    (3) & \smasheq{\observe(\prediction = \choice)} &
    \smasheq{\frac{1}{4}\ket{a,a} + \cancel{\frac{1}{4}\ket{a,b}} +
    \cancel{\frac{1}{4}\ket{b,a}} + \frac{1}{4}\ket{b,b}}
    \\[+0.2em]
    (4) & \smasheq{\outcome ← \mathsc{case}\ (\prediction,\choice)\
      \mathsc{of}} \quad
    \\
    & \smasheq{\quad (a,a) \mapsto  1\ket{\$1}} & \\
    & \smasheq{\quad (a,b) \mapsto  1\ket{\$0}} & \\
    & \smasheq{\quad (b,a) \mapsto  1\ket{\$11}} & \\
    & \smasheq{\quad (b,b) \mapsto  1\ket{\$10}} &
    \smasheq{\frac{1}{4}\bigket{a,a,\$1} + \frac{1}{4}\bigket{b,b,\$10}}
    \\[+0.2em]
    & & \\
    (5a) &\smasheq{\observe(\choice = a)} &
    \smasheq{\frac{1}{4}\bigket{a,a,\$1}}
    \\[+0.2em]
    (6a) &\smasheq{\return(\outcome)} &
    \smasheq{1\bigket{\$1}}\\
    && \\
    (5b) &\smasheq{\observe(\choice = b)} &
    \smasheq{\frac{1}{4}\bigket{b,b,\$10}}
    \\[+0.2em]
    (6b) &\smasheq{\return(\outcome)} &
    \smasheq{1\bigket{\$10}}\\
  \end{tabular}
  \caption{Solution of \protect\NewcombsProblem{}, where we leave
    the final normalization step implicit.}
  \label{fig:solveNewcomb}
\end{figure}

\noindent Having made the relevant case distinctions in
\Cref{fig:solveNewcomb}, we separately elaborate the two choice
scenarios~(a) and~(b), in steps~(5) and~(6). We see that choosing the one-box option~(b) gives the highest outcome.

\subsection{Example --- Imperfect Newcomb}%
\label{sec:newcomb:imperfect}
\defining{linkImperfectNewcombParadox}{}

\begin{figure}[ht!]
  \centering
  \begin{tabular}{cll}
    (1) & \smasheq{\prediction \gets \uniform\{a,b\} \qquad} &
    \smasheq{\frac{1}{2}\ket{a} + \frac{1}{2}\ket{b}}
    \\[+0.2em]
    (2) & \smasheq{\choice \gets \uniform\{a,b\} \qquad} &
    \smasheq{\frac{1}{4}\ket{a,a} + \frac{1}{4}\ket{a,b} +
    \frac{1}{4}\ket{b,a} + \frac{1}{4}\ket{b,b}}
    \\[+0.2em]
    (3) & \smasheq{\correctness ← \mathsc{case}\ (\prediction,\choice)\
      \mathsc{of}} \quad
    \\
    & \smasheq{\quad (x,x) \mapsto \frac{4}{5}\ket{T} + \frac{1}{5}\ket{F}} &
    \\[+0.2em]
    & \smasheq{\quad (x,y) \mapsto \frac{1}{5}\ket{T} + \frac{4}{5}\ket{F},
       \quad\mbox{for }x\neq y} &
    \\[+0.2em]
    (4) & \smasheq{\observe(\correctness = T)} &
    \smasheq{\frac{1}{5}\ket{a,a,T} + \frac{1}{20}\ket{a,b,T}}
    \\
    & & \smasheq{\quad+\,\frac{1}{20}\ket{b,a,T} + \frac{1}{5}\ket{b,b,T}}
    \\[+0.2em]
    (5) & \smasheq{\outcome ← \mathsc{case}\ (\prediction,\choice)\
      \mathsc{of}} \quad
    \\
    & \smasheq{\quad (a,a) \mapsto  1\ket{\$1}} & \\
    & \smasheq{\quad (a,b) \mapsto  1\ket{\$0}} & \\
    & \smasheq{\quad (b,a) \mapsto  1\ket{\$11}} &
    \smasheq{\frac{1}{5}\ket{a,a,T,\$1} + \frac{1}{20}\ket{a,b,T,\$0}}
    \\
    & \smasheq{\quad (b,b) \mapsto  1\ket{\$10}} &
    \smasheq{\quad+\,\frac{1}{20}\ket{b,a,T,\$11} + \frac{1}{5}\ket{b,b,T,\$10}}
    \\[+0.2em]
    & & \\
    (6a) &\smasheq{\observe(\choice = a)} &
    \smasheq{\frac{1}{20}\ket{a,b,T,\$11} + \frac{1}{5}\ket{a,a,T,\$1}}
    \\[+0.2em]
    (7a) &\smasheq{\return(\outcome)} &
    \smasheq{\frac{1}{5}\ket{\$11} + \frac{4}{5}\ket{\$1}}
    \\
    && \\
    (6b) &\smasheq{\observe(\choice = b)} &
    \smasheq{\frac{1}{5}\ket{b,b,T,\$10} + \frac{1}{20}\ket{a,b,T,\$0}}
    \\[+0.2em]
    (7b) &\smasheq{\return(\outcome)} &
    \smasheq{\frac{4}{5}\ket{\$10} + \frac{1}{5}\ket{\$0}}
  \end{tabular}
  \caption{Solution of the \protect\imperfectNewcombsParadox.}
  \label{fig:solveNewcomb:imperfect}
\end{figure}

We consider a variation on \NewcombsParadox{} where the Being has a $80\%$
chance of predicting right, both for equality and inequality. One may ask what
is then the best strategy. The answer is elaborated in
\Cref{fig:solveNewcomb:imperfect}. The correctness of the prediction is
expressed as a Boolean: True ($T$) or False ($F$).  For readability we skip some
of the subdistributions in the column on the right. We now have to use the
expected value to evaluate the outcome. For choice (b) of 2 boxes it is now more
than 1, namely \smasheq{\frac{1}{5}\cdot \$11 + \frac{4}{5}\cdot \$1 = \$3}.
When only 1 box is chosen one still gets a higher outcome
\smasheq{\frac{4}{5}\cdot \$10 + \frac{1}{5}\cdot \$0 = \$8}, but this lower
than the outcome of \(\$10\) in the perfect case. Hence when the Being's
prediction is not perfect, it makes less sense to choose one box.

\section{Arrow Notation}\label{sec:observeDoNotation}

\ArrowNotation{} (more precisely, ``\arrowNotation{} for
\copyDiscardCategories{}'', as we see later) is an idealised version of the
syntax Haskell uses for its \emph{arrow} data structure%
\footnote{Haskell's notation for arrows is sometimes called \emph{do-notation}.
We prefer to use the term \emph{arrow notation} to avoid confusion with Pearl's
\emph{do-calculus} \cite{pearl2009causality}, which is unrelated and common
in causality theory.}~%
\cite{heunen2006arrows,hughes2000generalising,jeffrey1997premonoidal1}: it
consists of a series of statements declaring the input and output variables of
every function, ended by a ``\return{}'' statement.

\subsection{Arrow notation}  

Let us first define an idealised version of \arrowNotation{} as a simple type
theory. We start from a \signature{} $Σ$ of types and generators, and define
terms by the rules in \Cref{def:arrowNotation} and \Cref{fig:dotypetheory}.

\begin{definition}[Signature]
  \label{def:signature}%
  \AP A ""signature"" $Σ$ consists of a set of generating types, $Σ_{type}$,
  together with, for each list of input types, $X_1, …, X_{\lenx} ∈ Σ_{type}$,
  of length $\lenx ∈ ℕ$, and each list of output types, $Y_1,…,Y_\leny ∈ Σ_{type}$, of
  length $\leny ∈ ℕ$, a set of generators, \smasheq{Σ\big(X_1,...,X_{\lenx}; Y₁, …,
  Y_{\leny}\big)}.
\end{definition}

\begin{toappendix}
\begin{definition}[Morphism of signatures]
  A \emph{morphism of signatures} $H ፡ Σ → Ψ$ is given by a function $H ፡
  Σ_{type} → Ψ_{type}$ together with a collection of functions
  $$Σ\big(X₁,...,X_{\len{x}}; Y₁,...,Y_{\len{y}}\big) → Ψ\big(H(X₁),...,H(X_{\len{x}});
  H(Y₁),...,H(Y_{\len{y}})\big).$$
\end{definition}
\end{toappendix}

\begin{toappendix}
\begin{proposition}
  \Signatures{} and \signatureHomomorphisms{} form a category, \Sig{}.
\end{proposition}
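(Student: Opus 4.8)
The plan is to present \Sig{} explicitly, by giving its identities and composition, and then to deduce each category axiom from the corresponding law for functions in $\Set$; the whole argument is a routine transport, so I would concentrate on the one bookkeeping point that makes composition well-defined.

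First I would fix the data. For a \signature{} $\Sigma$, the identity $\id[\Sigma] : \Sigma \to \Sigma$ is the identity function on $\Sigma_{type}$ together with, for each pair of lists $(X_1,\ldots,X_\lenx)$ and $(Y_1,\ldots,Y_\leny)$, the identity function on the generator-set $\Sigma(X_1,\ldots,X_\lenx; Y_1,\ldots,Y_\leny)$. For the composite of $H : \Sigma \to \Psi$ and $K : \Psi \to \Phi$, I would take $K \circ H$ on types, and on generators the composite of the $H$-component at $(X_1,\ldots,X_\lenx; Y_1,\ldots,Y_\leny)$ with the $K$-component at the reindexed list $(H(X_1),\ldots,H(X_\lenx); H(Y_1),\ldots,H(Y_\leny))$.

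The only step that is not purely mechanical is checking that this composite on generators lands in the correct set. The $H$-component has codomain $\Psi(H(X_1),\ldots; H(Y_1),\ldots)$, which is precisely the index at which the $K$-component is defined, and the $K$-component then has codomain $\Phi(KH(X_1),\ldots; KH(Y_1),\ldots)$. Since $KH(Z) = (K\circ H)(Z)$ on every type $Z$, this codomain is exactly $\Phi((K\circ H)(X_1),\ldots;(K\circ H)(Y_1),\ldots)$, as a morphism $\Sigma \to \Phi$ requires. Phrased functorially: a \signature{} is a family of sets indexed by pairs of lists of its types, a \signatureHomomorphism{} is a function on types together with a collection of maps lying over the entrywise reindexing of those lists, and the content of this step is simply that reindexing along $K\circ H$ is reindexing along $H$ followed by reindexing along $K$.

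With well-definedness in hand, the three axioms require no new ideas. A \signatureHomomorphism{} is determined by its underlying function on types and its family of functions on generator-sets, and equality of such morphisms is equality of all of these underlying functions; hence associativity and the two unit laws hold if and only if they hold for the type-level function and for each generator-level function separately. On the type level these are exactly associativity and unitality of composition in $\Set$, and on each generator-set, once the index-matching above is accounted for, they are again the same laws in $\Set$. All three axioms therefore transport directly, which completes the verification.
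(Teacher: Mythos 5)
Your proof is correct: the identities and composition you give are the evident ones, the index-matching check is exactly the point that makes composition well-defined, and the axioms do reduce componentwise to the corresponding laws in $\Set$. The paper itself states this proposition without proof, treating it as routine, and your argument is precisely the routine verification the authors have in mind, so there is nothing to compare beyond noting that you have filled in the omitted details.
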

\end{toappendix}

\begin{definition}[Arrow notation]\label{def:observeDoNotation}\label{def:arrowNotation}%
  \defining{linkArrowNotation}{}\defining{linkObserveDoNotationTerm}{}\defining{linkArrowNotationTerm}{}%
  \AP An \intro{arrow notation preterm}, over a \kl{signature} $Σ$, with context given by 
  $Γ = x_1 : X_1, ..., x_{\lenx} : X_{\lenx}$, and
  of output type $Δ ∈ \mathsf{List}(Σ_{type})$, is inductively defined to be either
  \begin{enumerate}
    \item a return statement, \smasheq{Γ ⊢ \return(x_{α(1)},…,x_{α(\leny)}) :
    X_{α(1)},…,X_{α(\leny)}}, for any list of (possibly repeated) variables
    from the context, $(x_{α(i)} : X_{α(i)}) ∈ Γ$ for $i = 1,…,\leny$, that are
    picked according to a function between finite sets, \smasheq{α ፡ \fsety → \fsetx};
    \item a generator statement, $Γ ⊢ (y_1,…,y_{\leny} ←
    f(x_{β(1)},…,x_{β(\lenk)}) 𑊩 t) : Δ$, for any correctly typed generator
    $f ∈ Σ(X_{β(1)},...,X_{β(\lenk)}; Y_1,...,Y_{\leny})$ with $\lenk$
    inputs and $\leny$ outputs, any list of (possibly repeated) variables
    from the context $(x_{β(i)} : X_{β(i)}) ∈ Γ$ for $i = 1,…,\lenk$, picked
    according to a function between finite sets, $β ፡ \fsetk → \fsetx$, and, finally, any
    choice of fresh variables $y₁ : Y₁,…,y_{\leny} : Y_{\leny}$, and any
    term accessing these fresh variables, $Γ, y₁ : Y₁,…,y_{\leny} : Y_{\leny} ⊢ t : Δ$.
  \end{enumerate}
  \begin{figure}[ht!]
    \centering
    \begin{mathpar}
      \inferrule
        {Γ = x_1 : X_1,..., x_\lenx : X_\lenx \\
        α ፡ \fsety → \fsetx}
        {Γ ⊢ \return(x_{α(1)}, …, x_{α(\leny)}) ፡ X_{α(1)}, …, X_{α(\leny)}}
      \\
      \inferrule[for each generator, {{$f ∈ Σ(X_{β(1)}, …, X_{β(\lenk)};Y₁, …,Y_{\leny})$}}]
          {%
          Γ = x_1 : X_1,..., x_\lenx : X_\lenx \\ 
          β ፡ \fsetk → \fsetx \\ 
          Γ, y_1 : Y_1, …, y_{\leny} : Y_{\leny} ⊢ t : Δ}
          {Γ ⊢ (y₁, …, y_{\leny} ← f(x_{β(1)}, …, x_{β(\lenx)}) 𑊩 t) ፡ Δ}
    \end{mathpar}
    \caption{Rules for \protect\kl{arrow notation preterms}.}
    \label{fig:dotypetheory}
  \end{figure}  
Reading \arrowNotation{} becomes easier when we replace the statement
separator symbol $(𑊩)$ by a line jump. As exemplified in
\Cref{sec:illustrations}, we do so when convenient.

  We consider \kl{arrow notation preterms} to be quotiented up to
  $α$-equivalence: renaming of variables does not change their meaning.
  Substitution is defined as usual: any term $Γ ⊢ t : Δ$ with two variables of
  the same type, $x : X$ and $u : X$, induces a term $Γ ⊢ \subst{t}{x}{u} : Δ$
  where all occurrences of $x$ have been substituted by the new variable $u$.
\end{definition}
\begin{remark}[Arrow notation combinators]
Variables are useful for intuition, but not necessary. We can reformulate
\arrowNotation{} (\Cref{fig:var:arrowNotation}) in terms of combinators and
functions between finite sets. While arguably less readable for humans, the
combinatorial form sidesteps variable management and helps formalization
(\Cref{th:doNotationInternalLanguage}).
  An \kl{arrow notation preterm} is exactly either
  \begin{enumerate}
    \item a $\return$ statement, $\vret(α) ፡ X₁,...,X_{\lenx} →
    X_{α(1)},...,X_{α(\leny)}$, for any function, $α ፡ \leny → \lenx$;
    \item a generator statement, $f(β) 𑊩 t ፡ X₁,..,X_{\lenx} →
    Z₁,..,Z_{\lenz}$, for any function $β ፡ \lenk → \lenx$, any generator
    $f ∈ Σ(X_{β(1)},..,X_{β(\lenk)}; Y_1,..,Y_{\leny})$, and any term $t ፡
    X₁,...,X_{\lenx},Y₁,...,Y_{\leny} → Z₁,...,Z_{\lenz}.$
  \end{enumerate}
  In other words, the rules of \Cref{fig:var:arrowNotation} are equivalent to
  the rules of \Cref{fig:dotypetheory}.
  \begin{figure}[ht!]
    \centering
    \begin{mathpar}
      \inferrule
        {α ፡ \fsety → \fsetx}
        {\vret(α) ፡ X₁,...,X_{\lenx} → X_{α(1)},...,X_{α(\leny)}}
      \qquad
      \inferrule[for each generator, {{$f ∈ Σ(X_{β(1)}, …, X_{β(\lenk)};Y₁, …,Y_{\leny})$}}]
          {t ፡ X₁,..., X_{\lenx},Y₁,...,Y_{\leny} → Z₁,...,Z_{\lenz}
          \qquad  β ፡ \fsetk → \fsetx}
          {(f(β) 𑊩 t) ፡ X₁,..., X_{\lenx} → Z₁,...,Z_{\lenz}}
    \end{mathpar}
    \caption{Equivalent combinators for arrow notation, c.f.~\Cref{fig:dotypetheory}.}
    \label{fig:var:arrowNotation}
  \end{figure}
\end{remark}

\noindent
Finally, we can \kl{rewire}: i.e.,~substitute the context from which variables are picked.
Given a term $Γ ⊢ t ፡ Δ$ in context $Γ = x_1 : X_1,...,x_{\lenx} : X_{\lenx}$
and a function between finite sets $φ ፡ \fsetx → \fsetu$, we can derive a term
$Γ' ⊢ (φ ∗ t) ፡ Δ$ in context $Γ' = u_1 : U_1, ..., u_{\lenu} : U_{\lenu}$
defined inductively by
\begin{enumerate}
  \item $φ ∗ (\return(x_{α(1)},...,x_{α(\leny)})) = \return(u_{φ(α(1))},...,u_{φ(α(\leny))})$;
  \item $φ ∗ (y_1,...,y_{\leny} ← f(x_{α(1)},...,x_{α(\lenk)}) 𑊩 t) = (y_1,...,y_{\leny} ← f(u_{φ(α(1))},...,u_{φ(α(\lenk))}) 𑊩 (φ + \id[\leny]) ∗ t)$;
\end{enumerate}

\begin{definition}[Rewiring]
  \defining{linkAst}%
  \AP The \intro{rewiring} of an \kl{arrow notation preterm}, $t ፡ X₁,...,X_{\lenx} →
  Y₁,...,Y_{\leny}$ by a finite function, $φ ፡ \fsetx → \fsetu$, is the preterm, $(φ ∗ t) ፡
  X_{φ(1)},...,X_{φ(\lenx)} → Y_1,...,Y_\leny$, defined inductively by
  $φ ∗ \vret(α) = \vret(φ ∘ α)$ and $φ ∗ (f(α) 𑊩 t) = f(φ ∘ α) 𑊩 ((φ + \id[\leny]) ∗ t)$.
\end{definition}

\begin{definition}[Interchange axiom]
  Any two statements whose input variables are disjoint from the output
  variables of the other can interchange. That is, the \kl{interchange axiom} is
  the minimal congruence equating the following pair of terms, whenever $y_i ≠
  x_j$ and $z_i ≠ x_j$ for each pair of indices $i,j$.
  $$\begin{aligned}
  & (y_1,...,y_{\leny} ← f(x_{α(1)},...,x_{α(q)}) 𑊩 z_1,...,z_{\lenz} ← g(x_{β(1)},...,x_{β(r)}) 𑊩 t) & = \\
  & (z_1,...,z_{\lenz} ← g(x_{β(1)},...,x_{β(r)}) 𑊩 y_1,...,y_{\leny} ← f(x_{α(1)},...,x_{α(q)}) 𑊩 t).
  \end{aligned}$$
  It is important to state explicitly that, while the variables in both
  instances of the expression $t$ are the same, the position from which they are
  fetched from the context changes: $t$ appears under two different contexts, depending
  on the order in which the output variables of $f$ and $g$ appear.
  If we write $Γ = x_1 : X_1,... , x_{\lenx} : X_{\lenx}$, the two different contexts
  are
  $$Γ,y_1 : Y_1,...y_{\leny} : Y_{\leny},z_1 : Z_1,...,z_{\lenz} : Z_{\lenz} ≠ 
  Γ,z_1 : Z_1,...,z_{\lenz} : Z_{\lenz},y_1 : Y_1,...y_{\leny} : Y_{\leny}.$$
  Formally, the second appearance of $t$ must swap the variables it uses — via
  the swap function $σ_{\leny,\lenz} ፡ \fsety + \fsetz → \fsetz +
  \fsety$ — and so it is really $(\id[\lenx] + σ_{\leny,\lenz}) ∗ t$.
  Both generators must also fetch their variables from a context that includes them —
  via inclusion functions $\vlinc{\leny} ፡ \fsetx → \fsetx + \fsety$ and
  $\vlinc{\lenz} ፡ \fsetx → \fsetx + \fsetz$ — with the outputs of the
  previous generator. The \kl{interchange axiom}, under combinator encoding, is
  the following equality
  $$\begin{aligned}
    f(α) 𑊩 g(\vlinc{\leny} ∘ {β}) 𑊩 t &\ =\  %
    g(β) 𑊩 f(\vlinc{\lenz} ∘ {α}) 𑊩 (\id[\lenx] + σ_{\leny,\lenz}) ∗ t, \\
  \end{aligned}$$
  for any pair of generators $f ∈ Σ(X_{α(1)},...,X_{α(q)};
  Y_1,...,Y_{\leny})$ and $g ∈ Σ(X_{β(1)},...,X_{β(r)};
  Z_1,...,Z_{\lenz})$.

  The \kl{interchange axiom} generates an associated equivalence relation on
  \kl{arrow notation preterms}: quotienting by the minimal \kl{congruence} that
  contains the \kl{interchange axiom}, we obtain \emph{terms}.
\end{definition}

\begin{definition}[Arrow notation terms]
  An \intro{arrow notation term} is an equivalence class of \kl{arrow notation preterms}
  under the \kl{interchange axiom}.
\end{definition}

\begin{toappendix}
We have presented \arrowNotation{} as a simple type theory. Terms of this type
theory make sense only when considered in a context; terms over a context
correspond exactly to derivations on the theory. These properties simplify the
management of variables: $α$-equivalence works as usual, and we only need to
define substitution with non-fresh variables. 
    
\begin{definition}[Substitution]
  \label{def:substitution-do-notation}
  ""Substitution"" of the variable $x : X$ by the variable $u : X$ on a term
  $Γ₁,x : X, u : X, Γ₂ ⊢ t : Δ$ is the term $Γ₁, x : X, u : X, Γ₂ ⊢
  \subst{t}{x}{u} : Δ,$ defined inductively by: "substituting" every variable on
  a return statement,
    $$\subst{(\return(x₁, …, xₙ))}{x}{u} =
    \return(\subst{x_1}{x}{u}, …, \subst{x_n}{x}{u});$$
  and "substituting" every variable to the right of a generator statement,  
    \begin{align*} 
      \subst{(y₁, …,yₘ ← f(x_1, …, x_n) ⨾ t)}{x}{u} =
      y₁, …,yₘ ← f(\subst{x₁}{x}{u}, …, \subst{xₙ}{x}{u}) ⨾ \subst{t}{x}{u};
    \end{align*}
  where we use $\subst{y}{x}{u}$ to mean $u$ if $y = x$, and $y$ otherwise.
\end{definition}
\end{toappendix}

\begin{toappendix}
    \begin{propositionrep}\label{prop:substitution-welldef}
      "Substitution" is well-defined under the "interchange axiom".
    \end{propositionrep}
    \begin{proof}
        Let two terms be related by the "interchange axiom".
        \begin{align*}
          \left.\begin{array}{l}
          \vec{u} ← f(\vec{x}) \\
          \vec{v} ← g(\vec{y}) \\
          \cont
        \end{array}\right| & ≈
        \left.\begin{array}{l}
          \vec{v} ← g(\vec{y}) \\
          \vec{u} ← f(\vec{x}) \\    
          \cont
        \end{array}\right|;
        \end{align*}
        
        We must have $xᵢ ≠ vⱼ$ and $yᵢ ≠ uⱼ$. Then, using that $x$ and $u$ are in
        the input context while $\vec{u}$ and $\vec{v}$ appear in the output of a
        generator statement, we know they must be different variables. As a
        consequence, $\subst{xᵢ}{x}{u} ≠ vⱼ$ and $\subst{yᵢ}{x}{u} ≠ uⱼ$, and the
        following two terms are related by the "interchange axiom".
        \begin{align*}
          \left.\begin{array}{l}
          \vec{u} ← f(\subst{\vec{x}}{x}{u}) \\
          \vec{v} ← g(\subst{\vec{y}}{x}{u}) \\
          \cont
        \end{array}\right| & ≈
        \left.\begin{array}{l}
          \vec{v} ← g(\subst{\vec{y}}{x}{u}) \\
          \vec{u} ← f(\subst{\vec{x}}{x}{u}) \\    
          \cont
        \end{array}\right|;
        \end{align*}
        This concludes the proof.
      \end{proof}
    \end{toappendix}
    
    \begin{toappendix}
    \begin{definition}[Congruence]
      A relation on \kl{arrow notation preterms}, $(≈)$, is a \intro{congruence}
      when it relates only terms over the same context and
      \begin{itemize}
        \item it is reflexive on return statements, meaning
        $\return(x₁,…,xₙ) ≈ \return(x₁,…,xₙ);$
        \item and it is preserved by statements, meaning that $t ≈ t'$ implies
        $$(y₁,…,yₘ ← f(x₁,…,xₙ) 𑊩 t) ≈ (y₁,…,yₘ ← f(x₁,…,xₙ) 𑊩 t').$$
      \end{itemize}
    \end{definition}
    \end{toappendix}
     
\begin{toappendix}
\subsection{Algebra of arrow notation}
\label{sec:algebra-arrow-notationa}

Operations and reasoning on \arrowNotationTerms{} will benefit from a compact
notation. As humans, variables make a formal language easier to read; however,
for formal reasoning and computer implementation, variable handling quickly
becomes cumbersome. This is why it can be convenient to keep two versions of the
same syntax: one with variables and one based on combinators. The situation is
similar to that of the variable-based lambda calculus versus de Bruijn syntax.

In this part of the appendix, we work with the combinator version of
\arrowNotation{} (\Cref{fig:var:arrowNotation}), and we describe the algebra of
these combinators: these are all results we will need for the proof of freeness
(\Cref{th:doNotationInternalLanguage}).

\begin{remark}
  From this point on, we write lists of types in bold, $\vec{x} = X₁,...,X_x$;
  the length of a list, $\vec{x}$, will usually be denoted by the same
  roman letter, $x$.
\end{remark}

\begin{definition}[Finite function combinators]
  \defining{linkCdot}%
  Let us fix notation for some operations on functions between sets of finite
  cardinality.
  \begin{enumerate}
    \item Given any two numbers, the ""symmetry"" $\vsym{x}{y} ፡ x + y → y + x$
    is defined by $\vsym{x}{y}(i) = i - x$ when $i ≤ y$, and $\vsym{x}{y}(i) = i
    - y$ when $i > y$.
    \item Given any two numbers, the ""left inclusion"", $\vinc{x}{k} ፡ x → x + k$, is
    defined by $\vinc{x}{k}(i) = i$; while the ""right inclusion"", 
    $\vrnc{x}{k} ፡ x → k + x$, is defined by
    $\vrnc{x}{k}(i) = i + x$.
    \item Given any function $α ፡ x → y$, its ""left whiskering"" by a number
    $k$ is a function $k ⋉ α ፡ k + x → k + y$ defined by $(k ⋉ α)(i) = i$ when
    $i ≤ k$, and $(k ⋉ α)(i) = α(i - k)$ when $i > k$; its ""right whiskering""
    by a number $k$ is a function $α ⋊ k ፡ x + k → y + k$ defined by $(α ⋊ k)(i)
    = α(i)$ when $i ≤ x$, and $(α ⋊ k)(i) = i - x + y$ when $i > x$.
    \item The identity function $\id_x ፡ x → x$ is defined by $\id_x(i) = i$.
    \item Given two functions, $α ፡ x → y$ and $β ፡ y → z$, their composition is
    the function $β ∘ α ፡ x → z$ defined as $(β ∘ α)(i) = β(α(i))$.
    \item When needed, we encode functions between sets of finite cardinality as lists
    of integers. The generic function is written as follows,
    $$α = [α_1,...,α_m] ፡ m → n, \mbox{ where }α_i ∈ \{1,...,n\}\mbox{ for } i ∈
    \{1,...,m\}.$$
  \end{enumerate}
\end{definition}

\subsection{Rewiring}
\begin{proposition}[Rewiring is an action]
    \AP\phantomintro{rewiring is an action} "Rewiring" defines an action of finite
    function composition, in the sense that $\id_x ∗ t = t$ and $(φ ∘ ψ) ∗ t = φ ∗
    ψ ∗ t$.
\end{proposition}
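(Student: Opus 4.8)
The plan is to prove both equations by structural induction on the \kl{arrow notation preterm} $t$, working with the combinator presentation of \Cref{fig:var:arrowNotation} on which \kl{rewiring} is defined. The one thing to get right before starting is the strength of the induction hypothesis: I will quantify over \emph{all} preterms, over \emph{every} context, and over \emph{all} (composable) finite functions $φ, ψ$ at once. This is forced by the generator case, where the tail $s$ lives over the context enlarged by the fresh outputs $Y_1, \dots, Y_{\leny}$, and where \kl{rewiring} acts on $s$ not by $φ$ but by the whiskered function $φ + \id[\leny]$; the hypothesis must therefore already be available at the larger context and for the whiskered functions.

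For the unit law $\id[x] ∗ t = t$, the base case $t = \vret(α)$ is immediate, since $\id[x] ∗ \vret(α) = \vret(\id[x] ∘ α) = \vret(α)$ by left unitality of composition. In the generator case $t = f(α) 𑊩 s$, unfolding the definition gives $\id[x] ∗ (f(α) 𑊩 s) = f(\id[x] ∘ α) 𑊩 ((\id[x] + \id[\leny]) ∗ s)$; I then rewrite $\id[x] + \id[\leny] = \id[x+\leny]$ and apply the induction hypothesis to the subterm $s$ (over the context of size $x + \leny$) to get $(\id[x+\leny]) ∗ s = s$, closing the case.

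For the action law $(φ ∘ ψ) ∗ t = φ ∗ ψ ∗ t$, the base case reduces to associativity of composition, both sides rewriting to $\vret((φ ∘ ψ) ∘ α) = \vret(φ ∘ (ψ ∘ α))$. In the generator case, the left-hand side unfolds to $f((φ ∘ ψ) ∘ α) 𑊩 (((φ ∘ ψ) + \id[\leny]) ∗ s)$ and the right-hand side, unfolding twice, to $f(φ ∘ (ψ ∘ α)) 𑊩 ((φ + \id[\leny]) ∗ ((ψ + \id[\leny]) ∗ s))$. The head generators agree by associativity; for the tails, I apply the induction hypothesis to $s$ with the whiskered pair to rewrite $(φ + \id[\leny]) ∗ ((ψ + \id[\leny]) ∗ s)$ as $((φ + \id[\leny]) ∘ (ψ + \id[\leny])) ∗ s$, and then finish with the functoriality identity $(φ + \id[\leny]) ∘ (ψ + \id[\leny]) = (φ ∘ ψ) + \id[\leny]$.

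The argument thus rests on two supporting facts about the sum $(+)$ of finite functions: the unit law $\id[x] + \id[y] = \id[x+y]$ and the interchange law $(φ ∘ ψ) + (φ' ∘ ψ') = (φ + φ') ∘ (ψ + ψ')$, which I instantiate at $φ' = ψ' = \id[\leny]$. Both are the usual bifunctoriality properties of the coproduct of finite functions and are checked directly from the defining formulas for composition and whiskering. There is no real obstacle beyond this bookkeeping: once the induction hypothesis is stated uniformly over contexts and functions, each inductive step is a one-line rewrite, so the main care is simply in tracking how the whiskering $(-) + \id[\leny]$ threads through the generator case.
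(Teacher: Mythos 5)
Your proof is correct and takes essentially the same approach as the paper, whose entire proof is the remark that the claim ``follows directly from the definition and manipulation of finite functions.'' Your structural induction --- with the hypothesis strengthened over all contexts and all composable pairs of functions, and closed by the bifunctoriality identities $\mathrm{id} + \mathrm{id} = \mathrm{id}$ and $(\varphi \circ \psi) + \mathrm{id} = (\varphi + \mathrm{id}) \circ (\psi + \mathrm{id})$ --- is precisely the argument the paper leaves implicit.
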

\begin{proof}
    Follows directly from the definition and manipulation of finite functions.
\end{proof}
   \subsection{Composition}

  \begin{proposition}[Rewiring preserves composition]
    \defining{linkRewiringPreservesComposition}
    "Rewiring" preserves "composition", 
    $$φ ∗ (s ⨾ t) = (φ ∗ s) ⨾ t.$$
  \end{proposition}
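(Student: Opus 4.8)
The plan is to argue by structural induction on the first argument $s$ of the \kl{composition}, following the very recursion by which composition is defined. Recall that $s ⨾ t$ grafts $t$ onto the return-leaves of $s$: on a return statement, $\vret(α) ⨾ t = α \ast t$, so that $t$ is re-wired to fetch its inputs through $α$ from the ambient context of $s$; and on a generator statement, $(f(β) 𑊩 s') ⨾ t = f(β) 𑊩 (s' ⨾ t)$, so that the leading statement is retained and composition descends into the continuation. The conceptual reason the identity should hold is that $φ$ merely reindexes the ambient input context, which $s$ consumes in full before $t$ is ever reached; hence $φ$ can only affect the $s$-part of $s ⨾ t$.

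For the base case $s = \vret(α)$ with $α ፡ \fsety → \fsetx$, I would reduce both sides to $(φ ∘ α) \ast t$. On the left, $φ \ast (\vret(α) ⨾ t) = φ \ast (α \ast t)$, and then the functoriality of \kl{rewiring} (the proposition that rewiring is an action, giving $(φ ∘ α) \ast t = φ \ast (α \ast t)$) rewrites this as $(φ ∘ α) \ast t$. On the right, unfolding the rewiring of a return statement and then the base clause of composition gives $(φ \ast \vret(α)) ⨾ t = \vret(φ ∘ α) ⨾ t = (φ ∘ α) \ast t$. This base case is the only point at which functoriality of rewiring is invoked.

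For the inductive step $s = f(β) 𑊩 s'$, with $f$ having $\leny$ outputs and $β ፡ \fsetk → \fsetx$, I would unfold the left-hand side by the generator clause of composition and then the generator clause of rewiring: $φ \ast ((f(β) 𑊩 s') ⨾ t) = φ \ast (f(β) 𑊩 (s' ⨾ t)) = f(φ ∘ β) 𑊩 ((φ + \id[\leny]) \ast (s' ⨾ t))$. The induction hypothesis, applied to $s'$ with the whiskered function $φ + \id[\leny] ፡ \fsetx + \fsety → \fsetu + \fsety$, turns the inner expression into $((φ + \id[\leny]) \ast s') ⨾ t$. The right-hand side unfolds symmetrically --- rewire the generator statement, then apply the generator clause of composition --- to the same term $f(φ ∘ β) 𑊩 (((φ + \id[\leny]) \ast s') ⨾ t)$, closing the induction.

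The one delicate point is the passage to the whiskered function $φ + \id[\leny]$ in the inductive step: I must confirm that the induction hypothesis is being applied to the correctly reindexed \emph{extended} context $\fsetx + \fsety$ --- the original inputs together with the $\leny$ fresh outputs of $f$ --- and that the $\id[\leny]$ summand fixes those fresh variables, so that the later grafting of $t$ still lands at the same leaves. Beyond this bookkeeping, the argument is a mechanical unfolding of the two defining clauses each of \kl{composition} and \kl{rewiring}, with no case split beyond the two term constructors.
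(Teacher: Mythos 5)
Your proposal is correct and follows essentially the same route as the paper's own proof: structural induction on $s$, with the return case settled by the action property of rewiring ($φ ∗ α ∗ t = (φ ∘ α) ∗ t$) and the generator case by unfolding the two defining clauses and applying the induction hypothesis to the continuation with the extended function $φ + \id[\leny]$ (which the paper writes as the whiskering $φ ⋊ y$). The bookkeeping point you flag is handled identically in the paper, since the statement is quantified over all rewiring functions and the induction hypothesis may therefore be instantiated at the extended context.
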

  \begin{proof}
    We proceed by structural induction over $s$. Let it be a return statement,
    $s = \vret(α)$. We reason by \emph{(i)} the "definition of rewiring";
    \emph{(ii)} the fact that "rewiring is an action"; and \emph{(iii,iv)} the
    "definition of rewiring".
    $$
      φ ∗ (\vret(α) ⨾ t) \overset{(i)}{=}
      φ ∗ α ∗ t \overset{(ii)}{=}
      (φ ∘ α) ∗ t \overset{(iii)}{=}
      \vret(φ ∘ α) ⨾ t \overset{(iv)}{=}
      (φ ∗ \vret(α)) ⨾ t.
    $$
    Let it be a generator statement. We reason by \emph{(i)} the
    "definition of rewiring"; \emph{(ii)} the inductive hypothesis; and
    \emph{(iii)} the "definition of rewiring".
    \begin{align*}
      & φ ∗ (f(γ) 𑊩 s ⨾ t)
      & \ \smasheq{\overset{\emph{(i)}}{=}}\ \\
      & f(φ ∘ γ) 𑊩 ((φ ⋊ y) ∗ (s ⨾ t))
      & \ \smasheq{\overset{\emph{(ii)}}{=}}\ \\
      & f(φ ∘ γ) 𑊩 ((φ ⋊ y) ∗ s) ⨾ t 
      & \ \smasheq{\overset{\emph{(iii)}}{=}}\ \\
      & (φ ∗ (f(γ) 𑊩 s)) ⨾ t.
    \end{align*}
    These two cases complete the proof.
  \end{proof}
  
  \begin{lemma}[Term composition is associative]
    \label{lemma:var:compositionAssociative}%
    \AP\phantomintro{Term composition is associative}
    \kl{Composition} is associative, $(s ⨾ t) ⨾ r = s ⨾ (t ⨾ r)$. 
  \end{lemma}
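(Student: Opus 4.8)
The plan is to prove the identity by structural induction on the first preterm $s$, carried out at the level of \kl{arrow notation preterms}. Since every equality I use is a strict identity of preterms, the conclusion descends at once to \kl{arrow notation terms}, the equivalence classes under the \kl{interchange axiom}. The induction is driven by the two defining clauses of \kl{composition}, namely $\vret(\alpha) \comp t = \alpha \ast t$ and $(f(\beta) \doline s) \comp t = f(\beta) \doline (s \comp t)$, which correspond exactly to the two possible shapes of $s$.

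First I would treat the base case $s = \vret(\alpha)$. Unfolding the return clause of \kl{composition} twice rewrites the left-hand side $(\vret(\alpha) \comp t) \comp r$ as $(\alpha \ast t) \comp r$, and the right-hand side $\vret(\alpha) \comp (t \comp r)$ as $\alpha \ast (t \comp r)$. The case therefore reduces to the single identity $(\alpha \ast t) \comp r = \alpha \ast (t \comp r)$, which is exactly \kl{rewiring preserves composition} instantiated at the finite function $\alpha$; I may assume it, as it is established just above.

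Next I would handle the inductive case $s = f(\beta) \doline s'$, assuming associativity for the strictly smaller preterm $s'$. Applying the generator clause of \kl{composition} repeatedly rewrites the left-hand side as
$$((f(\beta) \doline s') \comp t) \comp r \;=\; f(\beta) \doline ((s' \comp t) \comp r),$$
and the right-hand side as
$$(f(\beta) \doline s') \comp (t \comp r) \;=\; f(\beta) \doline (s' \comp (t \comp r)).$$
The inductive hypothesis gives $(s' \comp t) \comp r = s' \comp (t \comp r)$, and since the two preterms differ only inside the common generator prefix $f(\beta) \doline (-)$, they are equal. This exhausts the inductive definition, so associativity holds for all preterms and hence for terms.

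I expect the only non-routine ingredient to be the base case, whose entire content is \kl{rewiring preserves composition}; this is precisely why that proposition was proved first. The inductive step is pure bookkeeping, the one thing to keep an eye on being the contexts and types, so that the prefix $f(\beta)$ and its inclusion data are literally the same on both sides — which they are, as neither \kl{composition} nor the inductive hypothesis alters the leading generator.
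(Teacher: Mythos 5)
Your proof is correct and follows essentially the same route as the paper's: structural induction on the first term, with the return case reduced to $(\alpha \ast t) \comp r = \alpha \ast (t \comp r)$ and the generator case handled by unfolding the composition clause and applying the inductive hypothesis. If anything, your identification of the base case's key ingredient as \emph{rewiring preserves composition} is more precise than the paper's own citation, which attributes that step to ``rewiring is an action.''
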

  \begin{proof}
    Let us proceed by induction on the first term. Let it be a return statement,
    $s = \vret(α)$; we use \emph{(i)} the definition of "composition",
    \emph{(ii)} that "rewiring is an action", and \emph{(iii)} the
    definition of "composition".
    \begin{align*}
      (\vret(α) ⨾ t) ⨾ r \overset{\emph{(i)}}{=} 
      (α ∗ t) ⨾ r \overset{\emph{(ii)}}{=}
      α ∗ (t ⨾ r) \overset{\emph{(iii)}}{=}
      \vret(α) ∗ (t ⨾ r).
    \end{align*}
    Let it be a generator statement. We use \emph{(i,ii,iv)} the definition of
    "composition"; and \emph{(iii)} the inductive hypothesis.
    \begin{align*}
      & ((f(γ) 𑊩 s) ⨾ t) ⨾ r
      & \ \smasheq{\overset{\emph{(i)}}{=}}\ \\
      & (f(γ) 𑊩 (s ⨾ t)) ⨾ r
      & \ \smasheq{\overset{\emph{(ii)}}{=}}\ \\
      & f(γ) 𑊩 ((s ⨾ t) ⨾ r)
      & \ \smasheq{\overset{\emph{(iii)}}{=}}\ \\
      & f(γ) 𑊩 (s ⨾ (t ⨾ r))
      & \ \smasheq{\overset{\emph{(iv)}}{=}}\ \\
      & (f(γ) 𑊩 s) ⨾ (t ⨾ r).
    \end{align*}
    These two cases conclude the proof.
  \end{proof}
  
  \begin{lemma}[Term composition is unital]\label{lemma:var:compositionUnital}%
    \AP\kl{Term composition is unital} with returning the identity function,
    $\vret(\id[x]) ፡ \vec{x} → \vec{x}$. That is, for each term $s ፡ \vec{x} →
    \vec{y}$, we have $$\vret(\id[x]) ⨾ s = s = s ⨾ \vret(\id[y]).$$
  \end{lemma}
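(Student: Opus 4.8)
The plan is to prove the two unit laws separately, since \kl{composition} is defined by recursion on its \emph{first} argument --- recall $\vret(\alpha) \comp t = \alpha \ast t$ and $(f(\gamma) \doline s) \comp t = f(\gamma) \doline (s \comp t)$ --- which makes the left-hand law immediate but forces the right-hand law to proceed by induction.

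For the left identity I would unfold the definition of composition on a return statement and then invoke that \kl{rewiring is an action}, obtaining $\vret(\id[x]) \comp s = \id[x] \ast s = s$ in a single line, the last equality being precisely the unit clause $\id[x] \ast s = s$ of the action.

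For the right identity $s \comp \vret(\id[y]) = s$ I would proceed by structural induction on the preterm $s$. In the base case $s = \vret(\alpha)$, with $\alpha \colon y \to x$, unfolding composition gives $\alpha \ast \vret(\id[y])$, and the \kl{definition of rewiring}, which sets $\varphi \ast \vret(\beta) = \vret(\varphi \circ \beta)$, rewrites this as $\vret(\alpha \circ \id[y]) = \vret(\alpha) = s$, using $\alpha \circ \id[y] = \alpha$. In the inductive case $s = f(\gamma) \doline s'$, the definition of composition yields $f(\gamma) \doline (s' \comp \vret(\id[y]))$; the continuation $s'$ carries the same output type $\vec{y}$ as $s$, because a generator statement leaves the output type untouched, so the induction hypothesis gives $s' \comp \vret(\id[y]) = s'$ and the expression collapses to $f(\gamma) \doline s' = s$.

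I expect no genuine obstacle: both laws reduce to the unit clause of the action of rewiring together with the neutrality of identity finite functions under composition. The two points deserving attention are the asymmetry that singles out the right law for an inductive treatment, and the bookkeeping of the output length --- one must check that in the generator case the continuation really returns into $\vec{y}$, so that the same $\vret(\id[y])$ serves as unit in the induction hypothesis. Since every rewriting step is a literal equality of preterms, the result descends unchanged to \kl{arrow notation terms}.
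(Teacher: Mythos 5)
Your proposal is correct and follows essentially the same route as the paper's proof: left unitality falls out immediately from the definition of composition together with the unit clause of the rewiring action, and right unitality goes by structural induction on $s$, with the return case reducing to $\vret(\alpha \circ \id[y]) = \vret(\alpha)$ and the generator case invoking the inductive hypothesis on the continuation. Your extra care about the output-type bookkeeping and the descent to equivalence classes under interchange is sound but adds nothing beyond what the paper's argument already implicitly uses.
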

  \begin{proof}
    Left unitality holds using \emph{(i)} the definition of \kl{term composition},
    and \emph{(ii)} that \kl{rewiring} is an action.
    $$\vret(\id[x]) ⨾ s \overset{\emph{(i)}}{=} \id[x] ∗ s
    \overset{\emph{(ii)}}{=} s.$$
    For the second case, we proceed by induction on $s$. Let it be a return
    statement, $\vret(α) ⨾ \vret(\id[y]) = \vret(α \circ \id[y]) = \vret(α)$. Let
    it be a generator statement, $(f(γ) 𑊩 s) ⨾ \vret(y) = f(γ) 𑊩 (s ⨾ \vret(y))
    = f(γ) 𑊩 s$.
  \end{proof} 
  
  \begin{proposition}[Category of arrow notation terms]
  \label{prop:var:rawterm-category}\label{prop:rawterm-category}%
  \kl{Arrow notation terms} over a \kl{signature} \(Σ\) form a category with
  \kl{composition}, $\prearrow(Σ)$.
  \end{proposition}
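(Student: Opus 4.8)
The plan is to assemble the categorical data and then reduce each axiom to facts already established, so that the only genuine work is the well-definedness of composition. Take as objects the lists of generating types, i.e. the elements of $\mathsf{List}(Σ_{type})$; as morphisms $\vec{X} → \vec{Y}$ the \kl{arrow notation terms} of that type, that is, \kl{interchange axiom}-equivalence classes of preterms; as identity on $\vec{X}$ the class of $\vret(\id[x])$; and define composition on classes by $[s] \comp [t] = [s \comp t]$. Granting that this is well-defined, the category laws are immediate, since \Cref{lemma:var:compositionAssociative} and \Cref{lemma:var:compositionUnital} prove $(s \comp t) \comp r = s \comp (t \comp r)$ and $\vret(\id[x]) \comp s = s = s \comp \vret(\id[y])$ as \emph{strict} equalities of preterms, which descend verbatim to equivalence classes.

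The substance is therefore that the congruence $\approx$ is compatible with $\comp$ in each argument; composing the two compatibilities by transitivity yields well-definedness. I would prove them separately: \textbf{(A)} $s \approx s'$ implies $s \comp t \approx s' \comp t$, and \textbf{(B)} $t \approx t'$ implies $s \comp t \approx s \comp t'$. Because $\approx$ is the minimal \kl{congruence} containing the \kl{interchange axiom}, in each case it suffices to transport a single generating interchange move and then close under the fact that a \kl{congruence} is preserved by statements.

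For \textbf{(A)} I induct on $s$. In the base case $s$ and $s'$ are the two sides of the \kl{interchange axiom} sharing a common tail $u$; unfolding $\comp$ appends $t$ to $u$ on both sides, and since the right-hand tail carries a rewiring by $\id[x] + \sigma$, the proposition that \kl{rewiring preserves composition} rewrites $((\id[x]+\sigma) \ast u) \comp t$ as $(\id[x]+\sigma) \ast (u \comp t)$. The two composites are then literally the two sides of the \kl{interchange axiom} over the new tail $u \comp t$, hence $\approx$-related. When the move lies deeper, $s = f(\gamma) \doline s_0$ and $s' = f(\gamma) \doline s_0'$ with $s_0 \approx s_0'$, so $s \comp t = f(\gamma) \doline (s_0 \comp t)$ and the inductive hypothesis together with preservation by statements closes the case.

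For \textbf{(B)} I again induct on $s$. If $s = \vret(\alpha)$ then $s \comp t = \alpha \ast t$ and $s \comp t' = \alpha \ast t'$, so the goal reduces to the auxiliary fact that \kl{rewiring preserves interchange}, namely $t \approx t'$ implies $\varphi \ast t \approx \varphi \ast t'$; if $s = f(\gamma) \doline s_0$ the inductive hypothesis with preservation by statements suffices. I would establish that auxiliary fact by the same top/deep induction, checking that applying $\varphi$ to a single interchange move again yields an instance of the axiom — concretely the naturality of the inclusions and of the symmetry $\sigma$ in $\varphi$, which are routine finite-function identities. I expect the genuine obstacle to be exactly this bookkeeping in the base cases of \textbf{(A)} and of the auxiliary fact: one must verify that, after appending $t$ or applying $\varphi$, the inclusions and the symmetry decorating the two swapped generators reassemble into a bona fide instance of the \kl{interchange axiom} rather than something merely informally equal to it; the remaining cases are mechanical transport of the inductive hypothesis through the congruence.
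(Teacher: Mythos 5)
Your proposal is correct and takes essentially the same route as the paper: the category laws are exactly the paper's lemmas that term composition is associative and unital (proved as strict preterm equalities), and your well-definedness argument --- splitting into pre- and post-composition compatibility, handling the top-level interchange move via the fact that rewiring preserves composition, and reducing the $\vret(\alpha)$ case of post-composition to the fact that rewiring preserves interchange --- reproduces, almost step for step, the paper's separate proposition that composition preserves interchange. The only organisational difference is that the paper's stated proof of this proposition cites just the two composition lemmas and relegates well-definedness to that later proposition, whereas you fold it into the proof itself (and are slightly more explicit about the case where the interchange move occurs under a prefix of generator statements).
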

  \begin{proof}
    Follows from \Cref{lemma:var:compositionUnital,lemma:var:compositionAssociative}.
  \end{proof}
   \subsection{Whiskering}

\begin{definition}[Whiskering]
    \label{def:var:whiskering}%
    \label{def:whiskering}%
    \defining{linkWhiskering}%
    \AP The \emph{left \intro{whiskering}} of a term $t : \vec{x} → \vec{y}$ by a
    list of types $\vec{z} = Z₁,..,Z_z$ is the term $(\vec{z} ⋊ t) ፡
    \vec{z},\vec{x} → \vec{z},\vec{y},$ inductively defined as follows.
    \begin{itemize}
      \item $\vec{z} ⋉ \vret(α) = \vret(z ⋉ α)$.
      \item $\vec{z} ⋉ (f(γ) 𑊩 s) = f(\vrinc{k} ∘ γ) 𑊩 (\vec{z} ⋉ s)$.
    \end{itemize}
  
    The \emph{right whiskering} of a term $t ፡ \vec{x} → \vec{y}$ by a list of
    types $\vec{z} = Z₁,..,Z_z$ is the term $(t ⋊ \vec{z}) ፡ \vec{x},\vec{z} →
    \vec{y},\vec{z},$ inductively defined as follows.
    \begin{itemize}
      \item $\vret(α) ⋊ \vec{z} = \vret(α ⋊ z)$.
      \item $(f(γ) 𑊩 s) ⋊ \vec{z} = f(\vlinc{z} ∘ γ) 𑊩 \vret(x₂ ⋉ \vsym{z}{m}) ⨾ (s ⋊ \vec{z})$.
    \end{itemize}
    Note how the last case requires us to reposition the outputs of the generator.
  \end{definition}
  
  \begin{proposition}[Rewiring preserves whiskering]%
    \label{prop:whiskeringpreservesrewiring}%
    \defining{linkWhiskeringPreservesRewiring}%
    \AP\phantomintro{whiskering preserves rewiring}
    $$(k ⋊ α) ∗ (\vec{k} ⋊ s) = \vec{k} ⋊ (α ∗ s).$$
  \end{proposition}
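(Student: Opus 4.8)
The plan is to prove the equation by structural induction on the preterm $s$, keeping it universally quantified over the rewiring function, since the generator step will need to apply the induction hypothesis at a shifted function rather than at $\alpha$ itself. Throughout I read $\vec{k} \rtimes (-)$ as the left whiskering of Definition~\ref{def:whiskering} (prepending the list $\vec{k}$), and $k \ltimes \alpha$, $\alpha \rtimes \ell$ as the left/right whiskerings of finite functions. The whole argument amounts to unfolding the definitions of whiskering and rewiring on each side and reducing the resulting term-level equation to a few elementary identities between finite functions.

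For the base case $s = \vret(\beta)$, the left-hand side unfolds, by the return clauses of whiskering and then rewiring, to $\vret\big((k \ltimes \alpha) \circ (k \ltimes \beta)\big)$, whereas the right-hand side unfolds to $\vret\big(k \ltimes (\alpha \circ \beta)\big)$. These agree precisely because left whiskering of finite functions preserves composition, $(k \ltimes \alpha) \circ (k \ltimes \beta) = k \ltimes (\alpha \circ \beta)$, which is immediate on indices.

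For the inductive case $s = f(\gamma) \doline s'$, I would first push the whiskering inside using the generator clause $\vec{k} \rtimes (f(\gamma) \doline s') = f(\nu \circ \gamma) \doline (\vec{k} \rtimes s')$, where $\nu$ is the right inclusion that shifts the old context past $\vec{k}$, and then rewire by $k \ltimes \alpha$; symmetrically, on the right-hand side I would first rewire by $\alpha$ and then whisker by $\vec{k}$. Comparing the two generator statements leaves two obligations, with $\ell$ the number of outputs of $f$. The argument lists match provided $(k \ltimes \alpha) \circ \nu = \nu \circ \alpha$, i.e.\ the right inclusion is natural with respect to whiskering (the ``whiskering interacts with inclusion'' law). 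The continuations require $\big((k \ltimes \alpha) \rtimes \ell\big) \ast (\vec{k} \rtimes s') = \vec{k} \rtimes \big((\alpha \rtimes \ell) \ast s'\big)$; this is exactly the induction hypothesis for $s'$ instantiated at the rewiring $\alpha \rtimes \ell$, once we rewrite the whiskered function using the commutation of the two whiskerings, $(k \ltimes \alpha) \rtimes \ell = k \ltimes (\alpha \rtimes \ell)$.

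The load-bearing steps are therefore the two finite-function identities in the generator case: the naturality of the inclusion and the commutation $(k \ltimes \alpha) \rtimes \ell = k \ltimes (\alpha \rtimes \ell)$. These are what let the induction hypothesis at the shifted rewiring $\alpha \rtimes \ell$ align with the continuation produced on the whiskered side, and they both follow from a routine case split on the index according to the three ranges of positions (the $k$ prepended ones, the original ones, and the $\ell$ freshly bound outputs of $f$). I expect no conceptual difficulty here; the only real hazard is bookkeeping — keeping the shifts by $k$ and by $\ell$ consistent across the finite functions — so the proof is essentially a disciplined unfolding guided by these identities.
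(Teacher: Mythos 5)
Your proof is correct and follows essentially the same route as the paper: structural induction on $s$, unfolding the whiskering and rewiring clauses, and reducing everything to elementary identities between finite functions (the return case matches the paper's chain of equalities verbatim). In fact it goes further than the paper's written proof, which only carries out the return case; your generator case --- with the induction hypothesis kept universally quantified over the rewiring function so it can be applied at $\alpha \rtimes \ell$, plus the two identities $(k \ltimes \alpha) \circ \nu = \nu \circ \alpha$ and $(k \ltimes \alpha) \rtimes \ell = k \ltimes (\alpha \rtimes \ell)$ --- supplies exactly the inductive step the paper omits, and it is sound.
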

  \begin{proof}
    We proceed by induction on the term $s$. Let it be a return statement, $s =
    \vret(γ)$. We use \emph{(i)} the definition of \kl{whiskering}, \emph{(ii)}
    the \kl{definition of rewiring}, \emph{(iii)} finite function composition,
    \emph{(iv)} the definition of \kl{whiskering}, and \emph{(v)} the
    \kl{definition of rewiring}.
    \begin{align*}
      & (k ⋊ α) ∗ (\vec{k} ⋊ \vret(γ))
      & \smasheq{\overset{\emph{(i)}}{=}}\\
      & (k ⋊ α) ∗ \vret(k ⋊ γ)
      & \smasheq{\overset{\emph{(ii)}}{=}}\\
      & \vret((k ⋊ α) ∘ (k ⋊ γ))
      & \smasheq{\overset{\emph{(iii)}}{=}}\\
      & \vret(k ⋊ (α ∘ γ))
      & \smasheq{\overset{\emph{(iv)}}{=}}\\
      & \vec{k} ⋊ \vret(α ∘ γ)
      & \smasheq{\overset{\emph{(v)}}{=}}\\
      & \vec{k} ⋊ (α ∗ \vret(γ)).
    \end{align*}
  \end{proof}
  
  \begin{proposition}[Whiskering is functorial]%
    \label{prop:whiskeringFunctorial}%
    $$(\vec{k} ⋊ s) ⨾ (\vec{k} ⋊ t) = \vec{k} ⋊ (s ⨾ t).$$
  \end{proposition}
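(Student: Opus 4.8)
The plan is to prove the identity by structural induction on the first term $s$, exactly as \Cref{lemma:var:compositionAssociative} and \Cref{prop:whiskeringpreservesrewiring} are proved. Both left whiskering $\vec{k} ⋊ (-)$ and \kl{composition} $(⨾)$ recurse on their left argument, so an induction on $s$ makes every step a single unfolding of one of these two definitions, closed off by one appeal to the inductive hypothesis. Since the claimed equation is a syntactic identity of preterms that never invokes the \kl{interchange axiom}, it descends to \kl{arrow notation terms} automatically, and it suffices to establish it on representatives.

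For the base case $s = \vret(α)$, I would unfold the return clause of left whiskering, $\vec{k} ⋊ \vret(α) = \vret(k ⋊ α)$, and the return clause of \kl{composition}, $\vret(β) ⨾ r = β ∗ r$. This rewrites the left-hand side $(\vec{k} ⋊ \vret(α)) ⨾ (\vec{k} ⋊ t)$ into $(k ⋊ α) ∗ (\vec{k} ⋊ t)$, and the right-hand side $\vec{k} ⋊ (\vret(α) ⨾ t)$ into $\vec{k} ⋊ (α ∗ t)$. These two are equal by \Cref{prop:whiskeringpreservesrewiring} applied to $t$; so the base case is not a new computation but precisely the content of the previous proposition.

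For the inductive case $s = f(γ) 𑊩 s'$, I would push both sides through the generator clauses. On the right, unfolding \kl{composition} gives $(f(γ) 𑊩 s') ⨾ t = f(γ) 𑊩 (s' ⨾ t)$, and then the generator clause of whiskering gives $f(\vrinc{k} ∘ γ) 𑊩 (\vec{k} ⋊ (s' ⨾ t))$. On the left, unfolding whiskering first and then \kl{composition} gives $f(\vrinc{k} ∘ γ) 𑊩 ((\vec{k} ⋊ s') ⨾ (\vec{k} ⋊ t))$. The two expressions share the same head $f(\vrinc{k} ∘ γ)$ verbatim, so the goal reduces to equality of the two continuations, which is exactly the inductive hypothesis for the structurally smaller $s'$ (taken over the context extended by the outputs of $f$).

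The only delicate point is the index bookkeeping in the generator case: $s'$ lives over the context enlarged by $f$'s outputs, so one must confirm the inductive hypothesis is applied at that extended context and that the relabelling $\vrinc{k} ∘ γ$ produced by unfolding whiskering is literally identical on both sides. Because both derivations introduce the same head, no reindexing lemma beyond \Cref{prop:whiskeringpreservesrewiring} is needed, and the residual difficulty is purely notational rather than mathematical.
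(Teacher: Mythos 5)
Your proof is correct and takes essentially the same route as the paper's: structural induction on $s$, with the base case $s = \vret(α)$ reduced exactly to \Cref{prop:whiskeringpreservesrewiring} after unfolding the return clauses of whiskering and composition, and the generator case discharged by unfolding both definitions to expose the common head $f(\vrinc{k} ∘ γ)$ and applying the inductive hypothesis to the continuation. Your added remarks --- that the identity is a preterm-level computation descending to the quotient, and that the inductive hypothesis is invoked at the extended context --- are points the paper leaves implicit, but they do not change the argument.
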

  \begin{proof}
    We proceed by induction on the first term, $s$. Let it be a return statement,
    $s = \vret(α)$. We use \emph{(i)} the definition of "whiskering"; \emph{(ii)}
    the "definition of rewiring"; \emph{(iii)} that "whiskering preserves rewiring";
    and \emph{(iv)}  the "definition of rewiring".
    \begin{align*}
      & (\vec{k} ⋊ \vret(α)) ⨾ (\vec{k} ⋊ t) 
      & \smasheq{\overset{\emph{(i)}}{=}}\\
      & \vret(k ⋊ α) ⨾ (\vec{k} ⋊ t)
      & \smasheq{\overset{\emph{(ii)}}{=}}\\
      & (k ⋊ α) ∗ (\vec{k} ⋊ t) 
      & \smasheq{\overset{\emph{(iii)}}{=}}\\
      & \vec{k} ⋊ (α ∗ t)
      & \smasheq{\overset{\emph{(iv)}}{=}}\\
      & \vec{k} ⋊ (\vret(α) ⨾ t).
    \end{align*}  
    Let it be a generator statement, $s = f(γ) 𑊩 s'$. We use \emph{(i)} the
    definition of "whiskering", \emph{(ii)} the inductive hypothesis, and
    \emph{(iii)} the definition of "whiskering".
    \begin{align*}
      & (\vec{k} ⋊ (f(γ) 𑊩 s)) ⨾ (\vec{k} ⋊ t)
      & \smasheq{\overset{\emph{(i)}}{=}}\\
      & f(\vrinc{k} ∘ γ) 𑊩 (\vec{k} ⋊ s) ⨾ (\vec{k} ⋊ t)
      & \smasheq{\overset{\emph{(ii)}}{=}}\\
      & f(\vrinc{k} ∘ γ) 𑊩 (\vec{k} ⋊  (s ⨾ t))
      & \smasheq{\overset{\emph{(iii)}}{=}}\\
      & \vec{k} ⋊ (f(γ) 𑊩 s ⨾ t).
    \end{align*}
    These two cases conclude the proof.
  \end{proof}
  
  \begin{proposition}
    The category of "arrow notation terms", $\preArrow(Σ)$, is a \premonoidalCategory{}.
  \end{proposition}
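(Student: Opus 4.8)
The plan is to exhibit the \premonoidalCategory{} structure on $\preArrow(Σ)$ in its strict form. I take the objects to be the lists $\mathsf{List}(Σ_{type})$, the tensor on objects to be list concatenation, and the unit to be the empty list; since concatenation is strictly associative and strictly unital, the associator and unitors will be chosen to be identities, and the work reduces to producing a \binoidalCategory{} and then checking that these identity structural maps are natural and central. For the \binoidalCategory{} structure I use, for each list $\vec{z}$, the left and right \kl{whiskering} operations of \Cref{def:whiskering} as the whiskering endofunctors $\vec{z} \ltimes (-)$ and $(-) \rtimes \vec{z}$; they agree with concatenation on objects by construction.

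First I would check that each whiskering is a genuine functor. Preservation of \kl{composition} is exactly \Cref{prop:whiskeringFunctorial} for the left whiskering, and its right-handed mirror is proved by the same induction. Preservation of identities is immediate from the definitions: left-whiskering the unit $\vret(\id[x])$ yields $\vret(z \ltimes \id[x])$, and $z \ltimes \id[x] = \id[z+x]$ as a combinator identity, so $\vec{z} \ltimes \id[\vec{x}] = \id[\vec{z},\vec{x}]$, and dually on the right. This already produces a \binoidalCategory{} whose object tensor is strictly associative and unital.

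The substance of the argument is the coherence of the whiskerings with the strict object tensor, which I would establish by three structural inductions on a term $t$:
\begin{enumerate}
  \item associativity of whiskering, $\vec{z} \ltimes (\vec{w} \ltimes t) = (\vec{z},\vec{w}) \ltimes t$ and its mirror $(t \rtimes \vec{z}) \rtimes \vec{w} = t \rtimes (\vec{z},\vec{w})$;
  \item the mixed compatibility $\vec{z} \ltimes (t \rtimes \vec{w}) = (\vec{z} \ltimes t) \rtimes \vec{w}$;
  \item the unit laws, i.e.\ that whiskering by the empty list is the identity functor.
\end{enumerate}
In each case the $\vret$-statement base case reduces to an identity among the finite-function combinators $\ltimes$, $\rtimes$, $\vlinc{}$, $\vrinc{}$, $\vsym{}{}$, while the generator step unfolds the definition of \kl{whiskering} and applies the inductive hypothesis; the facts that \kl{rewiring is an action} and that \kl{whiskering preserves rewiring} (\Cref{prop:whiskeringpreservesrewiring}) are what let me commute the outer whiskering past the inner reindexing $f(\vrinc{k} \circ γ)$ occurring in the generator clause. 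Together these laws exhibit the tensor as a strictly associative and unital \sesquifunctor{}, and they make the identity associator and unitors simultaneously natural and central, discharging the remaining \premonoidalCategory{} axioms.

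I expect the main obstacle to be case (ii), the mixed left/right compatibility. The right-whiskering clause for a generator, $(f(γ) 𑊩 s) \rtimes \vec{z} = f(\vlinc{z} \circ γ) 𑊩 \vret(x \ltimes \vsym{z}{m}) \comp (s \rtimes \vec{z})$, inserts a symmetry $\vsym{z}{m}$ to reposition the freshly produced outputs of $f$ past the whiskered-in variables $\vec{z}$, whereas left whiskering inserts no such permutation. Verifying that both sides of (ii) reposition the generator outputs in exactly the same way therefore comes down to a bookkeeping identity relating $\vsym{}{}$, $\vlinc{}$ and $\vrinc{}$ under composition, which I would isolate as a short combinator lemma before running the induction. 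Finally I would emphasise that we do \emph{not} prove the full interchange law for arbitrary morphisms --- this is precisely where a \premonoidalCategory{} differs from a monoidal one --- since the quotient defining the \arrowNotationTerms{} equates only statements over disjoint variables and so does not collapse $\preArrow(Σ)$ to a monoidal category.
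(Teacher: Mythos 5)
Your proof is correct and follows essentially the same route as the paper, whose entire printed proof is a one-line citation of \Cref{lemma:var:compositionUnital,lemma:var:compositionAssociative} combined implicitly with the whiskering machinery (\Cref{def:whiskering,prop:whiskeringFunctorial}) established immediately before the statement --- exactly the ingredients you assemble. You in fact go further than the paper, which never spells out the coherence obligations you list (associativity of whiskering, mixed left/right compatibility, unit laws, identity preservation, whence centrality and naturality of the identity structural maps), and your identification of the symmetry $\vsym{z}{m}$ inserted by the right-whiskering clause as the one delicate point is accurate.

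One correction to your closing remark: it is not true that the interchange-axiom quotient ``does not collapse $\preArrow(Σ)$ to a monoidal category''. The paper proves the opposite: by \Cref{lemma:interchangeLaw}, the axiom --- although it only permutes adjacent statements with disjoint variables --- suffices to derive the full interchange law $(t_1 \rtimes \vec{x}_2) \fatsemi (\vec{y}_1 \ltimes t_2) = (\vec{x}_1 \ltimes t_2) \fatsemi (t_1 \rtimes \vec{y}_2)$ for arbitrary terms, so that the quotiented category $\arrowF(Σ)$ is a strict \emph{symmetric monoidal} category (\Cref{prop:term-smc}). The present proposition asserts only premonoidality because $\preArrow(Σ)$ is the category of terms \emph{before} that quotient is imposed, where interchange genuinely fails. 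Your inductions never invoke the interchange axiom, so your construction is unaffected; only the reason you give for ``premonoidal rather than monoidal'' is the wrong one.
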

  \begin{proof}
    This follows from \Cref{lemma:var:compositionUnital,lemma:var:compositionAssociative}.
  \end{proof}

\subsection{Interchange law}

\begin{proposition}[Rewiring preserves interchange]
  \defining{linkRewiringPreservesInterchange}
  \AP\kl{Rewiring} preserves the \kl{interchange axiom}. Let $s ≈ t$, then $φ ∗
  s ≈ φ ∗ t$.
\end{proposition}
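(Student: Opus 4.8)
The plan is to avoid inducting directly on a derivation of $s \approx t$ for a fixed $\varphi$: the congruence rule for generator statements re-indexes the rewiring function, so a naive induction would not close. Instead I package the claim as an auxiliary relation and appeal to minimality of the interchange \kl{congruence}. Define $s \approx' t$ to hold exactly when $\varphi \ast s \approx \varphi \ast t$ for \emph{every} finite function $\varphi$ of the appropriate arity. Since $\approx$ is by construction the least \kl{congruence} on \kl{arrow notation preterms} containing the \kl{interchange axiom}, it suffices to show (a) that every instance of the \kl{interchange axiom} is already related by $\approx'$, and (b) that $\approx'$ is itself a \kl{congruence}. Granting both, minimality forces $\approx\,\subseteq\,\approx'$, and instantiating $s \approx' t$ at the given $\varphi$ yields precisely $\varphi \ast s \approx \varphi \ast t$.

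For step (a) I push $\varphi$ through both sides of an \kl{interchange axiom} instance (with generators $f,g$ and continuation $t$) using the \kl{definition of rewiring} together with the fact that \kl{rewiring is an action}. Rewiring the left-hand side turns $f(\alpha)$ into $f(\varphi\circ\alpha)$ and, after one further step, $g(\vlinc{\leny}\circ\beta)$ into $g((\varphi\rtimes\leny)\circ\vlinc{\leny}\circ\beta)$; the inclusion-naturality identity $(\varphi\rtimes\leny)\circ\vlinc{\leny} = \vlinc{\leny}\circ\varphi$ rewrites this to $g(\vlinc{\leny}\circ\varphi\circ\beta)$, leaving the continuation $((\varphi\rtimes\leny)\rtimes\lenz)\ast t$. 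This is exactly the left-hand side of the \kl{interchange axiom} for the generators $f(\varphi\circ\alpha)$ and $g(\varphi\circ\beta)$ over the rewired context $\vec{u}$ of length $\lenu$, so applying the axiom to it yields $g(\varphi\circ\beta)\,𑊩\,f(\vlinc{\lenz}\circ\varphi\circ\alpha)\,𑊩\,(\cdots)\ast t$. Rewiring the right-hand side directly by the same two steps (again merging the nested rewirings of $t$ via \kl{rewiring is an action}) produces the identical pair of generators, so $\varphi\ast(\mathrm{LHS})\approx\varphi\ast(\mathrm{RHS})$ collapses to a single identity between the finite functions acting on $t$, namely
\[
(\id[\lenu] + \vsym{\leny}{\lenz}) \circ \big((\varphi \rtimes \leny)\rtimes \lenz\big)
\;=\;
\big((\varphi \rtimes \lenz)\rtimes \leny\big) \circ (\id[\lenx] + \vsym{\leny}{\lenz}),
\]
which is the naturality of the block-transposition \kl{symmetry} with respect to $\varphi$. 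I would verify it by tracking the three input blocks separately — the $\lenx$ inputs acted on by $\varphi$, and the $\leny$ and $\lenz$ fresh outputs being swapped — checking that both composites send $i\le\lenx$ to $\varphi(i)$, the $j$-th of the $\leny$ block to $\lenu+\lenz+j$, and the $l$-th of the $\lenz$ block to $\lenu+l$. Consequently $\varphi\ast(\mathrm{LHS})$ and $\varphi\ast(\mathrm{RHS})$ are related by a single application of the \kl{interchange axiom}, hence by $\approx$, for every $\varphi$.

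For step (b), $\approx'$ inherits being an equivalence relation from $\approx$ pointwise in $\varphi$, and it is reflexive on return statements because $\varphi\ast\vret(\alpha)=\vret(\varphi\circ\alpha)$ is again a return statement, on which $\approx$ is reflexive. For closure under generator prefixing, assume $t \approx' t'$ and fix any $\varphi$; the \kl{definition of rewiring} gives $\varphi\ast(f(\gamma)\,𑊩\,t) = f(\varphi\circ\gamma)\,𑊩\,((\varphi\rtimes\leny)\ast t)$ and likewise for $t'$. Instantiating the hypothesis $t\approx' t'$ at the function $\varphi\rtimes\leny$ yields $(\varphi\rtimes\leny)\ast t \approx (\varphi\rtimes\leny)\ast t'$, and since $\approx$ is preserved by generator statements we obtain $\varphi\ast(f(\gamma)\,𑊩\,t)\approx\varphi\ast(f(\gamma)\,𑊩\,t')$; as $\varphi$ was arbitrary, $f(\gamma)\,�md\,t \approx' f(\gamma)\,�md\,t'$. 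This establishes (b) and, with (a), completes the argument. The entire difficulty sits in step (a): the bookkeeping of the nested whiskerings and inclusions must be arranged so that everything reduces to the single symmetry-naturality identity displayed above, and it is the re-indexing of $\varphi$ to $\varphi\rtimes\leny$ in (b) that compels the claim to be proved uniformly in $\varphi$ rather than for one fixed function.
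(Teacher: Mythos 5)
Your proof is correct, and its computational heart coincides with the paper's own proof: your step (a) is exactly the paper's chain of rewritings --- push $\varphi$ through both statements by the definition of rewiring, commute $\varphi \rtimes \leny$ past the inclusion via $(\varphi \rtimes \leny) \circ \iota_{\leny} = \iota_{\leny} \circ \varphi$, apply the interchange axiom to the rewired generators $f(\varphi \circ \alpha)$ and $g(\varphi \circ \beta)$, and then cancel the block transposition against the nested whiskerings using
\[
(\id[\lenu] + \sigma_{\leny,\lenz}) \circ \bigl((\varphi \rtimes \leny) \rtimes \lenz\bigr)
=
\bigl((\varphi \rtimes \lenz) \rtimes \leny\bigr) \circ (\id[\lenx] + \sigma_{\leny,\lenz}),
\]
which is precisely what the paper invokes as ``how whiskering interacts with symmetries,'' and your three-block verification of it is right. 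Where you genuinely go beyond the paper is in the surrounding structure. The paper's proof treats a single root-level instance of the axiom and stops; it never discharges the congruence closure, even though $\approx$ is defined as the least congruence containing the axiom, and closure under generator prefixing is the delicate step, precisely because the prefix re-indexes the rewiring function from $\varphi$ to $\varphi \rtimes \leny$. Your auxiliary relation $\approx'$ --- equivalently, proving the statement universally quantified over $\varphi$ by induction on derivations --- is exactly the strengthening that makes this step close, and your step (b) supplies it; this is what turns the paper's core computation into a complete proof. A small further point in your favour: your right-hand side correctly retains the rewired symmetry $(\id[\lenx] + \sigma_{\leny,\lenz}) \ast r$ acting on the continuation, whereas the paper's final displayed line silently drops it (an apparent typo, since without it that expression is not the rewiring of the axiom's right-hand side). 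In short: same key lemma and same bookkeeping, but your packaging via minimality of the least congruence completes an argument the paper leaves implicit.
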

\begin{proof}
  Let two terms be related by the \kl{interchange axiom}, $s ≈ t$.
  We reason by \emph{(i,ii)} the "definition of rewiring"; \emph{(iii)} how
  "whiskering interacts with inclusion"; \emph{(iv)} the \kl{interchange axiom};
  \emph{(v)} how whiskering interacts with symmetries; \emph{(vi)} how
  "whiskering interacts with inclusion"; and \emph{(vii,viii)} the
  \kl{definition of rewiring}.
  \begin{align*}
    & φ ∗ f(α) 𑊩 g(\vcomp{\vlinc{y}}{β}) 𑊩 r
    & \ \smasheq{\overset{\emph{(i)}}{=}}\ \\
    & f(\vcomp{φ}{α}) 𑊩 (φ ⋊ y) ∗ g(\vcomp{\vlinc{y}}{β}) 𑊩 r
    & \ \smasheq{\overset{\emph{(ii)}}{=}}\ \\
    & f(φ ∘ α) 𑊩 g((φ ⋊ y) ∘ \vlinc{y} ∘ β) 𑊩 (φ ⋊ y ⋊ z) ∗ r
    & \ \smasheq{\overset{\emph{(iii)}}{=}}\ \\
    & f(φ ∘ α) 𑊩 g(\vlinc{y} ∘ φ ∘ β) 𑊩 (φ ⋊ y ⋊ z) ∗ r
    & \ \smasheq{\overset{\emph{(iv)}}{=}}\ \\
    & g(φ ∘ β) 𑊩 f(\vlinc{z} ∘ φ ∘ α) 𑊩 (x ⋊ \vsym{y}{z}) ∗ (φ ⋊ y ⋊ z) ∗ r
    & \ \smasheq{\overset{\emph{(v)}}{=}}\ \\
    & g(φ ∘ β) 𑊩 f(\vlinc{z} ∘ φ ∘ α) 𑊩 (φ ⋊ z ⋊ y) ∗ r
    & \ \smasheq{\overset{\emph{(vi)}}{=}}\ \\
    & g(φ ∘ β) 𑊩 f((φ ⋊ z) ∘ \vlinc{z} ∘ α) 𑊩 (φ ⋊ z ⋊ y) ∗ r
    & \ \smasheq{\overset{\emph{(vii)}}{=}}\ \\
    & g(φ ∘ β) 𑊩 (φ ⋊ z) ∗ f(\vlinc{z} ∘ α) 𑊩 r
    & \ \smasheq{\overset{\emph{(viii)}}{=}}\ \\
    & φ ∗ g(β) 𑊩 f(\vlinc{z} ∘ α) 𑊩 r.
  \end{align*}
We have shown the two terms are related by the "interchange axiom".
\end{proof}

\begin{proposition}[Composition preserves interchange]
  \label{prop:composition-welldef}%
  \kl{Term composition} is well-defined under the \kl{interchange axiom}:
  if $s₁ ≈ s₂$ and $t₁ ≈ t₂$, then $s₁ ⨾ s₂ ≈ t₁ ⨾ t₂$.
\end{proposition}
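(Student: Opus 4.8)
The statement to establish is that $\comp$ descends to equivalence classes, i.e.\ that $s_1 \approx s_2$ and $t_1 \approx t_2$ imply $s_1 \comp t_1 \approx s_2 \comp t_2$. The plan is to split this into the two one-sided statements: (a) right-appending a fixed term preserves the congruence, $s_1 \approx s_2 \Rightarrow s_1 \comp t \approx s_2 \comp t$; and (b) left-appending a fixed term preserves the congruence, $t_1 \approx t_2 \Rightarrow s \comp t_1 \approx s \comp t_2$. The full claim then follows by transitivity of $\approx$, chaining $s_1 \comp t_1 \approx s_2 \comp t_1 \approx s_2 \comp t_2$.

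For (b), I would fix $s$ and induct on its structure. If $s = \vret(\alpha)$, then $s \comp t_i = \alpha \ast t_i$ by the definition of \kl{composition}, and \emph{rewiring preserves interchange} gives $\alpha \ast t_1 \approx \alpha \ast t_2$ directly. If $s = f(\gamma) \comp s'$, then $s \comp t_i = f(\gamma) \comp (s' \comp t_i)$; the inductive hypothesis yields $s' \comp t_1 \approx s' \comp t_2$, and since $\approx$ is a \kl{congruence} --- hence closed under prepending the generator statement $f(\gamma)$ --- we conclude $f(\gamma) \comp (s' \comp t_1) \approx f(\gamma) \comp (s' \comp t_2)$. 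This direction is routine.

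For (a), the clean approach is to fix $t$ and show that the relation $R_t = \{(s_1,s_2) \mid s_1 \comp t \approx s_2 \comp t\}$ is itself a \kl{congruence} containing the \kl{interchange axiom}; by minimality of $\approx$ this gives $\approx \;\subseteq\; R_t$, which is exactly (a). That $R_t$ is reflexive on returns and closed under prepending generators is immediate from the definition of \kl{composition} (using $(f(\gamma) \comp u) \comp t = f(\gamma) \comp (u \comp t)$, a special case of \Cref{lemma:var:compositionAssociative}) together with $\approx$ being a congruence. The crux is checking that the generating \kl{interchange axiom} lands in $R_t$: given an interchange pair $s_1 = f(\alpha) \comp g(\vlinc{\leny}\circ\beta) \comp r$ and $s_2 = g(\beta) \comp f(\vlinc{\lenz}\circ\alpha) \comp ((\id[\lenx] + \vsym{\leny}{\lenz}) \ast r)$, I unfold composition to get $s_1 \comp t = f(\alpha) \comp g(\vlinc{\leny}\circ\beta) \comp (r \comp t)$ and $s_2 \comp t = g(\beta) \comp f(\vlinc{\lenz}\circ\alpha) \comp (((\id[\lenx] + \vsym{\leny}{\lenz}) \ast r) \comp t)$, and then invoke \emph{rewiring preserves composition} to rewrite $((\id[\lenx] + \vsym{\leny}{\lenz}) \ast r) \comp t = (\id[\lenx] + \vsym{\leny}{\lenz}) \ast (r \comp t)$. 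The two results are then precisely the interchange pair for the continuation $r \comp t$ in place of $r$, so $s_1 \comp t \approx s_2 \comp t$ as an instance of the axiom.

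I expect this last recognition step to be the main obstacle: one must verify that composing with $t$ on the right only alters the continuation of the interchange redex and leaves the rewiring datum $\id[\lenx] + \vsym{\leny}{\lenz}$ acting on the unchanged prefix $\Gamma,\vec y,\vec z$ of variables, so that the symmetry can be commuted past $\comp t$ via \emph{rewiring preserves composition}. Once the contexts and the inclusion and symmetry functions are checked to line up, the base case is an exact instance of the axiom rather than a merely derived equivalence, and (a), (b), and the final claim follow.
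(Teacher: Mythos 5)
Your proposal is correct and follows the same overall strategy as the paper: both split the claim into a pre-composition statement (composing a fixed $t$ on the right preserves $\approx$) and a post-composition statement (composing a fixed $s$ on the left preserves $\approx$), chained by transitivity; both prove post-composition by induction on $s$, using \emph{rewiring preserves interchange} in the return case and closure of the congruence under prepending generator statements in the generator case; and both prove pre-composition by the same core computation — unfold composition, apply the interchange axiom with continuation $r \comp t$, then use \emph{rewiring preserves composition} to commute the symmetry rewiring past $\comp\, t$. The one genuine difference is how the congruence closure is handled in the pre-composition direction. The paper writes ``assume $s_1 \approx s_2$; these must be of the form [a generating interchange pair]'', which strictly speaking only treats generating instances of the axiom, not pairs related by interchanges occurring under a prefix of generator statements, or by chains of such steps. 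Your device of showing that $R_t = \{(s_1,s_2) \mid s_1 \comp t \approx s_2 \comp t\}$ is itself a congruence containing the generating pairs, and then invoking minimality of $\approx$, packages exactly the induction over congruence derivations that the paper leaves implicit. So your argument is, if anything, slightly more complete than the printed proof, while the mathematical content of the crucial step — recognizing $s_1 \comp t$ and $s_2 \comp t$ as an interchange pair for the continuation $r \comp t$ — is identical.
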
 
\begin{proof}
We check separately that quotienting by the \kl{interchange axiom} $(≈)$
preserves pre-composition and post-composition. Let us start with
pre-composition: assume two terms are related by the
\kl{interchange axiom}, $s₁ ≈ s₂$. These must be of the form 
$$s₁ = f(α) 𑊩 g(\vinc{n}{m} ∘ β) 𑊩 s ≈ g(β) 𑊩 f(\vinc{n}{k} ∘ α) 𑊩 (n ⊗
σ_{k,m} ∗ s) = s₂.$$%
We will prove now that $s₁ ⨾ t ≈ s₂ ⨾ t$. We use \emph{(i)} the
\kl{interchange axiom}; and \emph{(ii)} that \kl{rewiring preserves composition}.
\begin{align*}
  & f(α) 𑊩 g(\vinc{n}{m} ∘ β) 𑊩 s ⨾ t
  & \ \smasheq{\overset{\emph{(i)}}{≈}}\ \\
  & g(β) 𑊩 f(\vinc{n}{k} ∘ α) 𑊩 ((n ⊗ σ_{k,m}) ∗ (s ⨾ t))
  & \ \smasheq{\overset{\emph{(ii)}}{=}}\ \\
  & g(β) 𑊩 f(\vinc{n}{k} ∘ α) 𑊩 (n ⊗ σ_{k,m} ∗ s) ⨾ t. 
\end{align*}
We will prove now that post-composition also preserves the \kl{interchange
axiom}: for any two related terms, $t₁ ≈ t₂$, we will prove that $s ⨾ t₁ ≈ s ⨾
t₂$. We proceed by induction over $s$, the first term of the composition. Let it
be a return statement, $\vret(α)$. We use \emph{(i)} the \kl{definition of
rewiring}; that \emph{(ii)} "rewiring preserves interchange"; and \emph{(iii)}
the \kl{definition of rewiring}.
\[
  \vret(α) ⨾ t₁ \overset{\emph{(i)}}{=}
  α ∗ t₁ \overset{\emph{(ii)}}{=}
  α ∗ t₂ \overset{\emph{(iii)}}{=}
  \vret(α) ⨾ t₂.
\]
Let it be a generator statement, $s = f(γ) 𑊩 s'$. We use the inductive
hypothesis to obtain $f(γ) 𑊩 s' ⨾ t₁ = f(γ) 𑊩 s' ⨾ t₂$. This
concludes the proof.
\end{proof}

\begin{proposition}[Whiskering preserves interchange]
  \kl{Whiskering} is well-defined under the \kl{interchange axiom}:
  if $s₁ ≈ s₂$, then $\vec{k} ⋊ s₁ ≈ \vec{k} ⋊ s₂$.
\end{proposition}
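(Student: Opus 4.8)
The plan is to exploit that $\approx$ is, by construction, the smallest \kl{congruence} on \kl{arrow notation preterms} that contains the \kl{interchange axiom}. Fixing the list $\vec k$, I would introduce the auxiliary relation $s_1 \sim s_2$, declared to hold exactly when $\vec k \rtimes s_1 \approx \vec k \rtimes s_2$, and then show that $\sim$ is itself a \kl{congruence} containing the \kl{interchange axiom}. Minimality of $\approx$ then yields $\approx\ \subseteq\ \sim$, which is precisely the statement. This mirrors the strategy used for \kl{rewiring preserves interchange}.

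That $\sim$ is a \kl{congruence} is immediate from the inductive definition of \kl{whiskering} (\Cref{def:whiskering}). It is reflexive on return statements because $\vec k \rtimes \vret(\alpha) = \vret(k \ltimes \alpha)$ and $\approx$ is reflexive on returns. It is preserved by generator statements because $\vec k \rtimes (f(\gamma) \doline s) = f(\vrinc{k} \circ \gamma) \doline (\vec k \rtimes s)$ by definition, so from $\vec k \rtimes s \approx \vec k \rtimes s'$ we get $f(\vrinc{k} \circ \gamma) \doline (\vec k \rtimes s) \approx f(\vrinc{k} \circ \gamma) \doline (\vec k \rtimes s')$ since $\approx$ is preserved by statements; that is, $f(\gamma) \doline s \sim f(\gamma) \doline s'$.

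The substantive step is that $\sim$ contains the \kl{interchange axiom}: whiskering sends the two sides of a single interchange instance to $\approx$-related terms. I would compute $\vec k \rtimes (-)$ on both sides. Applying the generator clause of \Cref{def:whiskering} twice replaces $f,g$ by $f(\vrinc{k} \circ \alpha)$ and $g(\vrinc{k} \circ \vlinc{y} \circ \beta)$, and on the right-hand side pushes the whiskering onto the rewired continuation $(\id[x] + \sigma_{y,z}) \ast t$. By \Cref{prop:whiskeringpreservesrewiring}, $\vec k \rtimes ((\id[x] + \sigma_{y,z}) \ast t) = (k \ltimes (\id[x] + \sigma_{y,z})) \ast (\vec k \rtimes t)$, and a finite-function calculation gives $k \ltimes (\id[x] + \sigma_{y,z}) = \id[k+x] + \sigma_{y,z}$. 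It then remains to check that the inclusions line up over the enlarged context $\vec k, \vec x$, i.e. that left whiskering commutes with the inclusions appearing in the axiom, $\vrinc{k} \circ \vlinc{y} = \vlinc{y} \circ \vrinc{k}$ (and likewise for $\vlinc{z}$), so that $g$'s input rewrites as $\vlinc{y} \circ (\vrinc{k} \circ \beta)$, the correct interchange form with input function $\vrinc{k} \circ \beta$. Together these identities exhibit the two whiskered terms as a genuine instance of the \kl{interchange axiom} with the same generators $f,g$ over context $\vec k, \vec x$, hence as $\approx$-related terms, which is what $s \sim s'$ requires.

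The main obstacle is exactly this last piece of bookkeeping: tracking how left whiskering interacts with $\vlinc{y}$, $\vrinc{k}$ and the symmetry $\sigma_{y,z}$, so that the whiskered equation is recognised as the \kl{interchange axiom} over the shifted context rather than merely as some $\approx$-consequence of it. None of it is deep — it is the same flavour of finite-function reasoning already used for \kl{rewiring preserves interchange} — but it is where an off-by-$k$ index slip would silently break the argument. The right-whiskering case follows by the same template; the only extra ingredient is the output-repositioning via the symmetry that is already built into the right-whiskering clause of \Cref{def:whiskering}.
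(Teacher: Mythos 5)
Your proposal is correct and takes essentially the same route as the paper: the computational core in both is to unfold left whiskering on the two sides of an interchange instance and use the finite\hyp{}function bookkeeping (the shift $\nu_k$ commutes with the inclusions $\iota_y$, $\iota_z$, and whiskering commutes with rewiring) to recognise the result as an interchange instance over the enlarged context $\vec{k},\vec{x}$. If anything, your write\hyp{}up is more careful than the paper's: the paper's proof only treats a root\hyp{}level interchange instance (leaving the congruence\hyp{}closure step, which your auxiliary relation $\sim$ and minimality argument make explicit, entirely implicit) and it silently omits the $(\mathrm{id}_{x}+\sigma_{y,z})\ast$ rewiring of the continuation, which you handle correctly via the identity $k \ltimes (\mathrm{id}_{x}+\sigma_{y,z}) = \mathrm{id}_{k+x}+\sigma_{y,z}$ and the fact that rewiring preserves whiskering.
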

\begin{proof}
  Let two terms be related by the "interchange axiom". These must be of the form
  $s₁ = \vec{k} ⋊ (f(α) 𑊩 g(\vinc{n}{n} ∘ β) 𑊩 s)$ and $s₂ = \vec{k} ⋊ (g(β)
  𑊩 f(\vrnc{n}{k} ∘ α) 𑊩 s)$.
  \begin{align*}
    & \vec{k} ⋊ (f(α) 𑊩 g(\vinc{n}{n} ∘ β) 𑊩 s) = \\
    & f(\vrnc{n}{k} ∘ α) 𑊩 g(\vrnc{n + m}{k} ∘ \vinc{n}{n} ∘ β) 𑊩 (\vec{k} ⋊ s) = \\
    & f(\vrnc{n}{k} ∘ α) 𑊩 g(\vinc{n}{n} ∘ \vrnc{n + m}{k} ∘ β) 𑊩 (\vec{k} ⋊ s) ≈ \\
    & g(\vrnc{n}{k} ∘ β) 𑊩 f(\vinc{n}{m} ∘ \vrnc{n}{k} ∘ α) 𑊩 (\vec{k} ⋊ s) = \\
    & \vec{k} ⋊ (g(β) 𑊩 f(\vrnc{n}{k} ∘ α) 𑊩 s).
  \end{align*}
\end{proof}

\begin{proposition}
  \label{prop:var:term-category}%
  \label{prop:term-category}%
  Terms of \arrowNotation{} over a signature \(Σ\) quotiented by the
  \axiomsOfArrowNotation{} form a category $\arrowF(Σ)$.
\end{proposition}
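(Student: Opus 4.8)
The plan is to obtain $\arrowF(\Sigma)$ as a quotient category and to appeal to the standard fact that quotienting a category by a composition-compatible \kl{congruence} is again a category; all of the substantive work has already been isolated into the preservation results above, so the proof is a matter of assembling them. First I would recall the unquotiented structure: the objects are lists $\vec{x}$ of generating types, the raw morphisms are the \kl{arrow notation preterms}, \kl{composition} $(\comp)$ is defined through \kl{rewiring}, and by \Cref{lemma:var:compositionAssociative,lemma:var:compositionUnital} it is associative and unital with identities $\vret(\id[x])$; this assembles into the category $\prearrow(\Sigma)$ of raw terms (\Cref{prop:rawterm-category}).

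Next I would let $(\approx)$ be the smallest \kl{congruence} containing the \axiomsOfArrowNotation{}, and define $\arrowF(\Sigma)$ to have the same objects, with morphisms the $\approx$-classes $[s]$ and composition $[s] \comp [t] \mathrel{:=} [s \comp t]$. The only point requiring proof is that this is well defined, i.e.\ that composition respects $(\approx)$ on both sides: $s_1 \approx s_2$ and $t_1 \approx t_2$ imply $s_1 \comp t_1 \approx s_2 \comp t_2$. For the generating \interchangeAxiom{} this is exactly \Cref{prop:composition-welldef}, whose argument settles pre-composition using that \kl{rewiring preserves composition} and post-composition by an induction on the first factor whose base (return) case invokes that \kl{rewiring preserves interchange}; the remaining \axiomsOfArrowNotation{}, if any, are handled by the same two-part argument, and the results combine to show composition respects the full congruence $(\approx)$.

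Finally, associativity and unitality descend automatically: they are already \emph{equalities} of preterms, so they hold a fortiori between $\approx$-classes, and $[\vret(\id[x])]$ remains a two-sided identity; hence $\arrowF(\Sigma)$ is a category. I expect the only genuine obstacle to lie in the well-definedness step, and within it in the post-composition direction, which — unlike pre-composition — does not reduce directly to \kl{rewiring preserves composition} but needs the structural induction of \Cref{prop:composition-welldef} (and its analogue for any further axiom of arrow notation beyond interchange).
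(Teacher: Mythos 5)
Your proposal is correct and follows essentially the same route as the paper: the paper's own (very terse) proof cites \Cref{lemma:var:compositionUnital,lemma:var:compositionAssociative} for unitality and associativity, relying on the immediately preceding \Cref{prop:composition-welldef} (composition preserves interchange) for well-definedness of composition on equivalence classes, exactly the decomposition you spell out. Your write-up merely makes explicit the quotient-category argument that the paper leaves implicit, including the correct observation that the real content sits in the post-composition half of the well-definedness step.
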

\begin{proof}
  This follows from \Cref{lemma:var:compositionUnital,lemma:var:compositionAssociative}.
\end{proof}

\subsection{Monoidal structure}

\begin{lemma}[Interchange law]
  \label{lemma:interchangeLaw}%
  For any two terms, $t₁ ፡ \vec{x_1} → \vec{y_1}$ and $t_2 ፡
  \vec{x_2} → \vec{y_2}$.
  $$(t₁ ⋊ \vec{x₂}) ⨾ (\vec{y₁} ⋉ t_2) = (\vec{x₁} ⋉ t_2) ⨾ (t_1 ⋊ \vec{y_2}).$$
\end{lemma}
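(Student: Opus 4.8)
The plan is to prove the \kl{interchange axiom} for \emph{arbitrary} terms by bootstrapping from its single-generator instance. Both sides of the stated equation describe the two orders in which $t_1$ (acting on the left strand) and $t_2$ (acting on the right strand) may be performed; since their variables are disjoint, the \kl{interchange axiom} already equates them when $t_1$ and $t_2$ are \emph{each} a single generator statement. The proof therefore proceeds by a nested structural induction that lifts this base case to all terms, relying throughout on the algebraic lemmas on \kl{rewiring} and \kl{whiskering} established above to keep the finite-function bookkeeping under control.

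First I would establish the special case in which $t_1$ is a bare generator $f \colon \vec{x} \to \vec{y}$, namely
\[ (f \rwhisk \vec{x}_{2}) \comp (\vec{y} \ltimes t_2) = (\vec{x} \ltimes t_2) \comp (f \rwhisk \vec{y}_{2}), \]
by induction on $t_2$. When $t_2 = \vret(\beta)$ is a return statement, both composites collapse to a \kl{rewiring} of a whiskered generator: I would unfold \kl{composition} against the return statements into a \kl{rewiring} and then push the resulting finite function through the whiskering using \kl{whiskering preserves rewiring}, checking that the two sides yield the same finite function. When $t_2 = g(\delta) \doline s_2$ is a generator statement, I would peel off the leading $g$, apply the \kl{interchange axiom} to commute $f$ past $g$, and then invoke the induction hypothesis on $s_2$; here \kl{rewiring preserves interchange} and \kl{rewiring preserves composition} are what let me propagate the swap through the rewirings attached to the continuation.

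Then I would prove the general statement by induction on $t_1$. The base case $t_1 = \vret(\alpha)$ is handled, as before, by turning the relevant composites into rewirings and applying \kl{whiskering preserves rewiring}. For the inductive case $t_1 = f(\gamma) \doline s$, I would split each whiskered copy of $t_1$ into a leading generator layer followed by $s$, using functoriality of whiskering (\Cref{prop:whiskeringFunctorial} and its right-handed analogue) together with associativity of \kl{composition} (\Cref{lemma:var:compositionAssociative}); I would then commute the generator layer past $t_2$ by the special case just proved, and commute $s$ past $t_2$ by the induction hypothesis, reassembling via the same functoriality and associativity results.

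The main obstacle is not the inductive skeleton but the finite-function bookkeeping in the generator/generator swap: each application of the \kl{interchange axiom} introduces symmetry, inclusion, and whiskering operations on the continuation, and one must verify that the finite functions accumulated on the two sides agree exactly. I expect this to be the only genuinely delicate point, and it is discharged entirely by the structural lemmas already in hand — that \kl{rewiring} is an action, that it preserves \kl{composition} and the \kl{interchange axiom}, and that \kl{whiskering} is functorial and commutes with \kl{rewiring} — so that the verification reduces to routine identities between functions of finite sets.
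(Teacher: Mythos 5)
Your plan has the same overall architecture as the paper's proof --- an outer induction on $t_1$, with the generator--generator interaction discharged by a separate lemma proved by induction on $t_2$ (your ``bare generator'' special case is, in essence, the paper's \Cref{prop:generatorsInterchange}) --- but the base case of your outer induction fails as described. When $t_1 = \vret(\alpha)$, only the left composite collapses to a rewiring: $(\vret(\alpha) ⋊ \vec{x_2}) ⨾ (\vec{y_1} ⋉ t_2) = (\alpha ⋊ x_2) ∗ (\vec{y_1} ⋉ t_2)$. The right composite is $(\vec{x_1} ⋉ t_2) ⨾ \vret(\alpha ⋊ y_2)$, which \emph{ends} with a return statement, and the inductive definition of composition only absorbs returns standing on the \emph{left}; to eliminate the trailing $\vret(\alpha ⋊ y_2)$ you must push it backwards through every statement of $t_2$, which is itself a structural induction on $t_2$. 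The lemma you invoke, ``whiskering preserves rewiring'' (\Cref{prop:whiskeringpreservesrewiring}), is the wrong shape for this: it handles rewirings of the block of the context that the whiskered term consumes, i.e.\ $(k ⋊ \alpha) ∗ (\vec{k} ⋊ s) = \vec{k} ⋊ (\alpha ∗ s)$, whereas here $\alpha ⋊ x_2$ acts on the pass-through block, which is referenced only in the final return of $\vec{y_1} ⋉ t_2$. This is precisely why the paper isolates \Cref{prop:returnInterchanges} (``Return interchanges''), $\vret(\alpha ⋊ x_2) ⨾ (\vec{y_1} ⋉ t) = (\vec{x_1} ⋉ t) ⨾ \vret(\alpha ⋊ y_2)$, and proves it by its own induction on $t$.

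The gap is repairable with the technique you already use for the special case: add a second inner induction on $t_2$ for the return case, whose base case is an interchange of finite functions. Two smaller points to watch if you keep your organization. First, for the splitting $t_1 = F ⨾ s$ of a generator statement to type-check, your special case must be stated not for the bare generator $f ፡ \vec{x} → \vec{y}$ but for the context-preserving layer $F = f(\gamma) 𑊩 \vret(\id) ፡ \vec{x_1} → \vec{x_1},\vec{m}$; the paper sidesteps this by unfolding the definition of right whiskering directly and carrying an explicit continuation $r$ in \Cref{prop:generatorsInterchange}. Second, your splitting relies on functoriality of \emph{right} whiskering, which the paper never proves (only the left-handed \Cref{prop:whiskeringFunctorial}) and which is not immediate, since right-whiskering a generator statement inserts a symmetry-repositioning return that must be tracked through the composition.
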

\begin{proof}
  Let us proceed by induction over the term $t_1$. Let it be a return statement,
  $t_1 = \vret(α)$; we employ \emph{(i)} the definition of \kl{whiskering},
  \emph{(ii)} \Cref{prop:returnInterchanges}, and \emph{(iii)} the definition of
  \kl{whiskering}.
  \begin{align*}
    & (\vret(α) ⋊ \vec{x₂}) ⨾ (\vec{y₁} ⋉ t₂) & \smash{\overset{(i)}{=}} \\
    & \vret(α ⋉ x₂) ⨾ (\vec{y₁} ⋉ t₂) & \smash{\overset{(ii)}{=}} \\
    & (\vec{x₁} ⋉ t₂) ⨾ \vret(α ⋉ y₂) & \smash{\overset{(iii)}{=}} \\
    & (\vec{x₁} ⋉ t₂) ⨾ (\vret(α) ⋉ \vec{y₂}).
  \end{align*}

  \noindent Let it be a generator statement, $t₁ = f(γ) 𑊩 s₁$, of output type
  $\vec{m}$; we employ \emph{(i)} the definition of \kl{whiskering} and
  \kl{composition}, \emph{(ii)} the inductive hypothesis, \emph{(iii)}
  \Cref{prop:generatorsInterchange}, \emph{(iv)} the definition of \kl{whiskering}
  and \kl{composition}.
  \begin{align*}
    & ((f(γ) 𑊩 s₁) ⋊ \vec{x₂}) ⨾ (\vec{y₁} ⋉ t₂) & \smash{\overset{(i)}{=}} \\
    & f(γ · \vlinc{x₂}) 𑊩 \vret(x₁ ⋉ \vsym{x₂}{m}) ⨾ (s₁ ⋊ \vec{x₂}) ⨾ (\vec{y₁} ⋉ t₂) & \smash{\overset{(ii)}{=}} \\
    & f(γ · \vlinc{x₂}) 𑊩 \vret(x₁ ⋉ \vsym{x₂}{m}) ⨾ (\vec{x₁} ⋉ \vec{m} ⋉ t₂) ⨾ (s₁ ⋊ \vec{y₂}) & \smash{\overset{(iii)}{=}} \\
    & (\vec{x₁} ⋉ t₂) ⨾ f(γ · \vlinc{y₂}) 𑊩 \vret(x₁ ⋉ \vsym{y₂}{m})  ⨾ (s₁ ⋊ \vec{y₂}) & \smash{\overset{(iv)}{=}} \\
    & (\vec{x₁} ⋉ t₂) ⨾ ((f(γ) 𑊩 s₁) ⋊ \vec{y₂}). & \qedhere
  \end{align*}
\end{proof}

\begin{proposition}[Return interchanges]
  \label{prop:returnInterchanges}%
  \defining{linkReturnInterchanges}%
  $$\vret(α ⋊ x₂) ⨾ (\vec{y₁} ⋉ t) = 
  (\vec{x₁} ⋉ t) ⨾ \vret(α ⋉ y₂).$$
\end{proposition}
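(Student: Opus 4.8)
The plan is to prove the identity by structural induction on the term $t \colon \vec{x₂} → \vec{y₂}$ (here $α \colon y₁ → x₁$ is a finite function and $\vec{x₁}, \vec{y₁}$ are the lists of types of lengths $x₁, y₁$). In both the base and the inductive case I would reduce the two composites, by unfolding \kl{composition}, \kl{rewiring} and \kl{whiskering}, to a common shape, after which equality reduces to identities between finite functions.

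\textbf{Base case.} Suppose $t = \vret(γ)$. The \kl{whiskering} clauses give $\vec{y₁} ⋉ \vret(γ) = \vret(y₁ ⋉ γ)$ and $\vec{x₁} ⋉ \vret(γ) = \vret(x₁ ⋉ γ)$. Unfolding \kl{composition} ($\vret(β) ⨾ u = β ∗ u$) together with the \kl{definition of rewiring} on returns ($φ ∗ \vret(δ) = \vret(φ ∘ δ)$) collapses each side to a single return statement, so the claim becomes the finite-function equation
$$(α ⋊ x₂) ∘ (y₁ ⋉ γ) = (x₁ ⋉ γ) ∘ (α ⋊ y₂).$$
This is exactly the bifunctoriality (interchange) of the block sum of finite functions — acting by $α$ on the left block and by $γ$ on the right block, in either order, produces $α \oplus γ$ — and is verified by splitting on whether an input index falls in the first ($\leq y₁$) or the second block.

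\textbf{Inductive case.} Suppose $t = f(δ) 𑊩 s$, where $f$ has $w$ fresh outputs and $s \colon \vec{x₂}, \vec{w} → \vec{y₂}$. The generator clause of \kl{whiskering} rewrites the two left whiskerings as $f(\vrinc{y₁} ∘ δ) 𑊩 (\vec{y₁} ⋉ s)$ and $f(\vrinc{x₁} ∘ δ) 𑊩 (\vec{x₁} ⋉ s)$. On the left-hand side, the leading return is absorbed by \kl{composition} and then pushed through the generator by the \kl{definition of rewiring} ($φ ∗ (f(γ) 𑊩 r) = f(φ ∘ γ) 𑊩 ((φ ⋊ w) ∗ r)$); on the right-hand side, the trailing return is moved inside by the generator clause of \kl{composition} ($(f(γ) 𑊩 r) ⨾ u = f(γ) 𑊩 (r ⨾ u)$). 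The two generator heads then agree thanks to the finite-function identity $(α ⋊ x₂) ∘ \vrinc{y₁} = \vrinc{x₁}$, leaving me to equate the tails
$$((α ⋊ x₂) ⋊ w) ∗ (\vec{y₁} ⋉ s) \quad\text{and}\quad (\vec{x₁} ⋉ s) ⨾ \vret(α ⋊ y₂).$$

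To close the induction I would invoke the coherence $(α ⋊ x₂) ⋊ w = α ⋊ (x₂ + w)$ of right whiskering, rewrite the left tail (via \kl{composition}, read backwards) as $\vret(α ⋊ (x₂ + w)) ⨾ (\vec{y₁} ⋉ s)$, and apply the inductive hypothesis to $s$. Since $s$ has input context $\vec{x₂}, \vec{w}$, the hypothesis is used at the shifted parameter $x₂ + w$ (its output length, hence the trailing return $α ⋊ y₂$, is unchanged), and it delivers exactly the right tail. The main obstacle is precisely this bookkeeping in the inductive step: because $s$ lives over the enlarged context created by the fresh variables of $f$, one must apply the hypothesis at shifted whiskering numbers and check that rewiring by the doubly whiskered $(α ⋊ x₂) ⋊ w$ coincides with the single whiskering $α ⋊ (x₂ + w)$. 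Once this coherence and the head identity $(α ⋊ x₂) ∘ \vrinc{y₁} = \vrinc{x₁}$ are established, the remainder is routine unfolding of definitions.
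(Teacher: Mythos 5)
Your proposal is correct and follows essentially the same route as the paper: structural induction on $t$, with the return case reduced to the interchange of finite functions $(α ⋊ x₂) ∘ (y₁ ⋉ γ) = (x₁ ⋉ γ) ∘ (α ⋊ y₂)$, and the generator case handled by unfolding whiskering, pushing the leading return through the generator via rewiring, cancelling the heads with $(α ⋊ x₂) ∘ \vrinc{y₁} = \vrinc{x₁}$, and applying the inductive hypothesis at the enlarged context $x₂ + w$. Your write-up is in fact slightly more explicit than the paper's, which leaves the whiskering coherence $(α ⋊ x₂) ⋊ w = α ⋊ (x₂ + w)$ and the shifted use of the inductive hypothesis implicit (and contains a couple of typographical slips, e.g.\ writing $\vec{y₁} ⋉ t'$ where $\vec{x₁} ⋉ t'$ is meant).
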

\begin{proof}
  We proceed by induction on the term $t₂$. Let it be a return statement, $t =
  \vret(β)$. We use \emph{(i,iii)} the definition of "term composition", and \emph{(ii)}
  interchange of finite functions.
  \begin{align*}
    & \vret(α ⋊ x₂) ⨾ \vret(y₁ ⋉ β) 
    & \ \smasheq{\overset{\emph{(i)}}{=}}\ \\
    & \vret((α ⋊ x₂) ⨾ (y₁ ⋉ β))
    & \ \smasheq{\overset{\emph{(ii)}}{=}}\ \\
    & \vret((x₁ ⋉ β) ⨾ (α ⋊ y₂))
    & \ \smasheq{\overset{\emph{(iii)}}{=}}\ \\
    & \vret(x₁ ⋉ β) ⨾ \vret(α ⋊ y₂).
  \end{align*}
  Let it be a generator statement, $t = f(γ) 𑊩 t'$, of output type $\vec{m}$.
  \begin{flalign*}
    & \vret(α ⋊ x₂) ⨾ (\vec{y₁} ⋉ (f(γ) 𑊩 t')) = \\
    & \vret(α ⋊ x₂) ⨾ f(γ · \vrinc{y₁}) 𑊩 (\vec{y₁} ⋉ t') = \\
    & f(γ · \vrinc{y₁} ⨾ (α ⋊ x₂)) 𑊩 \vret(α ⋊ x₂ ⋊ m) ⨾ (\vec{y₁} ⋉ t') = \\
    & f(γ · \vrinc{x₁}) 𑊩 \vret(α ⋊ x₂ ⋊ m) ⨾ (\vec{y₁} ⋉ t') = \\
    & f(γ · \vrinc{x₁}) 𑊩 (\vec{y₁} ⋉ t') ⨾ \vret(α ⋊ y₂) = \\
    & (\vec{y₁} ⋉ (f(γ) 𑊩 t')) ⨾ \vret(α ⋊ y₂).
  \end{flalign*}
  These two cases complete the proof by induction.
\end{proof}

\begin{proposition}[Generators interchange]%
  \label{prop:generatorsInterchange}%
  \begin{flalign*}
    & f(γ · \vlinc{x₂}) 𑊩 \vret(x₁ ⋉ \vsym{x₂}{m}) ⨾ (\vec{x₁} ⋉ \vec{m} ⋉ t₂) ⨾ r = \\
    & (\vec{x₁} ⋉ t₂) ⨾ f(γ · \vlinc{y₂}) 𑊩 \vret(x₁ ⋉ \vsym{y₂}{m}) ⨾ r.
  \end{flalign*}
  \begin{proof}
    We proceed by induction on $t₂$. Let it be a return statement, $t₂ = \vret(α)$.
    \begin{flalign*}
      & f(γ ⨾ \vlinc{x₂}) 𑊩 \vret(x₁ ⋉ \vsym{x₂}{m}) ⨾ (\vec{x₁} ⋉ \vec{m} ⋉ \vret(α)) ⨾ r = \\
      & f(γ ⨾ \vlinc{x₂}) 𑊩 \vret(x₁ ⋉ \vsym{x₂}{m}) ⨾ \vret(x₁ ⋉ m ⋉ α) ⨾ r = \\
      & f(γ ⨾ \vlinc{x₂}) 𑊩 \vret((x₁ ⋉ \vsym{x₂}{m}) ⨾ (x₁ ⋉ m ⋉ α)) ⨾ r = \\
      & f(γ ⨾ \vlinc{x₂}) 𑊩 \vret((x₁ ⋉ α ⋊ m) ⨾ \vsym{y₂}{m}) ⨾ r = \\
      & f(γ ⨾ \vlinc{x₂} ⨾ (x₁ ⋉ α)) 𑊩 \vret(x₁ ⋉ α ⋊ m) ⨾ \vret(x₁ ⋉ \vsym{y₂}{m}) ⨾ r = \\
      & f(γ ⨾ \vlinc{x₂} ⨾ (x₁ ⋉ α)) 𑊩 ((x₁ ⋉ α ⋊ m) ∗ \vret(x₁ ⋉ \vsym{y₂}{m}) ⨾ r) = \\
      & (x₁ ⋉ α) ∗ f(γ ⨾ \vlinc{x₂}) 𑊩 \vret(x₁ ⋉ \vsym{y₂}{m}) ⨾ r = \\
      & (\vec{x₁} ⋉ \vret(α)) ⨾ f(γ ⨾ \vlinc{x₂}) 𑊩 \vret(x₁ ⋉ \vsym{y₂}{m}) ⨾ r.
    \end{flalign*}

    Let it be a generator statement, $t₂ = g(δ) 𑊩 s₂$, of output type $\vec{k}$.
    \begin{flalign*}
& f(γ · \vlinc{x₂}) 𑊩 \vret(x₁ ⋉ \vsym{x₂}{m}) ⨾ (\vec{x₁} ⋉ \vec{m} ⋉ (g(δ) 𑊩 s₂)) ⨾ r = \\
& f(γ · \vlinc{x₂}) 𑊩 \vret(x₁ ⋉ \vsym{x₂}{m}) ⨾ g(δ · \vrinc{x₁} · \vrinc{m}) 𑊩 (\vec{x₁} ⋉ \vec{m} ⋉ s₂) ⨾ r = \\
& f(γ · \vlinc{x₂}) 𑊩 g(δ · \vrinc{x₁} · \vlinc{m}) 𑊩 \vret(x₁ ⋉ \vsym{x₂}{m} ⋊ k) ⨾ (\vec{x₁} ⋉ \vec{m} ⋉ s₂) ⨾ r = \\
& g(δ · \vrinc{x₁}) 𑊩 f(γ · \vlinc{x₂} · \vlinc{k}) 𑊩  \vret(x₁ ⋉ \vsym{x₂ + k}{m}) ⨾ (\vec{x₁} ⋉ \vec{m} ⋉ s₂) ⨾ r = \\
& g(δ · \vrinc{x₁}) 𑊩 f(γ · \vlinc{x₂ + k}) 𑊩  \vret(x₁ ⋉ \vsym{x₂ + k}{m}) ⨾ (\vec{x₁} ⋉ \vec{m} ⋉ s₂) ⨾ r = \\
& g(δ · \vrinc{x₁}) 𑊩 (\vec{x₁} ⋉ s₂) ⨾ f(γ · \vlinc{y₂}) 𑊩  \vret(x₁ ⋉ \vsym{y₂ + k}{m}) ⨾  r = \\
& g(δ · \vrinc{x₁}) 𑊩 (\vec{x₁} ⋉ s₂) ⨾ f(γ · \vlinc{y₂}) 𑊩  \vret(x₁ ⋉ \vsym{y₂}{m}) ⨾  r = \\
& (\vec{x₁} ⋉ (g(δ · \vrinc{x₁}) 𑊩 s₂)) ⨾ f(γ · \vlinc{y₂}) 𑊩  \vret(x₁ ⋉ \vsym{y₂}{m}) ⨾  r.
    \end{flalign*}
  \end{proof}
\end{proposition}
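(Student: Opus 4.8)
The plan is to prove the equation by structural induction on the term $t_2$, following the two-clause grammar of \arrowNotation{} (return statements and generator statements, as in \Cref{fig:var:arrowNotation}). Conceptually, both sides embed the single generator $f$ (with $\vec{m}$ outputs) and the term $t_2 \colon \vec{x_2} \to \vec{y_2}$ into the common enlarged context obtained by left \kl{whiskering} with $\vec{x_1}$, and the claim is simply that $f$ and $t_2$ touch disjoint variables and therefore commute; the symmetries $\vsym{x_2}{m}$ and $\vsym{y_2}{m}$ are pure bookkeeping, recording where the $\vec{m}$-block sits relative to the data processed by $t_2$. So the proof should be a chain of rewrites whose only real content is the \kl{interchange axiom}, invoked at the single step that actually swaps two generators.

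For the base case $t_2 = \vret(\alpha)$, every factor is either a generator statement or a return statement. Using the return clause of \kl{whiskering}, $\vec{z} \ltimes \vret(\alpha) = \vret(z \ltimes \alpha)$, together with the fact that \kl{composition} with a return is just \kl{rewiring}, both composites collapse to a single generator statement $f(\gamma \cdot \vlinc{x_2})$ followed by one return. The remaining equality is then a finite-function identity comparing the two reorderings, which I would discharge by expanding the symmetries, inclusions, and whiskerings pointwise, absorbing the repositioning returns with \Cref{prop:returnInterchanges}.

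For the inductive step $t_2 = g(\delta) \doline s_2$, with $s_2$ of output type $\vec{k}$, I would first push the outer whiskering inside using the generator clause of \kl{whiskering}, turning $\vec{x_1} \ltimes \vec{m} \ltimes (g(\delta)\doline s_2)$ into $g(\delta \cdot \vrinc{x_1} \cdot \vrinc{m}) \doline (\vec{x_1} \ltimes \vec{m} \ltimes s_2)$, so that $f$ and $g$ sit as two consecutive generator statements over disjoint variables. I would then commute the repositioning return past $g$ (again by the definition of \kl{composition}), apply the \kl{interchange axiom} to swap $f$ and $g$, and finally invoke the induction hypothesis on $s_2$ to carry $f$ past the remainder of $t_2$; re-folding the \kl{whiskering} and \kl{composition} definitions yields the right-hand side.

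The hard part will not be conceptual but combinatorial: tracking exactly how the symmetry block $\vsym{\,\cdot\,}{m}$ grows as $t_2$ accumulates its outputs --- for instance the passage from $\vsym{x_2}{m}$ through $\vsym{x_2 + k}{m}$ to $\vsym{y_2}{m}$ as the $\vec{k}$ outputs of $g$ are first introduced and then absorbed by the induction hypothesis --- and verifying that the left and right inclusions $\vlinc{k}$, $\vrinc{k}$ and the whiskerings compose exactly as the rewrites demand. These are routine identities on finite functions, but they are error-prone, so I would isolate each as a small auxiliary fact and keep the main induction as a clean sequence of rewrites, each justified only by a definition, the \kl{interchange axiom}, \Cref{prop:returnInterchanges}, or the induction hypothesis.
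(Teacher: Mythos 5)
Your proposal takes essentially the same route as the paper's proof: structural induction on $t_2$, with the base case collapsing both composites to the single generator statement $f$ wired by $\iota_{x_2}\circ\gamma$ followed by one return (leaving only a finite-function naturality identity for the symmetry), and the inductive case expanding the whiskering, commuting the repositioning return past $g$ by the definition of composition/rewiring, applying the interchange axiom once to swap $f$ and $g$, and then invoking the induction hypothesis on $s_2$ before refolding. Even the symmetry bookkeeping you single out as delicate --- the passage $\sigma_{x_2,m} \to \sigma_{x_2+k,m} \to \sigma_{y_2,m}$ --- is exactly the chain of intermediate terms in the paper's calculation, so the plan is sound as stated.
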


\begin{definition}[Tensoring]%
  \label{def:var:tensor}%
  The ""tensoring"" of two terms, $t₁ ፡ \vec{x₁} → \vec{y₁}$ and $t₂ ፡ \vec{x₂}
  → \vec{y₂}$, is a term $t₁ ⊗ t₂ ፡ \vec{x₁},\vec{x₂} → \vec{y₁},\vec{y₂}$,
  defined as
  $$t₁ ⊗ t₂ = (t₁ ⋊ \vec{x₂}) ⨾ (\vec{y₁} ⋉ t₂) = (\vec{x₁} ⋉ t₂) ⨾ (t₁ ⋊ \vec{x₂}).$$
\end{definition}

\begin{proposition}
  \label{prop:var:term-category}%
  \label{prop:term-category}%
  \label{prop:term-smc}%
  "Arrow notation terms" over a "signature" \(Σ\)
  form a strict and symmetric monoidal category, $\arrowF(Σ)$.
\end{proposition}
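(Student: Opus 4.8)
The plan is to assemble the symmetric monoidal structure on $\arrowF(\Sigma)$ from the pieces already in place, with \Cref{lemma:interchangeLaw} carrying the essential weight of upgrading the premonoidal structure to a genuine — and strict — monoidal one. First I would record the object-level data: objects are finite lists of generating types, their tensor is list concatenation, and the unit is the empty list. Concatenation of lists is strictly associative and strictly unital, so the object part of the structure is strict with nothing to verify. That $\arrowF(\Sigma)$ is a category is already established, and left and right whiskering are known to be functorial (\Cref{prop:whiskeringFunctorial}) and to equip the category with a premonoidal structure. Hence the remaining work is to check that the tensor of morphisms of \Cref{def:var:tensor} is a well-defined bifunctor, strictly associative and unital, and that the obvious symmetry is natural and coherent.

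The crux is bifunctoriality. The tensor $t_1 \otimes t_2$ is well-defined — independent of which whiskering is performed first — precisely by \Cref{lemma:interchangeLaw}. To obtain the interchange of composition with tensoring, $(s_1 \otimes s_2) \comp (t_1 \otimes t_2) = (s_1 \comp t_1) \otimes (s_2 \comp t_2)$, I would expand both sides using the two equivalent expressions for $\otimes$ supplied by \Cref{def:var:tensor}, and then commute the whiskered copies of the $s_i$ past those of the $t_j$ using functoriality of whiskering (\Cref{prop:whiskeringFunctorial}) together with \Cref{lemma:interchangeLaw}. Preservation of identities, $\vret(\id) \otimes \vret(\id) = \vret(\id)$, and strict associativity and unitality of $\otimes$ on morphisms then reduce to the corresponding strict identities for the ordinal-sum combinators on finite functions (the inclusions and whiskerings of functions), which hold on the nose.

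I expect this bookkeeping — keeping the repositioning symmetries $\vsym{}{}$ straight while sliding whiskerings past each other — to be the main obstacle, though it is conceptually routine once \Cref{lemma:interchangeLaw} is available; indeed the whole point of quotienting preterms by the interchange axiom is that interchange, which fails premonoidally, now holds, so that the two orders of whiskering genuinely agree.

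Finally I would supply the symmetry. Take $\sigma_{\vec{x},\vec{y}}$ to be the return term $\vret(\vsym{x}{y})$ on the symmetry finite function, with inverse $\vret(\vsym{y}{x})$. Since the composition of return terms is the composition of the underlying finite functions (as used throughout and in \Cref{prop:returnInterchanges}), the involution $\sigma_{\vec{x},\vec{y}} \comp \sigma_{\vec{y},\vec{x}} = \vret(\id)$ and the hexagon coherence both reduce, via this functoriality of $\vret$, to the corresponding identities for the symmetries $\vsym{}{}$ in the strict symmetric monoidal category of finite functions under ordinal sum, which hold strictly. Naturality of $\sigma$ — that $(t_1 \otimes t_2) \comp \sigma_{\vec{y_1},\vec{y_2}} = \sigma_{\vec{x_1},\vec{x_2}} \comp (t_2 \otimes t_1)$ — is essentially already recorded: the swap by $\vsym{}{}$ is exactly what the two orders of whiskering in \Cref{def:var:tensor} differ by, so naturality follows by the same sliding argument, using \Cref{prop:returnInterchanges} and \Cref{lemma:interchangeLaw}. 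Collecting these facts shows that $\arrowF(\Sigma)$ is a strict symmetric monoidal category.
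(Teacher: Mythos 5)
Your proposal is correct and takes essentially the same route as the paper: the paper's proof of this proposition is nothing more than a citation of the ingredients you assemble, namely the category structure from \Cref{lemma:var:compositionUnital,lemma:var:compositionAssociative}, and the monoidal structure from the tensor of \Cref{def:var:tensor} made coherent by \Cref{lemma:interchangeLaw} together with \Cref{prop:whiskeringFunctorial}, with the symmetry given (as in the paper's appendix) by $\vret$ statements on finite-function symmetries so that coherence reduces to $\mathbf{FinSet}^{op}$. Your write-up simply supplies the bifunctoriality and naturality bookkeeping that the paper leaves implicit.
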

\begin{proof}
  This follows from \Cref{lemma:var:compositionUnital,lemma:var:compositionAssociative}.
  This follows from \Cref{prop:whiskeringFunctorial,lemma:interchangeLaw}.
\end{proof}

\begin{proposition}[Partial Frobenius algebra]%
  \label{prop:term-partial-frobenius}%
  Every object of $\arrowF(Σ)$ has a partial Frobenius algebra structure.
\end{proposition}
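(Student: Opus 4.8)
The plan is to exhibit the four structure maps explicitly and then reduce every defining equation to one of two kinds: an identity between finite functions, handled by the composition and \kl{rewiring is an action} results, or one of the \axiomsOfArrowNotation{}. Since $\arrowF(Σ)$ is a strict \symmetricMonoidalCategory{} (\Cref{prop:term-smc}), and the (unit-free) tensor of two such structures, interleaved by the symmetry, again satisfies the same equations, it suffices to equip each generating type $X$ with the structure and verify the laws there. The extension to an arbitrary object $\vec{x} = X_1,\dots,X_n$ is then obtained by tensoring, with the interleavings supplied by \Cref{lemma:interchangeLaw} and \Cref{prop:whiskeringFunctorial}.

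On a single type $X$ I would take the comultiplication (copy) to be the return statement $\vret([1,1])\colon X → X,X$ and the counit (discard) to be $\vret([\,])\colon X → {}$, landing in the monoidal unit (the empty list), while the multiplication (compare) is the comparison primitive $\mu_X\colon X,X → X$ underlying the $\observe$ statements of \Cref{sec:illustrations}. For the comonoid fragment the verification is purely combinatorial: using \Cref{lemma:var:compositionAssociative}, \Cref{lemma:var:compositionUnital}, and the fact that $\vret(α) \comp \vret(β) = \vret(α \circ β)$, coassociativity, counitality and cocommutativity all collapse to the evident equalities between the finite functions $[1,1]$, $[\,]$ and $\vsym{1}{1}$ — each side rewriting to $\vret$ of the unique map into $[1]$. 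The whiskered and tensored versions of these return statements are again return statements by \Cref{def:whiskering}, so passing from a single law to its padded instance costs no new computation.

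The substantive step is the Frobenius law $(\mu_X \tensor \id) \comp (\id \tensor \delta_X) = \delta_X \comp \mu_X = (\id \tensor \mu_X) \comp (\delta_X \tensor \id)$ together with specialness $\mu_X \comp \delta_X = \id[X]$. Here copy (a return statement) and compare (a non-return primitive) are composed, so one cannot remain inside the finite-function calculus: each side must be brought to a common form by repeatedly applying \Cref{lemma:interchangeLaw} to slide the comparison past the reindexings introduced by whiskering, and then invoking the \FrobeniusAxiom{} and \idempotencyAxiom{} that are built into the quotient defining $\arrowF(Σ)$; commutativity of $\mu_X$ comes from the \symmetryAxiom{}. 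I expect this to be the main obstacle, precisely because it is the only place where the partial (subdistributional) content of compare interacts nontrivially with duplication, and because the symmetries $\vsym{y}{z}$ produced by right whiskering (\Cref{def:whiskering}) must be tracked carefully to make the two sides coincide rather than merely agree up to a permutation.

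Finally, the word \emph{partial} is what keeps the statement honest: there is in general no total unit $η\colon {} → X$, since from the empty context a return statement can only produce the empty output and an arbitrary type $X$ need carry no nullary generator. I would therefore phrase the conclusion as follows: each object carries a cocommutative comonoid $(\delta,\varepsilon)$ together with a commutative multiplication $\mu$ satisfying the Frobenius and specialness equations — i.e.\ a \emph{partial}, unit-free Frobenius algebra — and the monoidal compatibility established in the first paragraph makes this assignment uniform across all objects of $\arrowF(Σ)$.
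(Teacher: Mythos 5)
Your proposal is correct and its computational core coincides with the paper's own proof: the same structure maps (copy and discard as return statements, compare built from the $\observe$ generator), the comonoid laws discharged purely in the finite-function calculus of return statements, commutativity from the symmetry axiom, specialness from idempotency, and the Frobenius law from the interplay of composition-as-rewiring with the Frobenius substitution axiom. The genuine difference is organizational. You install the structure on single generating types and then extend to arbitrary objects by interleaved tensoring, invoking (without proof) the closure fact that commutative-comonoid-plus-Frobenius structures in a symmetric monoidal category are stable under such tensors. The paper instead defines the structure directly on an arbitrary list $\vec{X}$ — copy is $\vec{x} : \vec{X} \vdash \return(\vec{x},\vec{x})$, compare is the componentwise term $\vec{x} : \vec{X}, \vec{y} : \vec{X} \vdash \observe(\vec{x} = \vec{y}) \doline \return(\vec{x})$ — and verifies the laws once at that level of generality, so the uniformity clauses of \Cref{def:copyDiscardCompareCategory} hold by construction rather than by an extra closure lemma. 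Your route is more modular but owes a proof of the closure fact; the paper's is shorter and gets uniformity for free. Two small calibrations: in $\arrowF(\Sigma)$ there is no compare \emph{primitive} as a morphism — $\mu_X$ must be constructed as the term $x_1 : X, x_2 : X \vdash \observe(x_1 = x_2) \doline \return(x_1)$; and in the Frobenius verification the work is done by expanding the whiskerings and then composing with a return statement (rewiring), followed by the Frobenius substitution axiom on the mirrored side — \Cref{lemma:interchangeLaw} is needed only to make the tensor of morphisms well-defined, not to "slide" the comparison repeatedly.
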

\begin{proof}
  Let us write $\vec{X} = X_1,...,X_n$ and $\vec{x} = x₁,...,x_n$, with $\vec{x}
  : \vec{X}$ representing the context $x_1 : X₁, ..., x_n : X_n$. The counit and
  the comultiplication of the commutative comonoid structure are given by the
  following terms.
  \begin{align*}
    & \vec{x} : \vec{X} ⊢ \return() \\
    & \vec{x} : \vec{X} ⊢ \return(\vec{x},\vec{x})
  \end{align*}
  Proving that these are associative, unital, and commutative, is straightforward.
  For instance, in order to prove associativity, we note the following equation and its
  mirrored analogue.
  \begin{align*}
    & (\vec{x} : \vec{X} ⊢ \return(\vec{x},\vec{x})) ⨾ (\vec{X} ⋊ (\vec{x} : \vec{X} ⊢ \return(\vec{x},\vec{x}))) = \\    
    & (\vec{x} : \vec{X} ⊢ \return(\vec{x},\vec{x})) ⨾ (\vec{x₁} : \vec{X} , \vec{x} : \vec{X} ⊢ \return(\vec{x₁},\vec{x},\vec{x})) = \\   
    & (\vec{x} : \vec{X} ⊢ \return(\vec{x},\vec{x},\vec{x})).
  \end{align*}

  Our candidate multiplication requires us to introduce the shorthand
  $\observe(\vec{x} = \vec{y})$, where $\vec{x}$ and $\vec{y}$ are lists of
  variables of the same length, to mean $\observe(x₁ = y₁) 𑊩 \dots 𑊩 \observe(x_n = y_n)$.
  Under this notation, the multiplication can be defined by the following term,
  $$
  \vec{x} : \vec{X}, \vec{y} : \vec{X} ⊢ \observe(\vec{x} = \vec{y}) 𑊩 \return(\vec{x}) : \vec{X}.
  $$
  One may check that this is commutative — thanks to the symmetry axiom. Let us
  show directly that it satisfies the axiom of partial Frobenius algebras.
  \begin{flalign*}
    & (\vec{X} ⋊ (\vec{y} : \vec{X} ⊢ \return(\vec{y},\vec{y}))) ⨾ 
    ((\vec{x} : \vec{X}, \vec{y} : \vec{Y} ⊢ \observe(\vec{x} = \vec{y}) 𑊩 \return(\vec{x})) ⋉ \vec{X}) & = \\
    & (\vec{x} : \vec{X}, \vec{y} : \vec{X} ⊢ \return(\vec{x},\vec{y},\vec{y})) ⨾ 
      (\vec{x} : \vec{X}, \vec{y} : \vec{Y}, \vec{y₂} : \vec{Y} ⊢ \observe(\vec{x} = \vec{y}) 𑊩 \return(\vec{x},\vec{y₂}))
      & = \\
    & (\vec{x} : \vec{X}, \vec{y} : \vec{X} ⊢ \observe(\vec{x} = \vec{y}) 𑊩 \return(\vec{x},\vec{y})).
  \end{flalign*}
  This is enough to prove that every object has a partial Frobenius algebra. By
  construction, these partial Frobenius algebras are uniform.
\end{proof}
\end{toappendix}

\subsection{Observe-arrow notation}%
\label{sec:copyDiscardCompareArrowNotation}%

Our only addition to \arrowNotation{} will be an ``$\observe(x=y)$'' statement.
It declares that the equation $x=y$ should hold for the elements in the support
of the \subdistribution{}, at that point in the computation\footnote{In
\Cref{fig:sailorchild:solution} we use statements of the form $\observe(A
\in\ports)$, where the inhabitation $A\in\ports$ can be reformulated as equation
$\{A\} = \ports \cap \{A\}$.}. All monomials containing other elements are
removed, as in Definition~\ref{def:updating}~\eqref{def:updating:restriction}.
The ``$\observe$'' statement allows us to translate an operational description
of the probabilistic decision problem into a formal mathematical object.

\begin{definition}[Observe-arrow notation]%
  \label{fig:do-notation-axioms}%
  \defining{linkAxiomsOfArrowNotation}{}
""Observe-arrow notation"" is \arrowNotation{} endowed with an extra binary
generator, $\observe ∈ Σ(X, X; X)$, for each type of the signature, $X ∈
Σ_{obj}$. Terms are additionally quotiented by the least \kl{congruence} relation
$(≈)$ satisfying
  \begin{enumerate}
    \item commutativity, $(\observe(x₁, x₂) 𑊩 t) ≈ (\observe(x₂, x₁) 𑊩 t)$;
    \item Frobenius rule, $(\observe(x₁, x₂) 𑊩 t) ≈ (\observe(x₁, x₂) 𑊩 \subst{t}{x₁}{x₂})$;
    \item idempotency, $(\observe(x, x) 𑊩 t) ≈ t$.
  \end{enumerate}
\end{definition}

\begin{remark}
  For semantic clarity, we write $\observe(x = y)$ for $\observe(x,y)$. Given
  any nullary generator, $a ∈ Σ(;Y)$, we write $\observe(x = a)$ as a shorthand
  for $y ← a() 𑊩 \observe(x = y)$. Alternatively, we could formalise a
  separation between values and computations --- as done in the context of Freyd
  categories \cite{power1999closed} --- but this is out of the scope of this
  paper.
\end{remark}

\subsection{The categorical view}

\ArrowNotationTerms{} over a signature $Σ$ compose when they have matching
output and context types; this composition is associative and unital. In other
words, \arrowNotationTerms{} over a signature $Σ$ form a category. 
We write composition in diagrammatic order $(⨾)$: given a term
$Γ ⊢ s ፡ Δ$ with $Δ = Y_1, ..., Y_{\leny}$, and a term $Γ' ⊢ t ፡ Δ'$ with
$Γ' = y_1 : Y_1, ..., y_{\leny} : Y_{\leny}$, we can derive a term
$Γ ⊢ (s ⨾ t) ፡ Δ'$ defined by structural induction and the following two clauses:
\begin{enumerate}
  \item $(\return(x_{α(1)},...,x_{α(m)})) ⨾ t = α ∗ t$, using \kl{rewiring};
  \item $(f(x_{β(1)},...,x_{β(k)}) 𑊩 s') ⨾ t = f(x_{β(1)},...,x_{β(k)}) 𑊩 (s' ⨾ t)$.
\end{enumerate}

\begin{definition}[Composition]
  \label{def:term-composition}
  \defining{linkTermComposition}
  The \emph{\intro{composition}} of an \arrowNotationTerm{} $s ፡ X_1,...,X_n →
  Y_1,...,Y_m$ and an \arrowNotationTerm{} $t ፡ Y_1,...,Y_m → Z_1,...,Z_p$ —
  with matching output and context types — is a term, $(s ⨾ t) ፡ \vec{x} →
  \vec{z}$, inductively defined by $\vret(α) ⨾ t = α ∗ t$ and $(f(γ) 𑊩 s) ⨾ t =
  f(γ) 𑊩 (s ⨾ t)$. 
\end{definition}

It is a direct proof to show that \kl{composition} is well-defined under
the \interchangeAxiom{} and that is, moreover, unital and associative. Because of this
definition, we can omit parentheses when nesting "composition" $(⨾)$ and
generators $(𑊩)$. 

"Observe-arrow notation terms" form, moreover, a \copyDiscardCompareCategory{}
--- in fact, the free one over a \kl{signature}: they form a sound and complete
language for \kl{copy-discard-compare categories}.

\begin{definition}[Copy-discard-compare category]
  \label{def:copyDiscardCompareCategory}%
  \defining{linkCopyDiscardCompareCategory}%
  \defining{linkCopyDiscardCategory}%
  \defining{linkStrictCopyDiscardCompareCategory}%
  A \intro{copy-discard-compare category} is a symmetric monoidal category, $𝔸$,
  in which every object, $X ∈ 𝔸$, has a compatible partial Frobenius structure,
  consisting of a counit or \emph{discard}, $ε_X ፡ X → I$; a comultiplication or
  \emph{copy}, $δ_X ፡ X → X ⊗ X$; and multiplication or \emph{compare}, $μ_x ፡ X
  ⊗ X → X$; all satisfying the following axioms.
  \begin{enumerate}
    \item Comultiplication is associative, $δ_X ⨾ (δ_X ⊗ \id{}) = δ_X ⨾ (\id{} ⊗ δ_X)$.
    \item Counit is neutral for comultiplication, $δ_X ⨾ (ε_X ⊗ \id{}) = \id{} = δ_X ⨾ (\id{} ⊗ ε_X)$.
    \item Multiplication is associative, $μ_X ⨾ (μ_X ⊗ \id{}) = μ_X ⨾ (\id{} ⊗ μ_X)$.
    \item Multiplication is right inverse to comultiplication, $δ_X ⨾ μ_X = \id{}$.
    \item Multiplication satisfies the Frobenius rule, $(δ_X ⊗ \id{}) ⨾ (\id{} ⊗ μ_X) = (\id{} ⊗ δ_X) ⨾ (μ_X ⊗ \id{}).$
    \item Comultiplication is uniform, $δ_{X ⊗ Y} = (δ_{X} ⊗ δ_{Y}) ⨾ (\id{} ⊗ σ ⊗ \id{})$, and $δ_I = \id{}$.
    \item Multiplication is uniform, $μ_{X ⊗ Y} = (\id{} ⊗ σ_{Y,X} ⊗ \id{}) ⨾ (μ_{X} ⊗ μ_{Y})$, and $μ_I = \id{}$.
    \item Counit is uniform, $ε_{X ⊗ Y} = ε_X ⊗ ε_Y$ and $ε_I = \id{}$.
    \item Comultiplication is commutative, $δ_X ⨾ σ_{X,X} = δ_X$.
    \item Multiplication is commutative, $σ_{X,X} ⨾ μ_{X} = μ_X$.
  \end{enumerate}
  In particular, note that these components are not required to form
  natural transformations, and also that there is no unit for the
  multiplication, $μ ፡ X ⊗ X → X$.

\emph{Strict copy-discard-compare categories} are symmetric
strict monoidal categories with this same structure.
\StrictCopyDiscardCompareCategories{} and strict monoidal functors preserving
copy, discard, and compare form a category, written as \cdcCat{}.
\emph{Copy-discard categories} are defined as \kl{copy-discard-compare categories}
without multiplication \cite{corradini1999algebraic,fritz2020synthetic}.
\end{definition}

\ArrowNotationTerms{} form a \copyDiscardCompareCategory{}: for the single variable case,
the copy, discard, and compare terms are given by the following terms, respectively.
\begin{enumerate}
  \item $δ_X = (x : X ⊢ \return(x,x))$;
  \item $ε_X = (x : X ⊢ \return())$;
  \item $μ_X = (x₁ : X, x₂ : X ⊢ \observe(x₁ = x₂) 𑊩 \return(x₁))$.
  \end{enumerate}

As a second example, functions of type \(X \to \Subd(Y)\) are the morphisms of a
\copyDiscardCompareCategory{}. All the examples from \Cref{sec:illustrations}
take semantics in this category, see \Cref{sec:functorialSemantics}. Details of
the next result, which further justify this semantics, will appear as a separate
appendix.

\begin{toappendix}
\begin{definition}[Copy-discard-compare functor]%
  \label{def:copyDiscardCompareFunctor}%
  \defining{linkCopyDiscardCompareFunctor}%
  A \emph{copy-discard-compare functor} between two
  \copyDiscardCompareCategories{} is a strict monoidal functor that preserves
  the commutative comonoid and multiplication structures.
  \CopyDiscardCompareFunctors{} between \copyDiscardCompareCategories{} form a
  category $\cdcCat$.
\end{definition}
\end{toappendix}

\intro*
\begin{propositionrep}[Copy-discard-compare category of terms]
  \defining{linkArrowF}%
  \AP\kl[observe-arrow notation copy-discard-compare category of terms]{}%
  \kl{Observe-arrow notation terms} over a \kl{signature} $Σ$, quotiented by
  $α$-equivalence and the \axiomsOfArrowNotation{}, form a
  \strictCopyDiscardCompareCategory{}, $\obsArrow(Σ)$.
\end{propositionrep}  
\begin{proof}[Proof]
  We have already defined term composition, and
  \Cref{prop:composition-welldef,prop:term-category} show that it is
  well-defined, associative, and unital. \Cref{def:var:tensor} gives monoidal
  products, and \Cref{prop:term-smc} shows that it is well-defined, symmetric,
  associative, and unital. These give a symmetric strict monoidal category.
  \Cref{prop:term-partial-frobenius} shows the copy-discard-compare structure.
  These results construct a \copyDiscardCompareCategory{}.
\end{proof}

In fact, \kl{observe-arrow notation terms} construct an adjunction between
\strictCopyDiscardCompareCategories{} and \kl{signatures}. Details are again in
the appendix.

\begin{toappendix}
\begin{lemma}
  \label{lemma:inclusion}%
  Let $Σ$ be any \signature{}. There exists a \signatureHomomorphism{}
  $u_{Σ} ፡ Σ → \forgetF(\arrowF(Σ))$.
\end{lemma}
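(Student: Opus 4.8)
The plan is to exhibit $u_{Σ}$ as the unit of the free–forgetful adjunction between \signatures{} and \strictCopyDiscardCompareCategories{}, reading it off directly from the term calculus. First I would unfold the codomain. The category $\arrowF(Σ)$ has as objects the finite lists of generating types of $Σ$, with tensor given by concatenation and monoidal unit the empty list; its morphisms are the \kl{observe-arrow notation terms}. The forgetful functor $\forgetF$ sends a \strictCopyDiscardCompareCategory{} $𝔸$ to the \signature{} whose generating types are the objects of $𝔸$ and whose generator set on inputs $A_1, …, A_n$ and outputs $B_1, …, B_m$ is the hom-set $𝔸(A_1 ⊗ ⋯ ⊗ A_n,\; B_1 ⊗ ⋯ ⊗ B_m)$. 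Concretely, $\forgetF(\arrowF(Σ))$ has lists of $Σ$-types as its generating types, and as its generators the terms between the corresponding concatenations.

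Then I would define $u_{Σ}$ by supplying the two pieces of data required of a \signatureHomomorphism{}. On types, send each generating type $X ∈ Σ_{type}$ to the singleton list $[X]$, an object of $\arrowF(Σ)$. On generators, given $f ∈ Σ(X_1, …, X_n; Y_1, …, Y_m)$, I must produce an element of $\arrowF(Σ)\big([X_1] ⊗ ⋯ ⊗ [X_n],\; [Y_1] ⊗ ⋯ ⊗ [Y_m]\big)$; because the monoidal structure is \emph{strict}, the tensor of singletons is exactly the concatenation, so this hom-set is $\arrowF(Σ)\big([X_1, …, X_n], [Y_1, …, Y_m]\big)$ on the nose. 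I send $f$ to the canonical term that applies it once and returns its outputs,
\[
  x_1 : X_1, …, x_n : X_n ⊢ \big(y_1, …, y_m ← f(x_1, …, x_n) \doline \return(y_1, …, y_m)\big),
\]
which in the combinator presentation of \Cref{fig:var:arrowNotation} is $f(\id[n]) \doline \vret(α)$, where $α ፡ [m] → [n+m]$, $α(i) = n + i$, is the inclusion picking out the $m$ freshly bound output variables of $f$ (rather than any of the $n$ input variables).

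Finally I would observe that nothing further needs checking. The definition of a morphism of \signatures{} imposes \emph{no} equations: it asks only for a function on types together with, for each input/output profile, a function on the generator sets whose target profile is the image profile. Both data have just been supplied, and the typing of the generator rule guarantees that $u_{Σ}(f)$ lands in the required hom-set; moreover the displayed preterm is a well-formed representative of an equivalence class under $α$-equivalence and the \axiomsOfArrowNotation{}, so the assignment descends to a genuine morphism of $\arrowF(Σ)$. The only point requiring care — and the sole ``obstacle'' — is the index bookkeeping: one must use strictness to identify $[X_1] ⊗ ⋯ ⊗ [X_n]$ with $[X_1, …, X_n]$, and choose the finite function $α$ so that the return fetches precisely the outputs of $f$. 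Once these conventions are fixed the construction is immediate, and this $u_{Σ}$ is exactly the component of the unit at $Σ$ used to build the adjunction announced before the statement.
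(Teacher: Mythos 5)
Your construction is exactly the paper's proof: send each type $X$ to the singleton list $[X]$ and each generator $f$ to the term consisting of a single generator statement followed by a return of its freshly bound outputs (the paper writes this compactly as $u(f) = f(x) \comp \vret(y)$, which is your $f(\id[n]) \doline \vret(\alpha)$ with $\alpha(i)=n+i$). Your additional remarks — that signature morphisms impose no equations and that strictness identifies the tensor of singletons with the concatenated list — are correct and only make explicit what the paper leaves implicit.
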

\begin{proof}
  Let us define the homomorphism on types: given any \signature{} type, $X ∈
  Σ_{type}$, the homomorphism sends it to the list with a single element, $u(X)
  = [X]$. Let us now define it on generators; given $f ፡ \vec{x} → \vec{y}$ for any
  two lists of types $\vec{x} = X₁,...,X_x$ and $\vec{y} = Y₁,...,Y_y$, the
  homomorphism sends it to the term containing a single generator statement,  
  $$u(f) = f(x) ⨾ \vret(y).$$
  This term is of type $[X₁],...,[X_x] → [Y₁],...,[Y_y]$, as needed to determine 
  a signature homomorphism.
\end{proof}
\end{toappendix}

\begin{toappendix}
\begin{lemma}%
  \label{lemma:existsFactoring}%
  \defining{linkSignatureFixed}%
  Let $Σ$ be a \signature{} and let $ℂ$ be a \copyDiscardCompareCategory{}
  endowed with a \signatureHomomorphism{}, $\vv ፡ Σ → \forgetF(ℂ)$. There exists at
  most a unique \copyDiscardCompareFunctor{}, $F ፡ \arrowF(Σ) → ℂ$, factoring
  the homomorphism, $\vv = u ⨾ \forgetF(F)$, through the inclusion in \Cref{lemma:inclusion}.
\end{lemma}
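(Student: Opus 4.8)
The plan is to prove uniqueness by structural induction on terms, exploiting that a \copyDiscardCompareFunctor{} is forced to preserve every piece out of which \kl{arrow notation terms} are built. First I would observe that the action of $F$ on objects is already pinned down: since $F$ is strict monoidal it must send a list of types $[X_1,\ldots,X_n]$ to $\vv(X_1) \otimes \cdots \otimes \vv(X_n)$, and the factoring condition $\vv = u \comp \forgetF(F)$ fixes $F$ on the one-element list $[X]$ to be $\vv(X)$. The same factoring condition, combined with the definition of $u$ from \Cref{lemma:inclusion} — namely $u(f) = f(x) \comp \vret(y)$ — forces the value of $F$ on each generator, $F(f(x) \comp \vret(y)) = \vv(f)$.

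Next I would show that $F$ is forced on the canonical families of terms coming from the \copyDiscardCompareCategory{} structure. Being a \copyDiscardCompareFunctor{}, $F$ preserves copy $\delta$, discard $\varepsilon$, compare $\mu$, and the symmetry; and every return statement $\vret(\alpha)$, for a finite function $\alpha$, is assembled purely from copies, discards, and symmetries — it duplicates variables where $\alpha$ is non-injective, discards those outside its image, and permutes according to $\alpha$. Hence $F(\vret(\alpha))$ is uniquely determined. Likewise each $\observe$ statement is, by definition, the compare term $\mu$, so $F$ is forced there as well.

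The inductive step treats a general preterm, which is either a return statement (the base case, already handled) or a generator statement $f(\beta) 𑊩 t$. For the latter I would exhibit the standard decomposition of the statement into pieces that $F$ must preserve: a structural $\vret(-)$ term routes the context variables into the arguments selected by $\beta$ (via copying) while threading the remaining context past $f$, then $f$ is applied alongside the identity on that context, and finally the continuation $t$ is composed on. Writing this as a composite of return morphisms, a tensor of the generator $f$ with identities, and $t$, and using that $F$ preserves composition and tensor, is fixed on the structural returns and on $f$, and is fixed on $t$ by the induction hypothesis, we conclude that $F(f(\beta) 𑊩 t)$ is determined. Since every term is an equivalence class of such preterms and $F$ descends to the quotient, this shows that $F$ is uniquely determined wherever it exists, giving at most one factoring \copyDiscardCompareFunctor{}.

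The main obstacle I anticipate is making the decomposition of a generator statement fully precise: one must name the exact structural return morphism that feeds the context into $\beta$ and carries the unused variables through, and then check that this presentation is compatible with the \interchangeAxiom{} and the \axiomsOfArrowNotation{}, so that the argument is well-defined on terms rather than merely on preterms. Once this decomposition is in hand, preservation of composition, tensor, and the comonoid/compare/symmetry structure does the rest, and the induction closes.
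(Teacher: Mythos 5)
Your proposal is correct and follows essentially the same route as the paper's proof: objects are forced by strict monoidality and the factoring condition, return statements are forced because finite functions decompose into copies, discards, and symmetries (the paper makes this precise via the freeness of $(\mathbf{FinSet}^{op},+)$ as a copy-discard category), and generator/observe statements are forced by the same decomposition you describe — copy the context, route the selected arguments via a structural return, tensor $\vv(f)$ (respectively the compare $μ$) with the identity on the context, and compose with the continuation, which is handled by induction. The obstacle you flag at the end is exactly what the paper's displayed equation chains resolve; note only that compatibility with the \interchangeAxiom{} is needed for the companion existence lemma rather than for uniqueness, since determining $F$ on any representative preterm already determines it on the equivalence class.
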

\begin{proof}
  Firstly, let us note that the fact that $F$ is a
  strict monoidal functor and that the objects of $\arrowF(Σ)$ are lists forces
  $F$ to be defined on objects as $F(\vec{x}) = F(X₁,...,Xₙ) = v(X₁) ⊗ … ⊗ v(Xₙ).$%
  
  Let $t : \vec{x} → \vec{y}$ in $\arrowF(Σ)$ be a term; we will prove by
  structural induction that its image under $F$ is determined by the fact that
  it is a \copyDiscardCompareFunctor{}.   
  Let the term be a $\return$ statement, $t = \vret(α)$. The statement is
  determined by the function $α ፡ m → n$, which can be decomposed as a symmetric
  monoidal term with commutative comonoids --- uniquely up to the axioms of
  symmetric monoidal categories and commutative comonoids, by 
  \Cref{prop:freeCopyDiscardCategory}. Because $F$ is a strict symmetric
  monoidal functor and must additionally preserve comonoids, its value on this
  term is determined to be $F(\vret(α)) = \{α\}$.

  Let the term be an $\observe$ statement, $t = \vobs(β) 𑊩 s$. The following
  decomposition will rewrite it in terms of composition, whiskering, $\return$
  statements, compare statements, and statements on the image of $v$. These must
  all be preserved by the \copyDiscardCompareFunctor{}, $F$; moreover, its image
  must be determined in the rest of the term, $s$, by structural induction.  We
  use \emph{(i)} unitality of copying, \emph{(ii)} naturality of copying,
  \emph{(iii)} functoriality, \emph{(iv)} the properties of
  \copyDiscardCompareFunctors{}, \emph{(v)} preservation of whiskering,
  \emph{(vi)} definition of \kl{rewiring}, \emph{(vii)} functoriality, and
  \emph{(viii)} the properties of \copyDiscardCompareFunctors{}.
  \begin{align*}
    & F(\vobs(β) 𑊩 s) & \smash{\overset{(i)}{=}} \\
    & F(\vobs(β) 𑊩 \vret(δ_x) 𑊩 \vret(x) ⨾ s) & \smash{\overset{(ii)}{=}} \\
    & F(\vret(δ_x) ⨾ \vobs(β · \vrinc{x}) 𑊩 \vret(x) ⨾ s ) & \smash{\overset{(iii)}{=}}  \\
    & F(\vret(δ_x)) ⨾ F(\vobs(β · \vrinc{x}) 𑊩 \vret(x)) ⨾ F(s) & \smash{\overset{(iv)}{=}}  \\
    & δ_{\vec{x}} ⨾ F(\vec{x} ⋉ (\vobs(β) 𑊩 \vret([]))) ⨾ F(s) & \smash{\overset{(v)}{=}} \\
    & δ_{\vec{x}} ⨾ \id_{\vec{x}} ⊗ F(\vobs(β) 𑊩 \vret([])) ⨾ F(s) & \smash{\overset{(vi)}{=}} \\
    & δ_{\vec{x}} ⨾ \id_{\vec{x}} ⊗ (F(\vret(β)) ⨾ F(\vobs([1,2]) 𑊩 \vret([]))) ⨾ F(s) & \smash{\overset{(vii)}{=}} \\
    & δ_{\vec{x}} ⨾ \id_{\vec{x}} ⊗ (\{β\} ⨾ F(\vobs([1,2]) 𑊩 \vret([1])) ⨾ F(\vret([]))) ⨾ F(s) & \smash{\overset{(viii)}{=}} \\
    & δ_{\vec{x}} ⨾ \id_{\vec{x}} ⊗ (\{β\} ⨾ μ ⨾ ε) ⨾ F(s).
  \end{align*}%
  
  Let the term be a generator statement, $t = f(γ) 𑊩 s$. Again, the following
  decomposition uses only composition, whiskering, $\return$ statements, and
  statements on the image of $v$. These must all be preserved by the
  \copyDiscardCompareFunctor{}, $F$, and moreover its image must be determined
  in the rest of the term, $s$, by structural induction.
  \begin{align*}
    & F(f(γ) ⨾ s) = \\
    & F(f(γ) ⨾ (x + m) ∗ s) = \\
    & F(f(γ) ⨾ (x + m) ∗ s) = \\
    & F(f(γ) ⨾ \vret(x + m) ⨾ s) = \\
    & F(f(γ) ⨾ \vret((x ⋉ \vrinc{x}) ⨾ δ_{x}) ⨾ s) = \\
    & F(f(γ · \vrinc{x} · δ_{x}) ⨾ δ_{x} ∗ \vret(x ⋉ \vrinc{x}) ⨾ s) = \\
    & F(δ_{x} ∗ (f(γ · \vrinc{x}) ⨾ \vret(x ⋉ \vrinc{x})) ⨾ s) = \\
    & F(δ_{x} ∗ (\vec{x} ⋉ (f(γ) ⨾ \vret(\vrinc{x}))) ⨾ s) = \\
    & F(δ_{x} ∗ (\vec{x} ⋉ (f(γ) ⨾ \vret(\vrinc{n} ⨾ (γ ⋊ m)))) ⨾ s) = \\
    & F(δ_{x} ∗ (\vec{x} ⋉ (γ ∗ f(n) 𑊩 (γ ⋊ m) ∗ \vret(\vrinc{n}))) ⨾ s) = \\
    & F(δ_{x} ∗ (\vec{x} ⋉ (\vret(γ) ⨾ f(n) 𑊩 \vret(\vrinc{n}))) ⨾ s) = \\
    & F(\vret(δ_{x}) ⨾ (\vec{x} ⋉ (\vret(γ) ⨾ f(n) 𑊩 \vret(\vrinc{n}))) ⨾ s) = \\
    & F(\vret(δ_{x})) ⨾ F(\vec{x} ⋉ ((\vret(γ)) ⨾ F(f(n) 𑊩 \vret(\vrinc{n})))) ⨾ F(s) = \\
    & F(\vret(δ_{x})) ⨾ (\id_{\vec{x}} ⊗ (F(\vret(γ)) ⨾ F(f(n) 𑊩 \vret(\vrinc{n})))) ⨾ F(s) = \\
    & δ_{\vec{x}} ⨾ (\id_{\vec{x}} ⊗ (\{γ\} ⨾ \vv(f))) ⨾ F(s). \qedhere
  \end{align*}
\end{proof}
\end{toappendix}

\begin{lemmarep}
  \label{lemma:factoring}
  Let $Σ$ be a \kl{signature} and let $ℂ$ be a \copyDiscardCompareCategory{}
  endowed with a \signatureHomomorphism{} $v ፡ Σ → \forgetF(ℂ)$. There exists 
  a unique \copyDiscardCompareFunctor{} $F ፡ \arrowF(Σ) → ℂ$ factoring
  the homomorphism, $v = u ⨾ \forgetF(F)$, through the canonical inclusion
  $u ፡ Σ → \forgetF(\arrowF(Σ))$.
\end{lemmarep}
\begin{proof}
  We have already shown that the only possible functor must be determined by the
  argument in \Cref{lemma:existsFactoring}. Let us rewrite that assignment in
  terms of string diagrams \cite{selinger2010survey}: the few calculations with
  monoidal terms we need will be easier to follow in this notation.
  \begin{enumerate}
    \item On $\return$ statements, we define
    $F(\vret(α)) = \{α\},$
    where $α ፡ m → n$.
    This is \Cref{diag:fig:functor-definition}, right.
    \item On $\observe$ statements, we define
    $$
    F(\vobs(β) 𑊩 s) = 
    δ_{\vec{x}} ⨾ (\id_{\vec{x}} ⊗ (\{β\} ⨾ μ ⨾ ε)) ⨾ F(s),
    $$
    must be assigned to, where $β ፡ 2 → n$. This is
    \Cref{diag:fig:functor-definition}, middle.
    \item On generator statements, we define
    $$
    F(f(γ) 𑊩 s) = δ_{\vec{x}} ⨾ (\id_{\vec{x}} ⊗ (\{γ\} ⨾ \vv(f))) ⨾ F(s),
    $$
    where $f ∈ Σ(\vec{X};\vec{Y})$ and $γ ፡ m → n$.
    This is \Cref{diag:fig:functor-definition}, left.
  \end{enumerate}
  \begin{figure}[h!t]
    \centering

\tikzset{every picture/.style={line width=0.75pt}} %

\begin{tikzpicture}[x=0.75pt,y=0.75pt,yscale=-1,xscale=1]
\draw    (460,85) -- (460,75) ;
\draw    (375,45) .. controls (374.83,35.43) and (366.5,34.85) .. (355,35) ;
\draw    (355,35) -- (355,20) ;
\draw  [fill={rgb, 255:red, 0; green, 0; blue, 0 }  ,fill opacity=1 ] (352,35) .. controls (352,33.34) and (353.34,32) .. (355,32) .. controls (356.66,32) and (358,33.34) .. (358,35) .. controls (358,36.66) and (356.66,38) .. (355,38) .. controls (353.34,38) and (352,36.66) .. (352,35) -- cycle ;
\draw   (360,45) -- (390,45) -- (390,60) -- (360,60) -- cycle ;
\draw    (355,95) -- (355,35) ;
\draw   (345,95) -- (385,95) -- (385,110) -- (345,110) -- cycle ;
\draw    (375,70) -- (375,60) ;
\draw   (360,70) -- (390,70) -- (390,85) -- (360,85) -- cycle ;
\draw    (375,95) -- (375,85) ;
\draw    (365,120) -- (365,110) ;
\draw    (460,45) .. controls (459.83,35.43) and (451.5,34.85) .. (440,35) ;
\draw    (440,35) -- (440,20) ;
\draw  [fill={rgb, 255:red, 0; green, 0; blue, 0 }  ,fill opacity=1 ] (437,35) .. controls (437,33.34) and (438.34,32) .. (440,32) .. controls (441.66,32) and (443,33.34) .. (443,35) .. controls (443,36.66) and (441.66,38) .. (440,38) .. controls (438.34,38) and (437,36.66) .. (437,35) -- cycle ;
\draw   (445,45) -- (475,45) -- (475,60) -- (445,60) -- cycle ;
\draw    (440,95) -- (440,35) ;
\draw   (430,95) -- (470,95) -- (470,110) -- (430,110) -- cycle ;
\draw    (450,65) -- (450,60) ;
\draw    (450,120) -- (450,110) ;
\draw    (470,65) -- (470,60) ;
\draw    (460,75) .. controls (450.09,75.13) and (450.09,74.95) .. (450,65) ;
\draw    (460,75) .. controls (469.91,74.77) and (470.09,74.95) .. (470,65) ;
\draw  [fill={rgb, 255:red, 255; green, 255; blue, 255 }  ,fill opacity=1 ] (457.5,75) .. controls (457.5,73.62) and (458.62,72.5) .. (460,72.5) .. controls (461.38,72.5) and (462.5,73.62) .. (462.5,75) .. controls (462.5,76.38) and (461.38,77.5) .. (460,77.5) .. controls (458.62,77.5) and (457.5,76.38) .. (457.5,75) -- cycle ;
\draw  [fill={rgb, 255:red, 0; green, 0; blue, 0 }  ,fill opacity=1 ] (457.5,85) .. controls (457.5,83.62) and (458.62,82.5) .. (460,82.5) .. controls (461.38,82.5) and (462.5,83.62) .. (462.5,85) .. controls (462.5,86.38) and (461.38,87.5) .. (460,87.5) .. controls (458.62,87.5) and (457.5,86.38) .. (457.5,85) -- cycle ;
\draw   (520,70) -- (550,70) -- (550,85) -- (520,85) -- cycle ;
\draw    (535,100) -- (535,85) ;
\draw    (535,70) -- (535,55) ;
\draw  [draw opacity=0] (400,70) -- (420,70) -- (420,85) -- (400,85) -- cycle ;
\draw  [draw opacity=0] (485,70) -- (505,70) -- (505,85) -- (485,85) -- cycle ;
\draw  [draw opacity=0] (555,71) -- (575,71) -- (575,86) -- (555,86) -- cycle ;

\draw (375,52.5) node  [font=\small]  {$\{\alpha \}$};
\draw (375,77.5) node  [font=\small]  {$f$};
\draw (365,102.5) node  [font=\small]  {$cont$};
\draw (353,23.4) node [anchor=north east] [inner sep=0.75pt]    {$\Gamma $};
\draw (363,113.4) node [anchor=north east] [inner sep=0.75pt]    {$\Delta $};
\draw (392,88.4) node [anchor=north west][inner sep=0.75pt]  [font=\scriptsize]  {$Y$};
\draw (392,63.4) node [anchor=north west][inner sep=0.75pt]  [font=\scriptsize]  {$X$};
\draw (460,52.5) node  [font=\small]  {$\{\alpha \}$};
\draw (450,102.5) node  [font=\small]  {$cont$};
\draw (438,23.4) node [anchor=north east] [inner sep=0.75pt]    {$\Gamma $};
\draw (448,113.4) node [anchor=north east] [inner sep=0.75pt]    {$\Delta $};
\draw (472,63.4) node [anchor=north west][inner sep=0.75pt]  [font=\scriptsize]  {$X$};
\draw (535,77.5) node  [font=\small]  {$\{\alpha \}$};
\draw (533,88.4) node [anchor=north east] [inner sep=0.75pt]    {$\Delta $};
\draw (532,52.4) node [anchor=north east] [inner sep=0.75pt]    {$\Gamma $};
\draw (410,77.5) node  [font=\small]  {$;$};
\draw (495,77.5) node  [font=\small]  {$;$};
\draw (565,78.5) node  [font=\small]  {$;$};

\end{tikzpicture}
     \caption{Translation of generator, observe, and return statements, respectively.}
    \label{diag:fig:functor-definition}
  \end{figure}

  We must first prove that this assingment is well-defined with respect to the
  \axiomsOfArrowNotation{}: (1) the \interchangeAxiom{} follows from
  associativity of the comonoid structure (see
  \Cref{fig:diagramThreeInterchange}); (2) the \symmetryAxiom{} follows from
  commutativity of the comparator structure (see
  \Cref{fig:diagramThreeFrobenius}, left); (3) the \FrobeniusAxiom{} follows
  from the properties of a partial Frobenius algebra (see
  \Cref{fig:diagramThreeFrobenius}, middle); (4) the
  \idempotencyAxiom{} follows from the special axiom of a partial Frobenius
  algebra — multiplication is a right inverse to comultiplication (see
  \Cref{fig:diagramThreeFrobenius}, right).%
  \begin{figure}[ht!]
    \centering
    \makebox[\textwidth][c]{

\tikzset{every picture/.style={line width=0.75pt}} %

\begin{tikzpicture}[x=0.75pt,y=0.75pt,yscale=-1,xscale=1]
\draw    (70,100) .. controls (69.83,90.43) and (36.5,89.85) .. (25,90) ;
\draw  [draw opacity=0][fill={rgb, 255:red, 255; green, 255; blue, 255 }  ,fill opacity=1 ] (42,90) .. controls (42,88.34) and (43.34,87) .. (45,87) .. controls (46.66,87) and (48,88.34) .. (48,90) .. controls (48,91.66) and (46.66,93) .. (45,93) .. controls (43.34,93) and (42,91.66) .. (42,90) -- cycle ;
\draw    (45,40) .. controls (44.83,30.43) and (36.5,29.85) .. (25,30) ;
\draw    (25,30) -- (25,15) ;
\draw  [fill={rgb, 255:red, 0; green, 0; blue, 0 }  ,fill opacity=1 ] (22,30) .. controls (22,28.34) and (23.34,27) .. (25,27) .. controls (26.66,27) and (28,28.34) .. (28,30) .. controls (28,31.66) and (26.66,33) .. (25,33) .. controls (23.34,33) and (22,31.66) .. (22,30) -- cycle ;
\draw   (30,40) -- (60,40) -- (60,55) -- (30,55) -- cycle ;
\draw    (25,90) -- (25,30) ;
\draw   (15,150) -- (85,150) -- (85,165) -- (15,165) -- cycle ;
\draw    (45,65) -- (45,55) ;
\draw   (30,65) -- (60,65) -- (60,80) -- (30,80) -- cycle ;
\draw    (45,150) -- (45,80) ;
\draw    (50,175) -- (50,165) ;
\draw  [fill={rgb, 255:red, 0; green, 0; blue, 0 }  ,fill opacity=1 ] (22,90) .. controls (22,88.34) and (23.34,87) .. (25,87) .. controls (26.66,87) and (28,88.34) .. (28,90) .. controls (28,91.66) and (26.66,93) .. (25,93) .. controls (23.34,93) and (22,91.66) .. (22,90) -- cycle ;
\draw   (55,100) -- (85,100) -- (85,115) -- (55,115) -- cycle ;
\draw    (70,125) -- (70,115) ;
\draw   (55,125) -- (85,125) -- (85,140) -- (55,140) -- cycle ;
\draw    (70,150) -- (70,140) ;
\draw    (25,150) -- (25,90) ;
\draw    (160,95) .. controls (159.83,85.43) and (126.5,84.85) .. (115,85) ;
\draw  [draw opacity=0][fill={rgb, 255:red, 255; green, 255; blue, 255 }  ,fill opacity=1 ] (132,85) .. controls (132,83.34) and (133.34,82) .. (135,82) .. controls (136.66,82) and (138,83.34) .. (138,85) .. controls (138,86.66) and (136.66,88) .. (135,88) .. controls (133.34,88) and (132,86.66) .. (132,85) -- cycle ;
\draw    (135,40) .. controls (134.83,30.43) and (126.5,29.85) .. (115,30) ;
\draw    (115,30) -- (115,15) ;
\draw  [fill={rgb, 255:red, 0; green, 0; blue, 0 }  ,fill opacity=1 ] (112,30) .. controls (112,28.34) and (113.34,27) .. (115,27) .. controls (116.66,27) and (118,28.34) .. (118,30) .. controls (118,31.66) and (116.66,33) .. (115,33) .. controls (113.34,33) and (112,31.66) .. (112,30) -- cycle ;
\draw   (120,40) -- (150,40) -- (150,55) -- (120,55) -- cycle ;
\draw    (115,85) -- (115,25) ;
\draw    (135,60) -- (135,55) ;
\draw   (120,60) -- (150,60) -- (150,75) -- (120,75) -- cycle ;
\draw    (135,130) -- (135,75) ;
\draw  [fill={rgb, 255:red, 0; green, 0; blue, 0 }  ,fill opacity=1 ] (112,85) .. controls (112,83.34) and (113.34,82) .. (115,82) .. controls (116.66,82) and (118,83.34) .. (118,85) .. controls (118,86.66) and (116.66,88) .. (115,88) .. controls (113.34,88) and (112,86.66) .. (112,85) -- cycle ;
\draw   (145,95) -- (175,95) -- (175,110) -- (145,110) -- cycle ;
\draw    (160,115) -- (160,110) ;
\draw   (145,115) -- (175,115) -- (175,130) -- (145,130) -- cycle ;
\draw    (115,150) -- (115,85) ;
\draw   (105,150) -- (175,150) -- (175,165) -- (105,165) -- cycle ;
\draw    (140,175) -- (140,165) ;
\draw    (135,130) .. controls (135,139.48) and (160.33,141.15) .. (160,150) ;
\draw  [draw opacity=0][fill={rgb, 255:red, 255; green, 255; blue, 255 }  ,fill opacity=1 ] (143.83,140.17) .. controls (143.83,138.51) and (145.18,137.17) .. (146.83,137.17) .. controls (148.49,137.17) and (149.83,138.51) .. (149.83,140.17) .. controls (149.83,141.82) and (148.49,143.17) .. (146.83,143.17) .. controls (145.18,143.17) and (143.83,141.82) .. (143.83,140.17) -- cycle ;
\draw    (160,130) .. controls (159.33,140.48) and (135.33,139.15) .. (135,150) ;
\draw  [draw opacity=0] (80,75) -- (110,75) -- (110,95) -- (80,95) -- cycle ;
\draw  [draw opacity=0] (175,75) -- (205,75) -- (205,95) -- (175,95) -- cycle ;
\draw  [draw opacity=0][fill={rgb, 255:red, 255; green, 255; blue, 255 }  ,fill opacity=1 ] (232,90) .. controls (232,88.34) and (233.34,87) .. (235,87) .. controls (236.66,87) and (238,88.34) .. (238,90) .. controls (238,91.66) and (236.66,93) .. (235,93) .. controls (233.34,93) and (232,91.66) .. (232,90) -- cycle ;
\draw    (235,40) .. controls (234.83,30.43) and (226.5,29.85) .. (215,30) ;
\draw    (215,30) -- (215,15) ;
\draw  [fill={rgb, 255:red, 0; green, 0; blue, 0 }  ,fill opacity=1 ] (212,30) .. controls (212,28.34) and (213.34,27) .. (215,27) .. controls (216.66,27) and (218,28.34) .. (218,30) .. controls (218,31.66) and (216.66,33) .. (215,33) .. controls (213.34,33) and (212,31.66) .. (212,30) -- cycle ;
\draw   (220,40) -- (250,40) -- (250,55) -- (220,55) -- cycle ;
\draw    (215,90) -- (215,30) ;
\draw   (205,150) -- (275,150) -- (275,165) -- (205,165) -- cycle ;
\draw    (240,175) -- (240,165) ;
\draw  [fill={rgb, 255:red, 0; green, 0; blue, 0 }  ,fill opacity=1 ] (212,90) .. controls (212,88.34) and (213.34,87) .. (215,87) .. controls (216.66,87) and (218,88.34) .. (218,90) .. controls (218,91.66) and (216.66,93) .. (215,93) .. controls (213.34,93) and (212,91.66) .. (212,90) -- cycle ;
\draw   (245,100) -- (275,100) -- (275,115) -- (245,115) -- cycle ;
\draw    (260,125) -- (260,115) ;
\draw   (245,125) -- (275,125) -- (275,140) -- (245,140) -- cycle ;
\draw    (260,150) -- (260,140) ;
\draw    (215,150) -- (215,90) ;
\draw    (350,95) .. controls (349.83,85.43) and (316.5,84.85) .. (305,85) ;
\draw  [draw opacity=0][fill={rgb, 255:red, 255; green, 255; blue, 255 }  ,fill opacity=1 ] (322,85) .. controls (322,83.34) and (323.34,82) .. (325,82) .. controls (326.66,82) and (328,83.34) .. (328,85) .. controls (328,86.66) and (326.66,88) .. (325,88) .. controls (323.34,88) and (322,86.66) .. (322,85) -- cycle ;
\draw    (325,40) .. controls (324.83,30.43) and (316.5,29.85) .. (305,30) ;
\draw    (305,30) -- (305,15) ;
\draw  [fill={rgb, 255:red, 0; green, 0; blue, 0 }  ,fill opacity=1 ] (302,30) .. controls (302,28.34) and (303.34,27) .. (305,27) .. controls (306.66,27) and (308,28.34) .. (308,30) .. controls (308,31.66) and (306.66,33) .. (305,33) .. controls (303.34,33) and (302,31.66) .. (302,30) -- cycle ;
\draw   (310,40) -- (340,40) -- (340,55) -- (310,55) -- cycle ;
\draw    (305,85) -- (305,25) ;
\draw    (325,60) -- (325,55) ;
\draw   (310,60) -- (340,60) -- (340,75) -- (310,75) -- cycle ;
\draw    (325,130) -- (325,75) ;
\draw  [fill={rgb, 255:red, 0; green, 0; blue, 0 }  ,fill opacity=1 ] (302,85) .. controls (302,83.34) and (303.34,82) .. (305,82) .. controls (306.66,82) and (308,83.34) .. (308,85) .. controls (308,86.66) and (306.66,88) .. (305,88) .. controls (303.34,88) and (302,86.66) .. (302,85) -- cycle ;
\draw   (335,95) -- (365,95) -- (365,110) -- (335,110) -- cycle ;
\draw    (305,150) -- (305,85) ;
\draw   (295,150) -- (365,150) -- (365,165) -- (295,165) -- cycle ;
\draw    (330,175) -- (330,165) ;
\draw  [draw opacity=0][fill={rgb, 255:red, 255; green, 255; blue, 255 }  ,fill opacity=1 ] (333.83,140.17) .. controls (333.83,138.51) and (335.18,137.17) .. (336.83,137.17) .. controls (338.49,137.17) and (339.83,138.51) .. (339.83,140.17) .. controls (339.83,141.82) and (338.49,143.17) .. (336.83,143.17) .. controls (335.18,143.17) and (333.83,141.82) .. (333.83,140.17) -- cycle ;
\draw  [draw opacity=0] (270,75) -- (300,75) -- (300,95) -- (270,95) -- cycle ;
\draw  [draw opacity=0] (365,75) -- (395,75) -- (395,95) -- (365,95) -- cycle ;
\draw    (235,75) -- (235,65) ;
\draw    (235,65) .. controls (225.09,65.13) and (225.09,64.95) .. (225,55) ;
\draw    (235,65) .. controls (244.91,64.77) and (245.09,64.95) .. (245,55) ;
\draw  [fill={rgb, 255:red, 255; green, 255; blue, 255 }  ,fill opacity=1 ] (232.5,65) .. controls (232.5,63.62) and (233.62,62.5) .. (235,62.5) .. controls (236.38,62.5) and (237.5,63.62) .. (237.5,65) .. controls (237.5,66.38) and (236.38,67.5) .. (235,67.5) .. controls (233.62,67.5) and (232.5,66.38) .. (232.5,65) -- cycle ;
\draw  [draw opacity=0][fill={rgb, 255:red, 0; green, 0; blue, 0 }  ,fill opacity=1 ] (232,75) .. controls (232,73.34) and (233.34,72) .. (235,72) .. controls (236.66,72) and (238,73.34) .. (238,75) .. controls (238,76.66) and (236.66,78) .. (235,78) .. controls (233.34,78) and (232,76.66) .. (232,75) -- cycle ;
\draw    (260,100) .. controls (259.83,90.43) and (226.5,89.85) .. (215,90) ;
\draw    (349.99,130) -- (349.99,120) ;
\draw    (349.99,120) .. controls (340.08,120.13) and (340.08,119.95) .. (339.99,110) ;
\draw    (349.99,120) .. controls (359.9,119.77) and (360.08,119.95) .. (359.99,110) ;
\draw  [fill={rgb, 255:red, 255; green, 255; blue, 255 }  ,fill opacity=1 ] (347.49,120) .. controls (347.49,118.62) and (348.61,117.5) .. (349.99,117.5) .. controls (351.37,117.5) and (352.49,118.62) .. (352.49,120) .. controls (352.49,121.38) and (351.37,122.5) .. (349.99,122.5) .. controls (348.61,122.5) and (347.49,121.38) .. (347.49,120) -- cycle ;
\draw  [draw opacity=0][fill={rgb, 255:red, 0; green, 0; blue, 0 }  ,fill opacity=1 ] (346.99,130) .. controls (346.99,128.34) and (348.33,127) .. (349.99,127) .. controls (351.64,127) and (352.99,128.34) .. (352.99,130) .. controls (352.99,131.66) and (351.64,133) .. (349.99,133) .. controls (348.33,133) and (346.99,131.66) .. (346.99,130) -- cycle ;
\draw    (325,130) .. controls (325,139.48) and (350.33,141.15) .. (350,150) ;
\draw  [draw opacity=0] (365,75) -- (395,75) -- (395,95) -- (365,95) -- cycle ;
\draw  [draw opacity=0][fill={rgb, 255:red, 255; green, 255; blue, 255 }  ,fill opacity=1 ] (422,90) .. controls (422,88.34) and (423.34,87) .. (425,87) .. controls (426.66,87) and (428,88.34) .. (428,90) .. controls (428,91.66) and (426.66,93) .. (425,93) .. controls (423.34,93) and (422,91.66) .. (422,90) -- cycle ;
\draw    (425,40) .. controls (424.83,30.43) and (416.5,29.85) .. (405,30) ;
\draw    (405,30) -- (405,15) ;
\draw  [fill={rgb, 255:red, 0; green, 0; blue, 0 }  ,fill opacity=1 ] (402,30) .. controls (402,28.34) and (403.34,27) .. (405,27) .. controls (406.66,27) and (408,28.34) .. (408,30) .. controls (408,31.66) and (406.66,33) .. (405,33) .. controls (403.34,33) and (402,31.66) .. (402,30) -- cycle ;
\draw   (410,40) -- (440,40) -- (440,55) -- (410,55) -- cycle ;
\draw    (405,90) -- (405,30) ;
\draw   (395,150) -- (465,150) -- (465,165) -- (395,165) -- cycle ;
\draw    (430,175) -- (430,165) ;
\draw  [fill={rgb, 255:red, 0; green, 0; blue, 0 }  ,fill opacity=1 ] (402,90) .. controls (402,88.34) and (403.34,87) .. (405,87) .. controls (406.66,87) and (408,88.34) .. (408,90) .. controls (408,91.66) and (406.66,93) .. (405,93) .. controls (403.34,93) and (402,91.66) .. (402,90) -- cycle ;
\draw   (435,100) -- (465,100) -- (465,115) -- (435,115) -- cycle ;
\draw    (405,150) -- (405,90) ;
\draw  [draw opacity=0][fill={rgb, 255:red, 255; green, 255; blue, 255 }  ,fill opacity=1 ] (512,85) .. controls (512,83.34) and (513.34,82) .. (515,82) .. controls (516.66,82) and (518,83.34) .. (518,85) .. controls (518,86.66) and (516.66,88) .. (515,88) .. controls (513.34,88) and (512,86.66) .. (512,85) -- cycle ;
\draw    (515,40) .. controls (514.83,30.43) and (506.5,29.85) .. (495,30) ;
\draw    (495,30) -- (495,15) ;
\draw  [fill={rgb, 255:red, 0; green, 0; blue, 0 }  ,fill opacity=1 ] (492,30) .. controls (492,28.34) and (493.34,27) .. (495,27) .. controls (496.66,27) and (498,28.34) .. (498,30) .. controls (498,31.66) and (496.66,33) .. (495,33) .. controls (493.34,33) and (492,31.66) .. (492,30) -- cycle ;
\draw   (500,40) -- (530,40) -- (530,55) -- (500,55) -- cycle ;
\draw    (495,85) -- (495,25) ;
\draw  [fill={rgb, 255:red, 0; green, 0; blue, 0 }  ,fill opacity=1 ] (492,85) .. controls (492,83.34) and (493.34,82) .. (495,82) .. controls (496.66,82) and (498,83.34) .. (498,85) .. controls (498,86.66) and (496.66,88) .. (495,88) .. controls (493.34,88) and (492,86.66) .. (492,85) -- cycle ;
\draw   (525,95) -- (555,95) -- (555,110) -- (525,110) -- cycle ;
\draw    (495,150) -- (495,85) ;
\draw   (485,150) -- (555,150) -- (555,165) -- (485,165) -- cycle ;
\draw    (520,175) -- (520,165) ;
\draw  [draw opacity=0][fill={rgb, 255:red, 255; green, 255; blue, 255 }  ,fill opacity=1 ] (523.83,140.17) .. controls (523.83,138.51) and (525.18,137.17) .. (526.83,137.17) .. controls (528.49,137.17) and (529.83,138.51) .. (529.83,140.17) .. controls (529.83,141.82) and (528.49,143.17) .. (526.83,143.17) .. controls (525.18,143.17) and (523.83,141.82) .. (523.83,140.17) -- cycle ;
\draw  [draw opacity=0] (460,75) -- (490,75) -- (490,95) -- (460,95) -- cycle ;
\draw  [draw opacity=0] (555,75) -- (585,75) -- (585,95) -- (555,95) -- cycle ;
\draw    (425,75) -- (425,65) ;
\draw    (425,65) .. controls (415.09,65.13) and (415.09,64.95) .. (415,55) ;
\draw    (425,65) .. controls (434.91,64.77) and (435.09,64.95) .. (435,55) ;
\draw  [fill={rgb, 255:red, 255; green, 255; blue, 255 }  ,fill opacity=1 ] (422.5,65) .. controls (422.5,63.62) and (423.62,62.5) .. (425,62.5) .. controls (426.38,62.5) and (427.5,63.62) .. (427.5,65) .. controls (427.5,66.38) and (426.38,67.5) .. (425,67.5) .. controls (423.62,67.5) and (422.5,66.38) .. (422.5,65) -- cycle ;
\draw  [draw opacity=0][fill={rgb, 255:red, 0; green, 0; blue, 0 }  ,fill opacity=1 ] (422,75) .. controls (422,73.34) and (423.34,72) .. (425,72) .. controls (426.66,72) and (428,73.34) .. (428,75) .. controls (428,76.66) and (426.66,78) .. (425,78) .. controls (423.34,78) and (422,76.66) .. (422,75) -- cycle ;
\draw    (450,100) .. controls (449.83,90.43) and (416.5,89.85) .. (405,90) ;
\draw    (539.99,130) -- (539.99,120) ;
\draw    (539.99,120) .. controls (530.08,120.13) and (530.08,119.95) .. (529.99,110) ;
\draw    (539.99,120) .. controls (549.9,119.77) and (550.08,119.95) .. (549.99,110) ;
\draw  [fill={rgb, 255:red, 255; green, 255; blue, 255 }  ,fill opacity=1 ] (537.49,120) .. controls (537.49,118.62) and (538.61,117.5) .. (539.99,117.5) .. controls (541.37,117.5) and (542.49,118.62) .. (542.49,120) .. controls (542.49,121.38) and (541.37,122.5) .. (539.99,122.5) .. controls (538.61,122.5) and (537.49,121.38) .. (537.49,120) -- cycle ;
\draw  [draw opacity=0][fill={rgb, 255:red, 0; green, 0; blue, 0 }  ,fill opacity=1 ] (536.99,130) .. controls (536.99,128.34) and (538.33,127) .. (539.99,127) .. controls (541.64,127) and (542.99,128.34) .. (542.99,130) .. controls (542.99,131.66) and (541.64,133) .. (539.99,133) .. controls (538.33,133) and (536.99,131.66) .. (536.99,130) -- cycle ;
\draw    (540,95) .. controls (539.83,85.43) and (506.5,84.85) .. (495,85) ;
\draw    (450,135) -- (450,125) ;
\draw    (450,125) .. controls (440.09,125.13) and (440.09,124.95) .. (440,115) ;
\draw    (450,125) .. controls (459.91,124.77) and (460.09,124.95) .. (460,115) ;
\draw  [fill={rgb, 255:red, 255; green, 255; blue, 255 }  ,fill opacity=1 ] (447.5,125) .. controls (447.5,123.62) and (448.62,122.5) .. (450,122.5) .. controls (451.38,122.5) and (452.5,123.62) .. (452.5,125) .. controls (452.5,126.38) and (451.38,127.5) .. (450,127.5) .. controls (448.62,127.5) and (447.5,126.38) .. (447.5,125) -- cycle ;
\draw  [draw opacity=0][fill={rgb, 255:red, 0; green, 0; blue, 0 }  ,fill opacity=1 ] (447,135) .. controls (447,133.34) and (448.34,132) .. (450,132) .. controls (451.66,132) and (453,133.34) .. (453,135) .. controls (453,136.66) and (451.66,138) .. (450,138) .. controls (448.34,138) and (447,136.66) .. (447,135) -- cycle ;
\draw    (514.99,75) -- (514.99,65) ;
\draw    (514.99,65) .. controls (505.08,65.13) and (505.08,64.95) .. (504.99,55) ;
\draw    (514.99,65) .. controls (524.9,64.77) and (525.08,64.95) .. (524.99,55) ;
\draw  [fill={rgb, 255:red, 255; green, 255; blue, 255 }  ,fill opacity=1 ] (512.49,65) .. controls (512.49,63.62) and (513.61,62.5) .. (514.99,62.5) .. controls (516.37,62.5) and (517.49,63.62) .. (517.49,65) .. controls (517.49,66.38) and (516.37,67.5) .. (514.99,67.5) .. controls (513.61,67.5) and (512.49,66.38) .. (512.49,65) -- cycle ;
\draw  [draw opacity=0][fill={rgb, 255:red, 0; green, 0; blue, 0 }  ,fill opacity=1 ] (511.99,75) .. controls (511.99,73.34) and (513.33,72) .. (514.99,72) .. controls (516.64,72) and (517.99,73.34) .. (517.99,75) .. controls (517.99,76.66) and (516.64,78) .. (514.99,78) .. controls (513.33,78) and (511.99,76.66) .. (511.99,75) -- cycle ;

\draw (45,47.5) node  [font=\small]  {$\{\alpha \}$};
\draw (45,72.5) node  [font=\small]  {$f$};
\draw (70,107.5) node  [font=\small]  {$\{\beta \}$};
\draw (70,132.5) node  [font=\small]  {$g$};
\draw (50,157.5) node  [font=\small]  {$cont$};
\draw (135,47.5) node  [font=\small]  {$\{\beta \}$};
\draw (135,67.5) node  [font=\small]  {$g$};
\draw (160,102.5) node  [font=\small]  {$\{\alpha \}$};
\draw (160,123.5) node  [font=\small]  {$f$};
\draw (140,157.5) node  [font=\small]  {$cont$};
\draw (95,85) node  [font=\small]  {$=$};
\draw (190,85) node  [font=\small]  {$;$};
\draw (235,47.5) node  [font=\small]  {$\{\alpha \}$};
\draw (260,107.5) node  [font=\small]  {$\{\beta \}$};
\draw (260,132.5) node  [font=\small]  {$g$};
\draw (240,157.5) node  [font=\small]  {$cont$};
\draw (325,47.5) node  [font=\small]  {$\{\beta \}$};
\draw (325,67.5) node  [font=\small]  {$g$};
\draw (350,102.5) node  [font=\small]  {$\{\alpha \}$};
\draw (330,157.5) node  [font=\small]  {$cont$};
\draw (285,85) node  [font=\small]  {$=$};
\draw (380,85) node  [font=\small]  {$;$};
\draw (425,47.5) node  [font=\small]  {$\{\alpha \}$};
\draw (450,107.5) node  [font=\small]  {$\{\beta \}$};
\draw (430,157.5) node  [font=\small]  {$cont$};
\draw (515,47.5) node  [font=\small]  {$\{\beta \}$};
\draw (540,102.5) node  [font=\small]  {$\{\alpha \}$};
\draw (520,157.5) node  [font=\small]  {$cont$};
\draw (475,85) node  [font=\small]  {$=$};
\draw (570,85) node  [font=\small]  {$;$};

\end{tikzpicture}
     }
    \caption{Three cases of the \protect\interchangeAxiom{}.}
    \label{fig:diagramThreeInterchange}
  \end{figure}
  \begin{figure}[ht!]
    \centering
    \makebox[\textwidth][c]{

\tikzset{every picture/.style={line width=0.75pt}} %

\begin{tikzpicture}[x=0.75pt,y=0.75pt,yscale=-1,xscale=1]
\draw    (70,65) -- (70,55) ;
\draw    (60,30) .. controls (59.43,25.03) and (61.5,24.85) .. (50,25) ;
\draw    (50,25) -- (50,10) ;
\draw    (50,70) -- (50,25) ;
\draw    (70,55) .. controls (60.09,55.13) and (60.09,54.95) .. (60,45) ;
\draw    (70,55) .. controls (79.91,54.77) and (80.09,54.95) .. (80,45) ;
\draw  [fill={rgb, 255:red, 255; green, 255; blue, 255 }  ,fill opacity=1 ] (67.5,55) .. controls (67.5,53.62) and (68.62,52.5) .. (70,52.5) .. controls (71.38,52.5) and (72.5,53.62) .. (72.5,55) .. controls (72.5,56.38) and (71.38,57.5) .. (70,57.5) .. controls (68.62,57.5) and (67.5,56.38) .. (67.5,55) -- cycle ;
\draw  [fill={rgb, 255:red, 0; green, 0; blue, 0 }  ,fill opacity=1 ] (67.5,65) .. controls (67.5,63.62) and (68.62,62.5) .. (70,62.5) .. controls (71.38,62.5) and (72.5,63.62) .. (72.5,65) .. controls (72.5,66.38) and (71.38,67.5) .. (70,67.5) .. controls (68.62,67.5) and (67.5,66.38) .. (67.5,65) -- cycle ;
\draw    (60,45) .. controls (59.9,39.62) and (80.1,40.02) .. (80,30) ;
\draw    (80,45) .. controls (79.9,39.62) and (60.1,40.02) .. (60,30) ;
\draw  [fill={rgb, 255:red, 0; green, 0; blue, 0 }  ,fill opacity=1 ] (47.5,25) .. controls (47.5,23.62) and (48.62,22.5) .. (50,22.5) .. controls (51.38,22.5) and (52.5,23.62) .. (52.5,25) .. controls (52.5,26.38) and (51.38,27.5) .. (50,27.5) .. controls (48.62,27.5) and (47.5,26.38) .. (47.5,25) -- cycle ;
\draw    (80,30) .. controls (79.83,20.43) and (67.43,19.89) .. (40,20) ;
\draw    (40,70) -- (40,10) ;
\draw  [fill={rgb, 255:red, 0; green, 0; blue, 0 }  ,fill opacity=1 ] (37.5,20) .. controls (37.5,18.62) and (38.62,17.5) .. (40,17.5) .. controls (41.38,17.5) and (42.5,18.62) .. (42.5,20) .. controls (42.5,21.38) and (41.38,22.5) .. (40,22.5) .. controls (38.62,22.5) and (37.5,21.38) .. (37.5,20) -- cycle ;
\draw    (134.99,50) -- (134.99,40) ;
\draw    (124.99,30) .. controls (124.42,25.03) and (126.49,24.85) .. (114.99,25) ;
\draw    (114.99,25) -- (114.99,10) ;
\draw    (114.99,70) -- (114.99,25) ;
\draw    (134.99,40) .. controls (125.08,40.13) and (125.08,39.95) .. (124.99,30) ;
\draw    (134.99,40) .. controls (144.9,39.77) and (145.08,39.95) .. (144.99,30) ;
\draw  [fill={rgb, 255:red, 255; green, 255; blue, 255 }  ,fill opacity=1 ] (132.49,40) .. controls (132.49,38.62) and (133.61,37.5) .. (134.99,37.5) .. controls (136.37,37.5) and (137.49,38.62) .. (137.49,40) .. controls (137.49,41.38) and (136.37,42.5) .. (134.99,42.5) .. controls (133.61,42.5) and (132.49,41.38) .. (132.49,40) -- cycle ;
\draw  [fill={rgb, 255:red, 0; green, 0; blue, 0 }  ,fill opacity=1 ] (132.49,50) .. controls (132.49,48.62) and (133.61,47.5) .. (134.99,47.5) .. controls (136.37,47.5) and (137.49,48.62) .. (137.49,50) .. controls (137.49,51.38) and (136.37,52.5) .. (134.99,52.5) .. controls (133.61,52.5) and (132.49,51.38) .. (132.49,50) -- cycle ;
\draw  [fill={rgb, 255:red, 0; green, 0; blue, 0 }  ,fill opacity=1 ] (112.49,25) .. controls (112.49,23.62) and (113.61,22.5) .. (114.99,22.5) .. controls (116.37,22.5) and (117.49,23.62) .. (117.49,25) .. controls (117.49,26.38) and (116.37,27.5) .. (114.99,27.5) .. controls (113.61,27.5) and (112.49,26.38) .. (112.49,25) -- cycle ;
\draw    (144.99,30) .. controls (144.82,20.43) and (132.42,19.89) .. (104.99,20) ;
\draw    (104.99,70) -- (104.99,10) ;
\draw  [fill={rgb, 255:red, 0; green, 0; blue, 0 }  ,fill opacity=1 ] (102.49,20) .. controls (102.49,18.62) and (103.61,17.5) .. (104.99,17.5) .. controls (106.37,17.5) and (107.49,18.62) .. (107.49,20) .. controls (107.49,21.38) and (106.37,22.5) .. (104.99,22.5) .. controls (103.61,22.5) and (102.49,21.38) .. (102.49,20) -- cycle ;
\draw  [draw opacity=0] (75,30) -- (105,30) -- (105,50) -- (75,50) -- cycle ;
\draw  [draw opacity=0] (35,50) -- (65,50) -- (65,70) -- (35,70) -- cycle ;
\draw    (330,60) -- (330,50) ;
\draw    (320,40) .. controls (319.43,35.03) and (321.5,34.85) .. (310,35) ;
\draw    (310,35) -- (310,20) ;
\draw    (330,50) .. controls (320.09,50.13) and (320.09,49.95) .. (320,40) ;
\draw    (330,50) .. controls (339.91,49.77) and (340.09,49.95) .. (340,40) ;
\draw  [fill={rgb, 255:red, 255; green, 255; blue, 255 }  ,fill opacity=1 ] (327.5,50) .. controls (327.5,48.62) and (328.62,47.5) .. (330,47.5) .. controls (331.38,47.5) and (332.5,48.62) .. (332.5,50) .. controls (332.5,51.38) and (331.38,52.5) .. (330,52.5) .. controls (328.62,52.5) and (327.5,51.38) .. (327.5,50) -- cycle ;
\draw  [fill={rgb, 255:red, 0; green, 0; blue, 0 }  ,fill opacity=1 ] (327.5,60) .. controls (327.5,58.62) and (328.62,57.5) .. (330,57.5) .. controls (331.38,57.5) and (332.5,58.62) .. (332.5,60) .. controls (332.5,61.38) and (331.38,62.5) .. (330,62.5) .. controls (328.62,62.5) and (327.5,61.38) .. (327.5,60) -- cycle ;
\draw  [fill={rgb, 255:red, 0; green, 0; blue, 0 }  ,fill opacity=1 ] (307.5,35) .. controls (307.5,33.62) and (308.62,32.5) .. (310,32.5) .. controls (311.38,32.5) and (312.5,33.62) .. (312.5,35) .. controls (312.5,36.38) and (311.38,37.5) .. (310,37.5) .. controls (308.62,37.5) and (307.5,36.38) .. (307.5,35) -- cycle ;
\draw    (340,40) .. controls (339.83,30.43) and (337.43,24.89) .. (310,25) ;
\draw    (310,70) -- (310,10) ;
\draw  [fill={rgb, 255:red, 0; green, 0; blue, 0 }  ,fill opacity=1 ] (307.5,25) .. controls (307.5,23.62) and (308.62,22.5) .. (310,22.5) .. controls (311.38,22.5) and (312.5,23.62) .. (312.5,25) .. controls (312.5,26.38) and (311.38,27.5) .. (310,27.5) .. controls (308.62,27.5) and (307.5,26.38) .. (307.5,25) -- cycle ;
\draw  [draw opacity=0] (295,45) -- (325,45) -- (325,65) -- (295,65) -- cycle ;
\draw    (370,70) -- (370,10) ;
\draw  [draw opacity=0] (280,35) -- (310,35) -- (310,55) -- (280,55) -- cycle ;
\draw  [draw opacity=0] (370,35) -- (400,35) -- (400,55) -- (370,55) -- cycle ;
\draw  [draw opacity=0] (370,35) -- (400,35) -- (400,55) -- (370,55) -- cycle ;
\draw    (205,50) -- (205,40) ;
\draw    (195,30) .. controls (194.43,25.03) and (196.5,24.85) .. (185,25) ;
\draw    (185,25) -- (185,10) ;
\draw    (185,70) -- (185,25) ;
\draw    (205,40) .. controls (195.09,40.13) and (195.09,39.95) .. (195,30) ;
\draw    (205,40) .. controls (214.91,39.77) and (215.09,39.95) .. (215,30) ;
\draw  [fill={rgb, 255:red, 255; green, 255; blue, 255 }  ,fill opacity=1 ] (202.5,40) .. controls (202.5,38.62) and (203.62,37.5) .. (205,37.5) .. controls (206.38,37.5) and (207.5,38.62) .. (207.5,40) .. controls (207.5,41.38) and (206.38,42.5) .. (205,42.5) .. controls (203.62,42.5) and (202.5,41.38) .. (202.5,40) -- cycle ;
\draw  [fill={rgb, 255:red, 0; green, 0; blue, 0 }  ,fill opacity=1 ] (202.5,50) .. controls (202.5,48.62) and (203.62,47.5) .. (205,47.5) .. controls (206.38,47.5) and (207.5,48.62) .. (207.5,50) .. controls (207.5,51.38) and (206.38,52.5) .. (205,52.5) .. controls (203.62,52.5) and (202.5,51.38) .. (202.5,50) -- cycle ;
\draw  [fill={rgb, 255:red, 0; green, 0; blue, 0 }  ,fill opacity=1 ] (182.5,25) .. controls (182.5,23.62) and (183.62,22.5) .. (185,22.5) .. controls (186.38,22.5) and (187.5,23.62) .. (187.5,25) .. controls (187.5,26.38) and (186.38,27.5) .. (185,27.5) .. controls (183.62,27.5) and (182.5,26.38) .. (182.5,25) -- cycle ;
\draw    (215,30) .. controls (214.83,20.43) and (202.43,19.89) .. (175,20) ;
\draw    (175,70) -- (175,10) ;
\draw  [fill={rgb, 255:red, 0; green, 0; blue, 0 }  ,fill opacity=1 ] (172.5,20) .. controls (172.5,18.62) and (173.62,17.5) .. (175,17.5) .. controls (176.38,17.5) and (177.5,18.62) .. (177.5,20) .. controls (177.5,21.38) and (176.38,22.5) .. (175,22.5) .. controls (173.62,22.5) and (172.5,21.38) .. (172.5,20) -- cycle ;
\draw  [draw opacity=0] (170,35) -- (200,35) -- (200,55) -- (170,55) -- cycle ;
\draw    (269.99,50) -- (269.99,40) ;
\draw    (259.99,30) .. controls (259.42,25.03) and (261.49,24.85) .. (249.99,25) ;
\draw    (249.99,25) -- (249.99,10) ;
\draw    (250,40) -- (249.99,25) ;
\draw    (269.99,40) .. controls (260.08,40.13) and (260.08,39.95) .. (259.99,30) ;
\draw    (269.99,40) .. controls (279.9,39.77) and (280.08,39.95) .. (279.99,30) ;
\draw  [fill={rgb, 255:red, 255; green, 255; blue, 255 }  ,fill opacity=1 ] (267.49,40) .. controls (267.49,38.62) and (268.61,37.5) .. (269.99,37.5) .. controls (271.37,37.5) and (272.49,38.62) .. (272.49,40) .. controls (272.49,41.38) and (271.37,42.5) .. (269.99,42.5) .. controls (268.61,42.5) and (267.49,41.38) .. (267.49,40) -- cycle ;
\draw  [fill={rgb, 255:red, 0; green, 0; blue, 0 }  ,fill opacity=1 ] (267.49,50) .. controls (267.49,48.62) and (268.61,47.5) .. (269.99,47.5) .. controls (271.37,47.5) and (272.49,48.62) .. (272.49,50) .. controls (272.49,51.38) and (271.37,52.5) .. (269.99,52.5) .. controls (268.61,52.5) and (267.49,51.38) .. (267.49,50) -- cycle ;
\draw  [fill={rgb, 255:red, 0; green, 0; blue, 0 }  ,fill opacity=1 ] (247.49,25) .. controls (247.49,23.62) and (248.61,22.5) .. (249.99,22.5) .. controls (251.37,22.5) and (252.49,23.62) .. (252.49,25) .. controls (252.49,26.38) and (251.37,27.5) .. (249.99,27.5) .. controls (248.61,27.5) and (247.49,26.38) .. (247.49,25) -- cycle ;
\draw    (279.99,30) .. controls (279.82,20.43) and (267.42,19.89) .. (239.99,20) ;
\draw    (239.99,70) -- (239.99,10) ;
\draw  [fill={rgb, 255:red, 0; green, 0; blue, 0 }  ,fill opacity=1 ] (237.49,20) .. controls (237.49,18.62) and (238.61,17.5) .. (239.99,17.5) .. controls (241.37,17.5) and (242.49,18.62) .. (242.49,20) .. controls (242.49,21.38) and (241.37,22.5) .. (239.99,22.5) .. controls (238.61,22.5) and (237.49,21.38) .. (237.49,20) -- cycle ;
\draw  [draw opacity=0] (210,30) -- (240,30) -- (240,50) -- (210,50) -- cycle ;
\draw  [draw opacity=0] (280,35) -- (310,35) -- (310,55) -- (280,55) -- cycle ;
\draw  [fill={rgb, 255:red, 0; green, 0; blue, 0 }  ,fill opacity=1 ] (247.5,40) .. controls (247.5,38.62) and (248.62,37.5) .. (250,37.5) .. controls (251.38,37.5) and (252.5,38.62) .. (252.5,40) .. controls (252.5,41.38) and (251.38,42.5) .. (250,42.5) .. controls (248.62,42.5) and (247.5,41.38) .. (247.5,40) -- cycle ;
\draw    (250,60) .. controls (249.83,50.43) and (250,50.12) .. (240,50) ;
\draw  [fill={rgb, 255:red, 0; green, 0; blue, 0 }  ,fill opacity=1 ] (237.5,50) .. controls (237.5,48.62) and (238.62,47.5) .. (240,47.5) .. controls (241.38,47.5) and (242.5,48.62) .. (242.5,50) .. controls (242.5,51.38) and (241.38,52.5) .. (240,52.5) .. controls (238.62,52.5) and (237.5,51.38) .. (237.5,50) -- cycle ;
\draw    (250,70) -- (250,60) ;
\draw  [draw opacity=0] (280,35) -- (310,35) -- (310,55) -- (280,55) -- cycle ;
\draw  [draw opacity=0] (280,35) -- (310,35) -- (310,55) -- (280,55) -- cycle ;

\draw (90,40) node  [font=\small]  {$=$};
\draw (350,40) node  [font=\small]  {$=$};
\draw (160,45) node  [font=\small]  {$;$};
\draw (385,45) node  [font=\small]  {$;$};
\draw (225,40) node  [font=\small]  {$=$};
\draw (295,45) node  [font=\small]  {$;$};

\end{tikzpicture}
     }
    \caption{The \protect\symmetryAxiom{}, \protect\FrobeniusAxiom{}, and \protect\idempotencyAxiom{}.}
    \label{fig:diagramThreeFrobenius}
  \end{figure}

  We must now prove that this assignment defines a functor. We will show, by
  structural induction on the first term, that $F(s) ⨾ F(t) = F(s ⨾ t).$ For the
  $\return$ case, where $s = \vret(α)$, this amounts to proving that $\{α\} ∗
  F(t) = F(α ∗ t)$. By structural induction over $t$, we can distinguish three
  cases: the third case — composition of two $\return$ statements — follows
  by definition. The first two cases, which we depict below
  (\Cref{fig:diagramReturn}), follow from the fact that any function $\{α\}$
  preserves comonoids and the induction hypothesis.
  \begin{figure}
    \centering
    \makebox[\textwidth][c]{

\tikzset{every picture/.style={line width=0.75pt}} %

\begin{tikzpicture}[x=0.75pt,y=0.75pt,yscale=-1,xscale=1]
\draw    (350,55) .. controls (349.83,45.43) and (341.5,44.85) .. (330,45) ;
\draw    (330,45) -- (330,35) ;
\draw  [fill={rgb, 255:red, 0; green, 0; blue, 0 }  ,fill opacity=1 ] (327,45) .. controls (327,43.34) and (328.34,42) .. (330,42) .. controls (331.66,42) and (333,43.34) .. (333,45) .. controls (333,46.66) and (331.66,48) .. (330,48) .. controls (328.34,48) and (327,46.66) .. (327,45) -- cycle ;
\draw   (335,55) -- (365,55) -- (365,70) -- (335,70) -- cycle ;
\draw    (330,105) -- (330,45) ;
\draw   (320,105) -- (360,105) -- (360,120) -- (320,120) -- cycle ;
\draw    (350,80) -- (350,70) ;
\draw   (335,80) -- (365,80) -- (365,95) -- (335,95) -- cycle ;
\draw    (350,105) -- (350,95) ;
\draw    (340,130) -- (340,120) ;
\draw  [draw opacity=0] (365,66) -- (385,66) -- (385,81) -- (365,81) -- cycle ;
\draw   (315,20) -- (345,20) -- (345,35) -- (315,35) -- cycle ;
\draw    (330,20) -- (330,10) ;
\draw    (435,30) .. controls (434.83,20.43) and (426.5,19.85) .. (415,20) ;
\draw    (415,20) -- (415,10) ;
\draw  [fill={rgb, 255:red, 0; green, 0; blue, 0 }  ,fill opacity=1 ] (412,20) .. controls (412,18.34) and (413.34,17) .. (415,17) .. controls (416.66,17) and (418,18.34) .. (418,20) .. controls (418,21.66) and (416.66,23) .. (415,23) .. controls (413.34,23) and (412,21.66) .. (412,20) -- cycle ;
\draw   (420,55) -- (450,55) -- (450,70) -- (420,70) -- cycle ;
\draw    (415,80) -- (415,20) ;
\draw   (405,105) -- (445,105) -- (445,120) -- (405,120) -- cycle ;
\draw    (435,80) -- (435,70) ;
\draw   (420,80) -- (450,80) -- (450,95) -- (420,95) -- cycle ;
\draw    (435,105) -- (435,95) ;
\draw    (425,130) -- (425,120) ;
\draw   (420,30) -- (450,30) -- (450,45) -- (420,45) -- cycle ;
\draw    (435,55) -- (435,45) ;
\draw   (390,80) -- (420,80) -- (420,95) -- (390,95) -- cycle ;
\draw    (415,105) -- (415,95) ;
\draw    (515,55) .. controls (514.83,45.43) and (506.5,44.85) .. (495,45) ;
\draw    (495,45) -- (495,35) ;
\draw  [fill={rgb, 255:red, 0; green, 0; blue, 0 }  ,fill opacity=1 ] (492,45) .. controls (492,43.34) and (493.34,42) .. (495,42) .. controls (496.66,42) and (498,43.34) .. (498,45) .. controls (498,46.66) and (496.66,48) .. (495,48) .. controls (493.34,48) and (492,46.66) .. (492,45) -- cycle ;
\draw   (500,55) -- (530,55) -- (530,70) -- (500,70) -- cycle ;
\draw    (495,105) -- (495,45) ;
\draw   (485,105) -- (525,105) -- (525,120) -- (485,120) -- cycle ;
\draw    (505,130) -- (505,120) ;
\draw  [draw opacity=0] (530,66) -- (550,66) -- (550,81) -- (530,81) -- cycle ;
\draw   (480,20) -- (510,20) -- (510,35) -- (480,35) -- cycle ;
\draw    (495,20) -- (495,10) ;
\draw    (600,30) .. controls (599.83,20.43) and (591.5,19.85) .. (580,20) ;
\draw    (580,20) -- (580,10) ;
\draw  [fill={rgb, 255:red, 0; green, 0; blue, 0 }  ,fill opacity=1 ] (577,20) .. controls (577,18.34) and (578.34,17) .. (580,17) .. controls (581.66,17) and (583,18.34) .. (583,20) .. controls (583,21.66) and (581.66,23) .. (580,23) .. controls (578.34,23) and (577,21.66) .. (577,20) -- cycle ;
\draw    (580,80) -- (580,20) ;
\draw   (570,105) -- (610,105) -- (610,120) -- (570,120) -- cycle ;
\draw    (590,130) -- (590,120) ;
\draw   (585,30) -- (615,30) -- (615,45) -- (585,45) -- cycle ;
\draw   (555,80) -- (585,80) -- (585,95) -- (555,95) -- cycle ;
\draw    (580,105) -- (580,95) ;
\draw    (514.99,90) -- (514.99,80) ;
\draw    (514.99,80) .. controls (505.08,80.13) and (505.08,79.95) .. (504.99,70) ;
\draw    (514.99,80) .. controls (524.9,79.77) and (525.08,79.95) .. (524.99,70) ;
\draw  [fill={rgb, 255:red, 255; green, 255; blue, 255 }  ,fill opacity=1 ] (512.49,80) .. controls (512.49,78.62) and (513.61,77.5) .. (514.99,77.5) .. controls (516.37,77.5) and (517.49,78.62) .. (517.49,80) .. controls (517.49,81.38) and (516.37,82.5) .. (514.99,82.5) .. controls (513.61,82.5) and (512.49,81.38) .. (512.49,80) -- cycle ;
\draw  [fill={rgb, 255:red, 0; green, 0; blue, 0 }  ,fill opacity=1 ] (512.49,90) .. controls (512.49,88.62) and (513.61,87.5) .. (514.99,87.5) .. controls (516.37,87.5) and (517.49,88.62) .. (517.49,90) .. controls (517.49,91.38) and (516.37,92.5) .. (514.99,92.5) .. controls (513.61,92.5) and (512.49,91.38) .. (512.49,90) -- cycle ;
\draw    (599.99,90) -- (599.99,80) ;
\draw    (599.99,80) .. controls (590.08,80.13) and (590.08,79.95) .. (589.99,70) ;
\draw    (599.99,80) .. controls (609.9,79.77) and (610.08,79.95) .. (609.99,70) ;
\draw  [fill={rgb, 255:red, 255; green, 255; blue, 255 }  ,fill opacity=1 ] (597.49,80) .. controls (597.49,78.62) and (598.61,77.5) .. (599.99,77.5) .. controls (601.37,77.5) and (602.49,78.62) .. (602.49,80) .. controls (602.49,81.38) and (601.37,82.5) .. (599.99,82.5) .. controls (598.61,82.5) and (597.49,81.38) .. (597.49,80) -- cycle ;
\draw  [fill={rgb, 255:red, 0; green, 0; blue, 0 }  ,fill opacity=1 ] (597.49,90) .. controls (597.49,88.62) and (598.61,87.5) .. (599.99,87.5) .. controls (601.37,87.5) and (602.49,88.62) .. (602.49,90) .. controls (602.49,91.38) and (601.37,92.5) .. (599.99,92.5) .. controls (598.61,92.5) and (597.49,91.38) .. (597.49,90) -- cycle ;
\draw   (585,55) -- (615,55) -- (615,70) -- (585,70) -- cycle ;
\draw    (600,55) -- (600,45) ;
\draw  [draw opacity=0] (450,65) -- (480,65) -- (480,85) -- (450,85) -- cycle ;
\draw  [draw opacity=0] (450,65) -- (480,65) -- (480,85) -- (450,85) -- cycle ;

\draw (350,62.5) node  [font=\small]  {$\{\beta \}$};
\draw (350,87.5) node  [font=\small]  {$f$};
\draw (340,112.5) node  [font=\small]  {$cont$};
\draw (375,73.5) node  [font=\small]  {$=$};
\draw (330,27.5) node  [font=\small]  {$\{\alpha \}$};
\draw (435,62.5) node  [font=\small]  {$\{\beta \}$};
\draw (435,87.5) node  [font=\small]  {$f$};
\draw (425,112.5) node  [font=\small]  {$cont$};
\draw (435,37.5) node  [font=\small]  {$\{\alpha \}$};
\draw (405,87.5) node  [font=\small]  {$\{\alpha \}$};
\draw (515,62.5) node  [font=\small]  {$\{\beta \}$};
\draw (505,112.5) node  [font=\small]  {$cont$};
\draw (540,73.5) node  [font=\small]  {$=$};
\draw (495,27.5) node  [font=\small]  {$\{\alpha \}$};
\draw (590,112.5) node  [font=\small]  {$cont$};
\draw (600,37.5) node  [font=\small]  {$\{\alpha \}$};
\draw (570,87.5) node  [font=\small]  {$\{\alpha \}$};
\draw (600,62.5) node  [font=\small]  {$\{\beta \}$};
\draw (465,75) node  [font=\small]  {$;$};

\end{tikzpicture}
}
    \caption{Functoriality for the \protect{$\return$} case.}
    \label{fig:diagramReturn}
  \end{figure}
 
  For the rest of the cases (generator and $\observe{}$), functoriality follows
  by definition and the induction hypothesis --- we do not detail these cases
  here.

  We must now prove that the assingment defines a strict monoidal functor: we
  will show it preserves whiskering. In other words, that $F(\vec{y} ⋉ t) =
  F(\vec{y}) ⋉ F(t)$. We proceed again by structural induction on the term:
  the return case is immediate because whiskering is defined to coincide with
  whiskering in $\mathbf{FinSet}^{op}$; the generator and $\observe$ cases follow
  from the string diagrammatic equations in \Cref{fig:diagramPreserveWhiskering}.
  \begin{figure}[ht!]
    \centering
    \makebox[\textwidth][c]{

\tikzset{every picture/.style={line width=0.75pt}} %

\begin{tikzpicture}[x=0.75pt,y=0.75pt,yscale=-1,xscale=1]
\draw    (50,35) .. controls (49.83,25.43) and (41.5,24.85) .. (30,25) ;
\draw    (30,25) -- (30,15) ;
\draw  [fill={rgb, 255:red, 0; green, 0; blue, 0 }  ,fill opacity=1 ] (27,25) .. controls (27,23.34) and (28.34,22) .. (30,22) .. controls (31.66,22) and (33,23.34) .. (33,25) .. controls (33,26.66) and (31.66,28) .. (30,28) .. controls (28.34,28) and (27,26.66) .. (27,25) -- cycle ;
\draw   (35,35) -- (65,35) -- (65,50) -- (35,50) -- cycle ;
\draw    (30,85) -- (30,25) ;
\draw   (20,85) -- (60,85) -- (60,100) -- (20,100) -- cycle ;
\draw    (50,60) -- (50,50) ;
\draw   (35,60) -- (65,60) -- (65,75) -- (35,75) -- cycle ;
\draw    (50,85) -- (50,75) ;
\draw    (40,110) -- (40,100) ;
\draw  [draw opacity=0] (65,46) -- (85,46) -- (85,61) -- (65,61) -- cycle ;
\draw    (15,110) -- (15,15) ;
\draw    (130,35) .. controls (129.83,25.43) and (116.5,24.85) .. (105,25) ;
\draw    (105,25) -- (105,15) ;
\draw  [fill={rgb, 255:red, 0; green, 0; blue, 0 }  ,fill opacity=1 ] (102,25) .. controls (102,23.34) and (103.34,22) .. (105,22) .. controls (106.66,22) and (108,23.34) .. (108,25) .. controls (108,26.66) and (106.66,28) .. (105,28) .. controls (103.34,28) and (102,26.66) .. (102,25) -- cycle ;
\draw   (115,35) -- (145,35) -- (145,50) -- (115,50) -- cycle ;
\draw    (105,85) -- (105,25) ;
\draw   (100,85) -- (140,85) -- (140,100) -- (100,100) -- cycle ;
\draw    (130,60) -- (130,50) ;
\draw   (115,60) -- (145,60) -- (145,75) -- (115,75) -- cycle ;
\draw    (130,85) -- (130,75) ;
\draw    (120,110) -- (120,100) ;
\draw    (90,110) -- (90,15) ;
\draw    (110,40) .. controls (109.83,30.43) and (101.5,24.85) .. (90,25) ;
\draw  [fill={rgb, 255:red, 0; green, 0; blue, 0 }  ,fill opacity=1 ] (107,45) .. controls (107,43.34) and (108.34,42) .. (110,42) .. controls (111.66,42) and (113,43.34) .. (113,45) .. controls (113,46.66) and (111.66,48) .. (110,48) .. controls (108.34,48) and (107,46.66) .. (107,45) -- cycle ;
\draw    (110,45) -- (110,40) ;
\draw    (215,35) .. controls (214.83,25.43) and (206.5,24.85) .. (195,25) ;
\draw    (195,25) -- (195,15) ;
\draw  [fill={rgb, 255:red, 0; green, 0; blue, 0 }  ,fill opacity=1 ] (192,25) .. controls (192,23.34) and (193.34,22) .. (195,22) .. controls (196.66,22) and (198,23.34) .. (198,25) .. controls (198,26.66) and (196.66,28) .. (195,28) .. controls (193.34,28) and (192,26.66) .. (192,25) -- cycle ;
\draw   (200,35) -- (230,35) -- (230,50) -- (200,50) -- cycle ;
\draw    (195,85) -- (195,25) ;
\draw   (185,85) -- (225,85) -- (225,100) -- (185,100) -- cycle ;
\draw    (205,110) -- (205,100) ;
\draw  [draw opacity=0] (230,46) -- (250,46) -- (250,61) -- (230,61) -- cycle ;
\draw    (180,110) -- (180,15) ;
\draw    (295,35) .. controls (294.83,25.43) and (281.5,24.85) .. (270,25) ;
\draw    (270,25) -- (270,15) ;
\draw  [fill={rgb, 255:red, 0; green, 0; blue, 0 }  ,fill opacity=1 ] (267,25) .. controls (267,23.34) and (268.34,22) .. (270,22) .. controls (271.66,22) and (273,23.34) .. (273,25) .. controls (273,26.66) and (271.66,28) .. (270,28) .. controls (268.34,28) and (267,26.66) .. (267,25) -- cycle ;
\draw   (280,35) -- (310,35) -- (310,50) -- (280,50) -- cycle ;
\draw    (270,85) -- (270,25) ;
\draw   (265,85) -- (305,85) -- (305,100) -- (265,100) -- cycle ;
\draw    (285,110) -- (285,100) ;
\draw    (255,110) -- (255,15) ;
\draw    (275,40) .. controls (274.83,30.43) and (266.5,24.85) .. (255,25) ;
\draw  [fill={rgb, 255:red, 0; green, 0; blue, 0 }  ,fill opacity=1 ] (272,45) .. controls (272,43.34) and (273.34,42) .. (275,42) .. controls (276.66,42) and (278,43.34) .. (278,45) .. controls (278,46.66) and (276.66,48) .. (275,48) .. controls (273.34,48) and (272,46.66) .. (272,45) -- cycle ;
\draw    (275,45) -- (275,40) ;
\draw    (215,70) -- (215,60) ;
\draw    (215,60) .. controls (205.09,60.13) and (205.09,59.95) .. (205,50) ;
\draw    (215,60) .. controls (224.91,59.77) and (225.09,59.95) .. (225,50) ;
\draw  [fill={rgb, 255:red, 255; green, 255; blue, 255 }  ,fill opacity=1 ] (212.5,60) .. controls (212.5,58.62) and (213.62,57.5) .. (215,57.5) .. controls (216.38,57.5) and (217.5,58.62) .. (217.5,60) .. controls (217.5,61.38) and (216.38,62.5) .. (215,62.5) .. controls (213.62,62.5) and (212.5,61.38) .. (212.5,60) -- cycle ;
\draw  [draw opacity=0][fill={rgb, 255:red, 0; green, 0; blue, 0 }  ,fill opacity=1 ] (212,70) .. controls (212,68.34) and (213.34,67) .. (215,67) .. controls (216.66,67) and (218,68.34) .. (218,70) .. controls (218,71.66) and (216.66,73) .. (215,73) .. controls (213.34,73) and (212,71.66) .. (212,70) -- cycle ;
\draw    (294.99,70) -- (294.99,60) ;
\draw    (294.99,60) .. controls (285.08,60.13) and (285.08,59.95) .. (284.99,50) ;
\draw    (294.99,60) .. controls (304.9,59.77) and (305.08,59.95) .. (304.99,50) ;
\draw  [fill={rgb, 255:red, 255; green, 255; blue, 255 }  ,fill opacity=1 ] (292.49,60) .. controls (292.49,58.62) and (293.61,57.5) .. (294.99,57.5) .. controls (296.37,57.5) and (297.49,58.62) .. (297.49,60) .. controls (297.49,61.38) and (296.37,62.5) .. (294.99,62.5) .. controls (293.61,62.5) and (292.49,61.38) .. (292.49,60) -- cycle ;
\draw  [draw opacity=0][fill={rgb, 255:red, 0; green, 0; blue, 0 }  ,fill opacity=1 ] (291.99,70) .. controls (291.99,68.34) and (293.33,67) .. (294.99,67) .. controls (296.64,67) and (297.99,68.34) .. (297.99,70) .. controls (297.99,71.66) and (296.64,73) .. (294.99,73) .. controls (293.33,73) and (291.99,71.66) .. (291.99,70) -- cycle ;
\draw  [draw opacity=0] (145,55) -- (175,55) -- (175,75) -- (145,75) -- cycle ;
\draw  [draw opacity=0] (145,55) -- (175,55) -- (175,75) -- (145,75) -- cycle ;
\draw  [draw opacity=0] (310,55) -- (340,55) -- (340,75) -- (310,75) -- cycle ;
\draw  [draw opacity=0] (310,55) -- (340,55) -- (340,75) -- (310,75) -- cycle ;

\draw (50,67.5) node  [font=\small]  {$f$};
\draw (40,92.5) node  [font=\small]  {$cont$};
\draw (75,53.5) node  [font=\small]  {$=$};
\draw (50,42.5) node  [font=\small]  {$\{\alpha \}$};
\draw (130,67.5) node  [font=\small]  {$f$};
\draw (120,92.5) node  [font=\small]  {$cont$};
\draw (130,42.5) node  [font=\small]  {$\{\alpha \}$};
\draw (205,92.5) node  [font=\small]  {$cont$};
\draw (240,53.5) node  [font=\small]  {$=$};
\draw (215,42.5) node  [font=\small]  {$\{\alpha \}$};
\draw (285,92.5) node  [font=\small]  {$cont$};
\draw (295,42.5) node  [font=\small]  {$\{\alpha \}$};
\draw (160,65) node  [font=\small]  {$;$};
\draw (325,65) node  [font=\small]  {$;$};

\end{tikzpicture}
     }
    \caption{Preservation of whiskering.}
    \label{fig:diagramPreserveWhiskering}
  \end{figure}

  Finally, we need to show that the functor is symmetric and that it preserves
  the partial Frobenius structure. The functor preserves the braiding,
  comultiplication, and the counit because they were defined by the $\return$
  statements corresponding to the braiding, comultiplication, and counits of
  $\mathbf{FinSet}^{op}$. The functor preserves the multiplication thanks to the
  string diagrammatic equation in \Cref{fig:diagramPreserveMultiplication}.
  \begin{figure}[ht!]
    \centering
    \makebox[\textwidth][c]{

\tikzset{every picture/.style={line width=0.75pt}} %

\begin{tikzpicture}[x=0.75pt,y=0.75pt,yscale=-1,xscale=1]
\draw    (200,60) -- (200,35) ;
\draw    (200,35) .. controls (190.09,35.13) and (190.09,34.95) .. (190,25) ;
\draw    (200,35) .. controls (209.91,34.77) and (210.09,34.95) .. (210,25) ;
\draw  [fill={rgb, 255:red, 255; green, 255; blue, 255 }  ,fill opacity=1 ] (197.5,35) .. controls (197.5,33.62) and (198.62,32.5) .. (200,32.5) .. controls (201.38,32.5) and (202.5,33.62) .. (202.5,35) .. controls (202.5,36.38) and (201.38,37.5) .. (200,37.5) .. controls (198.62,37.5) and (197.5,36.38) .. (197.5,35) -- cycle ;
\draw    (269.99,50) -- (269.99,40) ;
\draw    (259.99,30) .. controls (259.42,25.03) and (261.49,24.85) .. (249.99,25) ;
\draw    (249.99,25) -- (249.99,10) ;
\draw    (250,40) -- (249.99,25) ;
\draw    (269.99,40) .. controls (260.08,40.13) and (260.08,39.95) .. (259.99,30) ;
\draw    (269.99,40) .. controls (279.9,39.77) and (280.08,39.95) .. (279.99,30) ;
\draw  [fill={rgb, 255:red, 255; green, 255; blue, 255 }  ,fill opacity=1 ] (267.49,40) .. controls (267.49,38.62) and (268.61,37.5) .. (269.99,37.5) .. controls (271.37,37.5) and (272.49,38.62) .. (272.49,40) .. controls (272.49,41.38) and (271.37,42.5) .. (269.99,42.5) .. controls (268.61,42.5) and (267.49,41.38) .. (267.49,40) -- cycle ;
\draw  [fill={rgb, 255:red, 0; green, 0; blue, 0 }  ,fill opacity=1 ] (267.49,50) .. controls (267.49,48.62) and (268.61,47.5) .. (269.99,47.5) .. controls (271.37,47.5) and (272.49,48.62) .. (272.49,50) .. controls (272.49,51.38) and (271.37,52.5) .. (269.99,52.5) .. controls (268.61,52.5) and (267.49,51.38) .. (267.49,50) -- cycle ;
\draw  [fill={rgb, 255:red, 0; green, 0; blue, 0 }  ,fill opacity=1 ] (247.49,25) .. controls (247.49,23.62) and (248.61,22.5) .. (249.99,22.5) .. controls (251.37,22.5) and (252.49,23.62) .. (252.49,25) .. controls (252.49,26.38) and (251.37,27.5) .. (249.99,27.5) .. controls (248.61,27.5) and (247.49,26.38) .. (247.49,25) -- cycle ;
\draw    (279.99,30) .. controls (279.82,20.43) and (267.42,19.89) .. (239.99,20) ;
\draw    (240,55) -- (239.99,10) ;
\draw  [fill={rgb, 255:red, 0; green, 0; blue, 0 }  ,fill opacity=1 ] (237.49,20) .. controls (237.49,18.62) and (238.61,17.5) .. (239.99,17.5) .. controls (241.37,17.5) and (242.49,18.62) .. (242.49,20) .. controls (242.49,21.38) and (241.37,22.5) .. (239.99,22.5) .. controls (238.61,22.5) and (237.49,21.38) .. (237.49,20) -- cycle ;
\draw  [draw opacity=0] (210,20) -- (240,20) -- (240,40) -- (210,40) -- cycle ;
\draw  [fill={rgb, 255:red, 0; green, 0; blue, 0 }  ,fill opacity=1 ] (247.5,40) .. controls (247.5,38.62) and (248.62,37.5) .. (250,37.5) .. controls (251.38,37.5) and (252.5,38.62) .. (252.5,40) .. controls (252.5,41.38) and (251.38,42.5) .. (250,42.5) .. controls (248.62,42.5) and (247.5,41.38) .. (247.5,40) -- cycle ;
\draw    (190,25) -- (190,10) ;
\draw    (210,25) -- (210,10) ;

\draw (225,30) node  [font=\small]  {$=$};
\draw (295,35) node  [font=\small]  {$;$};

\end{tikzpicture}
     }
    \caption{Preservation of multiplication.}
    \label{fig:diagramPreserveMultiplication}
  \end{figure}

  This concludes the proof: we have shown that the assingment constructs a
  well-defined, strict symmetric monoidal functor that preserves the partial
  Frobenius structure.
\end{proof}

\begin{propositionrep}[see {{\cite{coya16:corelations,grandis01:finite}}}]
  \label{prop:freeCopyDiscardCategory}%
  The free \copyDiscardCategory{} over a single generator is the opposite
  category of finite sets and functions endowed with the coproduct monoidal
  tensor, $(\mathbf{FinSet}^{op},+)$.
\end{propositionrep}
\begin{proof}[Proof sketch]
  Let us first prove that $(\mathbf{FinSet}^{op},+)$ is a
  \copyDiscardCategory{}. The copy morphism $δ ፡ x + x → x$ is defined by $δ(i)
  = i$ when $i ≤ x$, and $δ(i) = i - x$ when $i > x$. The discard morphism $ε ፡
  0 → x$ is defined vacuously. It is straightforward to show that these morphisms
  satisfy the axioms of \copyDiscardCategories{}.

  Every function can be decomposed, inductively over the size of its codomain,
  in primitives of \copyDiscardCategories{}. Given $f_x ፡ x → y$, we can
  consider two cases: if $x = 0$, then $f_0 = ε$ coincides with the counit; if
  $x = 1 + x'$, then $f(1) = y_i$ and we can express the function as 
  $$f_{1+x'} = (\id[1] ⋊ f_{x'}) ⨾ (σ_{1,i} ⋊ \id[y]) ⨾ (\id[i-1] ⋉ δ_i ⋊
  \id[y-i]).$$ 
  In this way, the interpretation of any function on a
  \copyDiscardCategory{} is determined. It remains to prove that this
  interpretation is functorial. We omit this part of the proof, which is
  mostly straightforward.
\end{proof}

\begin{theoremrep}[Observe-arrow notation is an internal language]
  \label{th:doNotationInternalLanguage}%
  \label{th:arrowNotationInternalLanguage}%
  \label{ax:th:doNotationInternalLanguage}%
  \label{ax:th:arrowNotationInternalLanguage}%
  The category of "observe-arrow notation terms" over a \signature{},
  $\obsArrow(Σ)$, is the free \strictCopyDiscardCompareCategory{} over that
  \signature{}. In other words, we have an adjunction, $\obsArrow ⊣ \forgetF$,
  where constructing \observeDoNotationTerms{} provides a left adjoint, $\obsArrow
  ፡ \Sig → \cdcCat$, to the forgetful functor, $\forgetF ፡ \cdcCat → \Sig$.
\end{theoremrep}
\begin{proof}
  We will construct the adjunction by exhibiting the universal arrow, 
  $$u_{Σ} ፡ Σ → \forgetF(\obsArrow(Σ)),$$
  defined as in \Cref{lemma:inclusion}. Given any morphism, we have shown that
  there exists at most a unique way of factoring through the universal arrow
  (\Cref{lemma:existsFactoring}); we have then shown that the assignment defines
  a \copyDiscardCompareFunctor{} (\Cref{lemma:factoring}). This exhibits
  $\obsArrow(Σ)$ as the free \copyDiscardCompareCategory{} over a \signature{}
  $Σ$.
\end{proof}

Freeness, in particular, gives soundness and completeness of the
\arrowNotation{} for \copyDiscardCompareCategories{}. A morphism of signatures
\(\Sigma \to \forgetF(𝔸)\) gives an \emph{interpretation} of the signature
\(\Sigma\). Thanks to the adjunction, every such morphism determines a unique
\copyDiscardCompareFunctor{} \(\obsArrow(\Sigma) \to 𝔸\) specifying a
\emph{model} for the \signature{} \(\Sigma\) (\Cref{sec:functorialSemantics}
instantiates this correspondence in the case of \subdistributions{}). Therefore,
every equation that is true in \(\obsArrow(\Sigma)\) is also true in all models
(soundness). Conversely, if an equation is true in all models, in particular, it
is true in \(\obsArrow(\Sigma)\) as it is also a model, in fact, the free one
(completeness).

\begin{remark}
  Networks (or \emph{network diagrams}) are combinatorial objects
  defined by a set of nodes and a set of directed wires linking
  them. Networks are informally employed as graphical models in
  probabilistic inference and learning.  Multiple-output networks form
  the free \copyDiscardCategory{} over a signature
  \cite{fritzLiang23:freeMarkov}. It follows that they are in
  bijective correspondence with \arrowNotation{} expresions without
  $\observe$ statements. Explicitly, variables represent wires and
  statements represent nodes --- a topological ordering is induced by
  the order of the statements, but the term is invariant to the choice
  of a topological ordering thanks to the \interchangeAxiom{}.
\end{remark}

\subsection{Functorial Semantics}%
\label{sec:functorialSemantics}%

"Observe-arrow notation" is interesting not only as a syntax for
\copyDiscardCompareCategories{} in general, but also for its semantics
in \subdistributions{}.  This section formalises the reading of
\arrowNotation{} we introduced in \Cref{sec:reading-arrow}, where
the interpretations appeared on the side.

\begin{definition}[Signature interpretation]%
  \defining{linkSignatureInterpretation}%
  A \emph{signature interpretation} for the \signature{} $Σ$ is a pair of
  functions: one assigning a set to each type, $⟦ • ⟧ ፡ Σ_{type} → \Set$; and
  one assigning a substochastic channel to each generator,
    $$⟦ • ⟧(-) ፡ Σ(X₁,…,Xₙ; Y₁,…,Yₘ) × ⟦X_1⟧ × … × ⟦ Xₙ ⟧ → \Subd (⟦ Y₁ ⟧ × … ⟦
      Yₘ ⟧).$$
\end{definition}

What follows is a recipe to interpret an \arrowNotation{} term given
a \signature{} interpretation. This is not a novel recipe: it can be
recognized as a simplified form of Moggi's monadic semantics
\cite{moggi1991notions} for the case of the \subdistribution{} monad
\cite{cho2017kleisli}. We restate it here for completeness.

\begin{definition}[Kleisli extension]
  Any function to a set of \subdistributions{} $f ፡ X → \Subd(Y)$ has
  a Kleisli extension to a function between \subdistributions{},
  \smasheq{f_{\ast} ፡ \Subd(X) → \Subd(Y)}.
  The extension is defined by being linear and acting on any monomial as $f$
  did:
$$
f_{\ast}\left( \sum\nolimits_i rᵢ \ket{x_i} \right) =
\sum\nolimits_i rᵢ \cdot f(x_{i}).
$$

\noindent The latter multiplication $rᵢ \cdot (-)$ applied to a
subdistribution means that $rᵢ \cdot (-)$ is applied to all the
monomials of the distribution.
\end{definition}

\begin{definition}[Extension of an interpretation]
Every \signatureInterpretation{}, $⟦ • ⟧$ extends to a function assigning to
each term a function of the form: %
$$⦃ • ⦄(-) ፡ (Γ ⊢ Δ) × ⟦Γ⟧ → \Subd\big(⟦Δ⟧\big),$$

\noindent where the context interpretation $⟦Γ⟧$ is $⟦x₁:A_1, …,
  xₙ:A_n⟧ = ⟦A_1⟧ \times \cdots \times ⟦A_n⟧$.  This extension is
inductively defined as follows, for a vector $\vec{v}\in ⟦Γ⟧$.
\begin{enumerate}
  \item A $\return$ statement yields the value of some variables,  $$⦃ \return
  (x_{α(1)}, …, x_{α(m)}) ⦄(\vec{v}) = 1\bigket{v_{α(1)}, …, v_{α(m)}}.$$

  \item A generator statement computes a \subdistribution{} over some
    variables, which is handled by the Kleisli extension of the
    interpretation of the subsequent statement:
  $$\textstyle ⦃ y ← f(x_{α(1)}, …, x_{α(m)}) ⨾ \cont ⦄(\vec{v}) =
  ⦃ \cont ⦄_{\ast} \Big(1\bigket{\vec{v}} \tensor ⦃f⦄(v_{α(1)}, …, v_{α(m)})\Big).$$

  \item An $\observe$ statement multiplies by zero all terms not satisfying a
  certain condition, described here abstractly as a subset $U$,
  $$⦃ \observe(U) ⨾ \cont ⦄(\vec{v}) =  
  \begin{cases}
  ⦃ \cont ⦄(v₁ … vₙ) & \mbox{if }\vec{v}\in U, \\
  \textbf{0} & \mbox{otherwise,}
  \end{cases}$$
  \noindent where $\textbf{0}$ is the zero distribution.
\end{enumerate}
\end{definition}

\begin{remark}[Soundness and completeness for subdistributions]
  Our language is sound and complete for \copyDiscardCompareCategories{}, but
  the reader may ask if it can be made sound and complete for
  \subdistributions{}. This could be achieved by additionally including the
  theory of convex sums, which is well-known (e.g., Fritz illustrates a
  diagrammatic version \cite{fritz09:presentation}). However, that would also
  dramatically reduce its class of models (e.g., excluding continuous
  probability).
\end{remark}
\subsection{Normalisation}\label{sec:normalisation}

\Cref{sec:reading-arrow,sec:functorialSemantics} show the process of evaluating
\arrowNotation{} statements: at each line, the ``state'' of the current
computation is a \subdistribution{}. It corresponds to a failure, or a
distribution together with a validity, obtained via rescaling, see
Definition~\ref{def:updating}~\eqref{def:updating:rescaling}. In many cases,
however, we only care about the normalised version of such \subdistribution{}.
In these cases, we will show that it is also possible to work with normalised
distributions and disregard the corresponding validities without altering the
result. Thus, in the present setting, \normalisation{} is rescaling while
forgetting the validity. The \normalisation{} of a term is obtained not as a new
primitive to the language, but as a derived operation
(\Cref{def:normalisation}).

\Normalisation{} is a famously non-compositional operation: given two
composable Kleisli maps, also called channels, \(f ፡ X
→ \Subd(Y)\) and \(g ፡ Y → \Subd(Z)\), the
\normalisation{} of their Kleisli composition, \smasheq{\norm{f ⨾ g}},
is different from the composition of their \normalisations{},
\smasheq{\norm{f} ⨾ \norm{g}}. This failure of compositionality suggests that
it might not be possible to naively discard the validity of
\subdistributions{} when computing.

This section proves that discarding the validity at each line in
the computation is in fact possible. It is sufficient to keep the normalised
\subdistribution{} corresponding to the ``state'' of the computation at each
line: the \normalisation{} of a composition, \(\norm{f ⨾ g}\), can be recovered
from the \normalisation{} of its first factor and its second factor, as
\(\norm{\norm{f} ⨾ g}\) (\Cref{prop:normalise-whenever}) --- the validity of the
first factor is not needed.

The \copyDiscardCategory{} structure allows for an abstract definition of
\kl{normalisation}~\cite{dilavore23:evidential}. Intuitively, a
\kl{normalisation} of \(f\) is a morphism that behaves like \(f\) while being
total on its domain.

\begin{definition}
  \label{def:normalisation}%
  \defining{linkNormalisation}{}%
  \AP In a \copyDiscardCategory{}, a morphism $n ፡ X → A$ is a
  \intro{normalisation} of another morphism $f ፡ X → A$ when
  \[\left.
      \begin{array}{l}
        a ← f(x) \\
        \return (a)
      \end{array}\right| =
    \left.
      \begin{array}{l}
        a' ← f(x) \\
        a ← n(x) \\
        \return (a)
      \end{array}\right| \quad\text{and}\quad
    \left.
      \begin{array}{l}
        a ← n(x) \\
        \return (a)
      \end{array}\right| =
    \left.
      \begin{array}{l}
        a' ← n(x) \\
        a ← n(x) \\
        \return (a)
      \end{array}\right|\ .\]
\end{definition}

\kl{Subdistributions} give a semantic universe for \kl{observe-arrow notation}. As
implicitly shown in \Cref{sec:illustrations}, \kl{subdistributions} do support
\kl{normalisation}.

\begin{example}[Normalisations in subdistributions]
  Let $f ፡ X → A$ be a partial stochastic channel, i.e.\ a function \(X →
  \Subd(A)\). Consider the probability that \(f\) does not fail on an input
  \(x\), \(v(x) = \sum_{a ∈ A} f(a \given x)\). A \normalisation{} of \(f\) is
  a partial stochastic channel $n(f) ፡ X → A$ defined by
  \[
  \norm{f}(a \given x) = \frac{f(a \given x)}{v(x)},\mbox{ when }v(x) ≠ 0,
  \]
  and $\norm{f}(a \given x) = 0$ otherwise. When $v(x) = 0$,
  \normalisation{} is not unique.
\end{example}

The following result is known for the case of
\kl{subdistributions}~\cite{heunenKammar16:semanticsProbabilistic}.
The structure of \copyDiscardCategories{} %
turns out to be sufficient for this result to hold: whenever \kl{normalisations}
exist, validities may be discarded at each step of the execution. We prove this
fact using \arrowNotation{}.

\begin{proposition}
  \label{prop:normalise-whenever}%
  Let $f ፡ X → Y$ and $g ፡ Y → Z$ be morphisms in a \copyDiscardCategory{}. %
  Let \(\norm{f} ፡ X → Y\) be a \kl{normalisation} of \(f\).
  Then, any \kl{normalisation} of \(\norm{f} ⨾ g\) is a \kl{normalisation} of \(f ⨾ g\).
\end{proposition}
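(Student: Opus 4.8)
The plan is to check the two defining equations of \Cref{def:normalisation} for the candidate morphism \(m\) — an arbitrary \kl{normalisation} of \(\norm{f} ⨾ g\) — now regarded as a \kl{normalisation} of \(f ⨾ g\). The second equation of \Cref{def:normalisation} refers only to the morphism \(m\) itself and never to the morphism being normalised, so the fact that it holds for \(m\) as a \kl{normalisation} of \(\norm{f} ⨾ g\) gives it immediately for \(m\) as a \kl{normalisation} of \(f ⨾ g\); this case needs no argument. Everything therefore reduces to the first equation,
\[
  \left.\begin{array}{l} z ← (f ⨾ g)(x) \\ \return(z) \end{array}\right|
  \;=\;
  \left.\begin{array}{l} z' ← (f ⨾ g)(x) \\ z ← m(x) \\ \return(z) \end{array}\right|.
\]

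First I would expand the left-hand side as \(y ← f(x)\), then \(z ← g(y)\), then \(\return(z)\), and rewrite its \(f\)-prefix using the first defining equation of \(\norm{f}\) as a \kl{normalisation} of \(f\). Since the term equivalence is a \kl{congruence} — preserved both by prefixing a generator statement and, by \Cref{prop:composition-welldef}, by \kl{composition} — this substitution is legal inside the surrounding context, and the left-hand side becomes \(y' ← f(x)\), \(y ← \norm{f}(x)\), \(z ← g(y)\), \(\return(z)\). The block \(y ← \norm{f}(x) \doline z ← g(y)\) is exactly \(\norm{f} ⨾ g\) applied to \(x\), so I would then apply the first defining equation of \(m\) as a \kl{normalisation} of \(\norm{f} ⨾ g\) to that block, producing
\[
  \left.\begin{array}{l} y' ← f(x) \\ y'' ← \norm{f}(x) \\ z'' ← g(y'') \\ z ← m(x) \\ \return(z) \end{array}\right|,
\]
where the intermediate outputs \(y', y'', z''\) are all discarded and only \(z\), sampled from \(m\), is returned.

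The remaining step, and the main obstacle, is to collapse the discarded prefix \(y' ← f(x) \doline y'' ← \norm{f}(x) \doline z'' ← g(y'')\) back to \(y ← f(x) \doline z ← g(y)\), both with outputs discarded, while the trailing \(z ← m(x)\) keeps using \(x\). The point is that this is again the first defining equation of \(\norm{f}\) — the identity \(y ← f(x) \doline \return(y) = y' ← f(x) \doline y'' ← \norm{f}(x) \doline \return(y'')\) — now post-composed with \(g\) followed by a discard. Concretely, I would reorganise the computation via the comonoid laws (coassociativity of copy) and the \interchangeAxiom{} so that the two discarded samples \(y' ← f(x)\) and \(y'' ← \norm{f}(x)\) recombine into the single \emph{validity} of \(f ⨾ g\), rather than that of \(f\) alone; this is where one must use \(\norm{f}\)'s equation composed with the whole of \(g\) and the discard \(ε\), not a shortcut through the counit. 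Crucially, no naturality of \(ε\) is invoked — which matters, since in a \copyDiscardCategory{} the discard need not be natural. Recombining then yields the right-hand side \(z' ← (f ⨾ g)(x) \doline z ← m(x) \doline \return(z)\), establishing the first equation and completing the proof. The only genuinely non-routine insight is this last re-use of \(\norm{f}\)'s equation after \(g\); the surrounding manipulation of discarded variables is bookkeeping handled by interchange and coassociativity.
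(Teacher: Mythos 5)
Your proposal is correct and follows essentially the same route as the paper: your three steps are precisely the paper's equalities \emph{(iii)}, \emph{(ii)}, \emph{(i)} --- the defining equation of \(\norm{f}\) composed with \(g\), the defining equation of \(m\) as a \kl{normalisation} of the composite of \(\norm{f}\) with \(g\), and the defining equation of \(\norm{f}\) composed with \(g\) and a discard --- applied in reverse order, since you traverse the chain of equalities from left to right where the paper goes right to left. Your dispatch of the second defining equation (it mentions only \(m\), hence transfers for free) is also exactly the paper's closing remark.
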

\begin{proof}
  Let \(n\) be a \kl{normalisation} of \(\norm{f} ⨾ g\). We prove it is a
  \kl{normalisation} of the composition \(f ⨾ g\).
  \begin{align*}
    \left.
      \begin{array}{l}
        y ← f(x) \\
        z' ← g(y) \\
        z ← n(x) \\
        \return (z)
      \end{array}\right| \overset{(i)}{=}
    \left.
      \begin{array}{l}
        y' ← f(x) \\
        y  ← \norm{f}(x) \\
        z' ← g(y) \\
        z ← n(x) \\
        \return (z)
      \end{array}\right| \overset{(ii)}{=}
    \left.
      \begin{array}{l}
        y' ← f(x) \\
        y  ← \norm{f}(x) \\
        z ← g(y) \\
        \return (z)
      \end{array}\right| \overset{(iii)}{=}
    \left.
      \begin{array}{l}
        y ← f(x) \\
        z ← g(y) \\
        \return (z)
      \end{array}\right|
  \end{align*}
  Here, \((i)\) applies the definition of \kl{normalisation} for \(f\), \((ii)\)
  applies the definition of \kl{normalisation} for  \(\norm{f} ⨾ g\); and
  $(iii)$ applies the definition of \kl{normalisation} for $f$. Finally, $n$
  satisfies equation on the right in \Cref{def:normalisation} because it is a
  \kl{normalisation} of $\norm{f} ⨾ g$.
\end{proof}

\Cref{fig:solveMontyHall:donotation:normalise} shows a new formalisation of the
\MontyHallProblem{}, where on-the-side \kl{subdistributions} are \kl{normalised} at each line.
Compare the final result with that obtained in
\Cref{fig:solveMontyHall:donotation}: the final \kl{normalised} outcomes coincide.

\begin{figure}[ht!]
\centering
\begin{tabular}{cll}
(1) & \smasheq{\car \gets \uniform\{L, M, R\}} \quad\quad
& \smasheq{\frac{1}{3}\ket{L} + \frac{1}{3}\ket{M} + \frac{1}{3}\ket{R}}
\\[+0.2em]
(2) & \smasheq{\player \gets \uniform\{L, M, R\}} \quad\quad
& \smasheq{\frac{1}{9}\ket{L,L} + \frac{1}{9}\ket{L,M} + \frac{1}{9}\ket{L,R} + \frac{1}{9}\ket{M,L}}
\\
    & & \smasheq{\quad + \frac{1}{9}\ket{M,M} + \frac{1}{9}\ket{M,R} + \frac{1}{9}\ket{R,L}}
    \\
    & & \smasheq{\quad + \frac{1}{9}\ket{R,M} + \frac{1}{9}\ket{R,R}}
    \\[+0.2em]
(3) & \smasheq{\host ← \caseOf{(\car, \player)}} \\
    & \smasheq{\quad (x,x) \mapsto \frac{1}{2}\ket{y} + \frac{1}{2}\ket{z}} & \smasheq{\frac{1}{18}\ket{L,L,M} + \frac{1}{18}\ket{L,L,R} + \frac{1}{18}\ket{M,M,L}}\\
    & \smasheq{\quad (x,y) \mapsto 1\ket{z},
       \quad\mbox{for }x\neq y \neq z \neq x} & \smasheq{\quad + \frac{1}{18}\ket{M,M,R} + \frac{1}{18}\ket{R,R,L} + \frac{1}{18}\ket{R,R,M}}\\[+0.2em]
  & & \smasheq{\quad + \frac{1}{9}\ket{L,M,R} + \frac{1}{9}\ket{L,R,M} + \frac{1}{9}\ket{M,R,L}}\\
  & & \smasheq{\quad + \frac{1}{9}\ket{M,L,R} + \frac{1}{9}\ket{R,L,M} + \frac{1}{9}\ket{R,M,L}}\\
  (4) &\smasheq{\observe(\player = M)}
& \smasheq{\frac{1}{6} \ket{M,M,L} + \frac{1}{6} \ket{M,M,R} + \frac{1}{3} \ket{L,M,R}}\\
  & & \smasheq{\quad + \frac{1}{3} \ket{R,M,L}}
    \\[+0.2em]
(5) &\smasheq{\observe(\host = L)} &
    \smasheq{\frac{1}{3} \ket{M,M,L} + \frac{2}{3} \ket{R,M,L}}
    \\[+0.2em]
    (6) &\smasheq{\return(\car)} &
    \smasheq{\frac{1}{3}\ket{M} + \frac{2}{3}\ket{R}}
  \end{tabular}
  \caption{Calculations, normalising at each line, for the \protect\MontyHallProblem{}.}
  \label{fig:solveMontyHall:donotation:normalise}
\end{figure}

\section{Conclusions}

We have introduced \kl{observe-arrow notation} for \copyDiscardCompareCategories{}, a
simple formal language --- based on Haskell's arrow notation --- that can be
used to formulate and solve problems in decision theory and basic statistics.
Terms in this language have a formal semantics as Kleisli maps of the
\subdistribution{} monad, that is, as partial stochastic channels. We have
illustrated how to compute this semantics step-by-step.  Given the amount of
literature devoted to the discussion of problems in decision theory --- and the
lack of a current agreement on how to solve some basic problems --- we believe
that this formal language may be helpful to reach consensus.

We have described a variant of \arrowNotation{} that is sound and complete for
\copyDiscardCompareCategories{} (\Cref{th:arrowNotationInternalLanguage}). In
particular, it is sound for partial stochastic channels. When a
\copyDiscardCompareCategory{} has \normalisations{}, we have shown that the
operation of normalisation associates to the right
(\Cref{prop:normalise-whenever}): we may work with normalised channels without
affecting the final result.

\subsection{Further work}

A potential variant of our construction in \Cref{th:doNotationInternalLanguage}
uses single-output signatures: \arrowNotationTerms{} over a single output signature
coincide with Bayesian networks \cite{fritzLiang23:freeMarkov,jacobs2021causal},
and so Bayesian networks extended with comparator nodes may be seen as a
language for \copyDiscardCompareCategories{}.

The continuous case is not developed in this article. It has been previously
shown that a continuous language with exact observations can be given semantics
in terms of Markov categories via a standard construction that uses \kl{partial
Markov categories} \cite{stein2021structural,dilavore23:evidential}. We could
employ \kl{observe-arrow notation} to discuss continuous probability with exact
observations, even if this text is restricted to the simpler semantics of
\subdistributions{}.

\section*{Acknowledgements}
We thank Paweł Sobociński, Márk Széles, and Paolo Perrone for discussion. This
article is based upon work from COST Action EuroProofNet, CA20111 supported by
COST (European Cooperation in Science and Technology). Mario Román was supported
by the Air Force Office of Scientific Research under award number
FA9550-21-1-0038. Elena Di Lavore and Mario Román were supported by the Advanced
Research + Invention Agency (ARIA) Safeguarded AI Programme.

\newpage

\bibliographystyle{alpha}
\bibliography{main}

\clearpage
\appendix
 \end{document}